\documentclass[11pt]{article}

\usepackage{fullpage}
\usepackage{amsmath,amsthm,amsfonts,dsfont}
\usepackage{amssymb,latexsym,graphicx}
\usepackage{palatino}
\usepackage{mathpazo}
\usepackage{stmaryrd}
\usepackage{mathtools}
\usepackage{hyperref}
\usepackage[lined,boxed]{algorithm2e}
\usepackage{subfigure}
\usepackage{boxedminipage}

\newtheorem{theorem}{Theorem}[section]

\newtheorem{proposition}[theorem]{Proposition}
\newtheorem{lemma}[theorem]{Lemma}

\newtheorem{claim}[theorem]{Claim}

\newtheorem{corollary}[theorem]{Corollary}
\newtheorem{definition}[theorem]{Definition}

\newcommand{\N}{\ensuremath{\mathbb{N}}}
\newcommand{\F}{\ensuremath{\mathbb{F}}}
\newcommand{\R}{\ensuremath{\mathbb{R}}}
\newcommand{\Z}{\ensuremath{\mathbb{Z}}}

\newcommand{\Tr}{\mbox{\rm Tr}}

\newcommand{\cL}{\mathcal{L}}

 \newcommand{\eps}{\varepsilon} 
\renewcommand{\epsilon}{\varepsilon}

\renewcommand{\vec}[1]{\ensuremath{\mathbf{#1}}}

\newcommand{\basis}{\ensuremath{\mathbf{B}}}
\newcommand{\problem}[1]{\mbox{#1}\xspace}
\newcommand{\alg}[1]{\textup{\textsc{#1}}}
\newcommand{\poly}{\mathrm{poly}}
\newcommand{\polylog}{\mathrm{polylog}}

\DeclareMathOperator*{\expect}{\mathbb{E}}

\newcommand{\DGS}[3]{\ensuremath{#1\text{-}\problem{DGS}_{#2}^{#3}}}

\newcommand{\hDGS}[3]{\ensuremath{#1\text{-}\problem{hDGS}_{#2}^{#3}}}

\newcommand{\biglat}{\ensuremath{\mathcal{K}}}
\newcommand{\smalllat}{\ensuremath{\mathcal{L}}'}
\newcommand{\coset}{\ensuremath{\vec{c}}}

\newcommand{\bern}{\mathrm{B}}

\newcommand{\pois}{\mathrm{Pois}}
\newcommand{\Pois}{\pois}
\newcommand{\approxpmax}{\tilde{p}_{\mathsf{max}}}
\newcommand{\approxd}{\tilde{d}}
\newcommand{\pmax}{{p}_{\mathsf{max}}}
\newcommand{\pcol}{{p}_{\mathsf{col}}}
\newcommand{\psqrt}{{p}_{\mathsf{sqrt}}}
\newcommand{\pmin}{{p}_{\mathsf{min}}}

\newcommand{\scarequotes}[1]{``#1''}

\newcommand{\Tmax}{T_{\mathsf{max}}}
\newcommand{\Tmin}{T_{\mathsf{min}}}

\makeatletter
\def\imod#1{\allowbreak\mkern8mu({\operator@font mod}\,\,#1)}
\makeatother

\newcommand{\lat}{\mathcal{L}}
\newcommand{\gs}[1]{\ensuremath{\widetilde{#1}}}
\DeclareMathOperator{\dist}{dist}
\DeclareMathOperator{\spn}{span}
\DeclarePairedDelimiter\inner{\langle}{\rangle}
\DeclarePairedDelimiter\abs{\lvert}{\rvert}
\DeclarePairedDelimiter\set{\{}{\}}
\DeclarePairedDelimiter\floor{\lfloor}{\rfloor}
\DeclarePairedDelimiter\ceil{\lceil}{\rceil}
\DeclarePairedDelimiter\length{\lVert}{\rVert}

\begin{document}

\title{Solving the Shortest Vector Problem in $2^n$ Time via 
Discrete Gaussian Sampling}
\author{
Divesh Aggarwal\thanks{Department of Computer Science, EPFL.}
\and
Daniel Dadush\thanks{Centrum Wiskunde \& Informatica, Amsterdam.}\\
\texttt{dadush@cwi.nl}
\and
Oded Regev\thanks{Courant Institute of Mathematical Sciences, New York
 University.}
~\thanks{Supported by the Simons Collaboration on Algorithms and Geometry and by the National Science Foundation (NSF) under Grant No.~CCF-1320188. Any opinions, findings, and conclusions or recommendations expressed in this material are those of the authors and do not necessarily reflect the views of the NSF.}\\
\and
Noah Stephens-Davidowitz\footnotemark[3]
~\thanks{This material is based upon work supported by the National Science Foundation under Grant No.~CCF-1320188. Any opinions, findings, and conclusions or recommendations expressed in this material are those of the authors and do not necessarily reflect the views of the National Science Foundation.}\\
\texttt{noahsd@cs.nyu.edu}
}
\date{}
\maketitle

\begin{abstract}
We give a randomized $2^{n+o(n)}$-time and space algorithm for solving the
Shortest Vector Problem (SVP) on $n$-dimensional Euclidean lattices.  This
improves on the previous fastest algorithm: the deterministic $\widetilde{O}(4^n)$-time and $\widetilde{O}(2^n)$-space algorithm of Micciancio and Voulgaris (STOC
2010, SIAM J.\ Comp.\ 2013). 

In fact, we give a conceptually simple algorithm that solves the (in our opinion, even more interesting) problem of discrete Gaussian sampling (\problem{DGS}). More specifically, we show how to sample $2^{n/2}$ vectors from the discrete Gaussian distribution at \emph{any parameter} in $2^{n+o(n)}$ time and space. (Prior work only solved \problem{DGS} for very large parameters.) Our \problem{SVP} result then follows from a natural reduction from \problem{SVP} to \problem{DGS}. We also show that our \problem{DGS} algorithm implies a $2^{n + o(n)}$-time algorithm that approximates the Closest Vector Problem to within a factor of $1.97$.

In addition, we give a more refined algorithm for \problem{DGS}
above the so-called \emph{smoothing parameter} of the lattice, which 
can generate $2^{n/2}$ discrete Gaussian samples in just
$2^{n/2+o(n)}$ time and space. 
Among other things, this implies a $2^{n/2+o(n)}$-time and space algorithm
for $1.93$-approximate decision \problem{SVP}.

\end{abstract}

\textbf{Keywords.}  Discrete Gaussian, Shortest Vector Problem, Lattice Problems.

\thispagestyle{empty}

\newpage

\section{Introduction}
\label{sec:introduction}

A lattice $\lat$ is defined as the set of all integer combinations of some
linearly independent vectors $\vec{b}_1,\dots,\vec{b}_n \in \R^n$. The matrix
$\basis=(\vec{b}_1,\dots,\vec{b}_n)$ is called a basis of $\lat$, and we write
$\lat(\basis)$ for the lattice generated by $\basis$.

Perhaps the most central computational problem on lattices is the Shortest
Vector Problem ($\problem{SVP}$). Given a basis for a lattice $\lat \subseteq \R^n$,
$\problem{SVP}$ is to compute a non-zero vector in $\lat$ of minimum Euclidean norm. 

Starting in the '80s, the use of approximate and exact solvers for
$\problem{SVP}$ (and other lattice problems) gained prominence for their
applications in algorithmic number theory~\cite{LLL82}, coding over Gaussian
channels~\cite{Buda89},
cryptanalysis~\cite{Shamir84,Brickwell85,LagariasOdlyzko85}, combinatorial
optimization and integer programming~\cite{Lenstra83,Kannan87,FrankTardos87}.
Over the past decade and a half, the study of lattice problems greatly increased
due to newly found applications in cryptography. Many powerful cryptographic
primitives, such as fully homomorphic encryption~\cite{Gentry09,BV11,BrakerskiV14}, now
have their security based on the \emph{worst-case} hardness of approximating the
decision version of $\problem{SVP}$ (and other lattice problems) to within
polynomial factors~\cite{Ajtai96,MR04,oded05,BLPRS13}. 

From the computational complexity perspective, much is known about $\problem{SVP}$ in both its
exact and approximate versions. On the hardness side, \problem{SVP} was shown to be NP-hard
to approximate within any constant factor (under randomized reductions) and hard to approximate to
within $n^{c/\log \log n}$ for some constant $c > 0$ under reasonable
complexity assumptions~\cite{Mic01svp, Khot05svp, HRsvp}.
From the perspective of polynomial-time
algorithms, the celebrated LLL basis reduction gives a
$2^{O(n)}$ approximation algorithm for $\problem{SVP}$~\cite{LLL82}, and Schnorr's block reduction algorithm~\cite{Schnorr87}, with subsequent refinements~\cite{AKS01,MV13}, gives a
$r^{n/r}$ approximation in $2^{O(r)}\poly(n)$ time allowing for a smooth
tradeoff between time and approximation quality.  
  
As one would expect from the hardness results above, all known algorithms  for solving exact \problem{SVP}, 
including the ones we present here, require at least exponential time and 
sometimes also exponential space (and the same is true even for polynomial approximation factors). 
We mention in passing that despite running in exponential time,
these algorithms have practical importance in addition to the obvious theoretical importance. 
For instance, they are used for assessing the practical security 
of lattice-based cryptographic primitives, they are used as subroutines 
in the best current approximation algorithms (variants of block reduction),
and they are used in some applications where low-dimensional lattices 
naturally arise. 

While the state of the art for polynomial-time
approximation of lattice problems has remained relatively static over the last
two decades, the situation for exact algorithms has been
markedly different. Indeed, three major (and very different) classes of
algorithms for $\problem{SVP}$ have been developed. 

The first class, developed by Kannan~\cite{Kannan87} and refined by many
others~\cite{Helfrich86,HanrotStehle07,MicciancioWalter14}, is based on
combining strong basis reduction with exhaustive enumeration inside Euclidean
balls. The fastest current algorithm in this class solves $\problem{SVP}$ in
$\widetilde{O}(n^{n/(2e)})$ time while using $\poly(n)$ space~\cite{HanrotStehle07}. 

The next landmark algorithm, developed by Ajtai, Kumar, and
Sivakumar~\cite{AKS01} (henceforth AKS), is the most similar to this
work. AKS devised a method based on ``randomized sieving,'' whereby
exponentially many randomly generated lattice vectors are
iteratively combined to create shorter and shorter vectors, to give the first
$2^{O(n)}$-time (and space) randomized algorithm for $\problem{SVP}$. Many
extensions and improvements of their sieving technique have been proposed, both
provable~\cite{AKS02,MV10,PS09,LWXZ11} and
heuristic~\cite{NguyenVidick08,WLTB11,ZYH14,BGJ14,Laarhoven14}, where
the fastest provable sieving algorithm~\cite{PS09} for exact $\problem{SVP}$
requires $2^{2.465n+o(n)}$ time and $2^{1.233n+o(n)}$ space. 
It was observed by \cite{LWXZ11, priv:Micciancio, priv:Damien}
that AKS
can be modified to obtain a $2^{0.802 n + o(n)}$-time and $2^{0.401n +
o(n)}$-space algorithm for approximating $\problem{SVP}$ to within some large constant
factor.
Here $2^{.401n+o(n)}$ corresponds to
the best known upper bound on the $n$-dimensional ``kissing number'' (the
maximum number of points one can place on the unit sphere such that the pairwise
distances are $\geq 1$) due to Kabatjanski{\u\i} and Leven{\v{s}}te{\u\i}n~\cite{KL78}. 

The most recent breakthrough, due to Micciancio and Voulgaris~\cite{MV13}
(henceforth MV) and built upon the approach of Sommer, Feder, and
Shalvi~\cite{SFS09}, is a deterministic $\widetilde{O}(4^n)$-time and
$\widetilde{O}(2^n)$-space algorithm for $\problem{SVP}$. It uses the
\emph{Voronoi cell} of the lattice---the centrally symmetric polytope
corresponding to the points closer to the origin than to any other lattice point.

\paragraph{Main contribution.}  As our main result, we give a randomized
$2^{n+o(n)}$-time and space algorithm for exact SVP, improving on the
$\widetilde{O}(4^n)$ deterministic running time of MV.  
A second main result is a much faster $2^{n/2+o(n)}$-time (and space) algorithm that 
 approximates the decision version of $\problem{SVP}$ to within a small constant factor.

Our $2^{n+o(n)}$-time algorithm actually solves 
a more
difficult problem, namely, that of generating many discrete Gaussian samples from a
lattice with arbitrary parameter, as we describe below.  
We feel that this is even more interesting than the improved running time for
$\problem{SVP}$, and it should have further applications. As far as we are
aware, outside of security reductions having access to powerful oracles, this is
the first provable algorithm to use the discrete
Gaussian \emph{directly} to solve a classical lattice problem.  

\paragraph{Discrete Gaussian samplers.} 
Our first main technical
contribution is a general discrete Gaussian sampler, which will directly imply our $\problem{SVP}$ algorithm. Below, we give an informal description of this result. (See Section~\ref{sec:main-dgs} for the details.)

\begin{figure}[ht]
\begin{center}
\includegraphics[width=0.4 \textwidth]{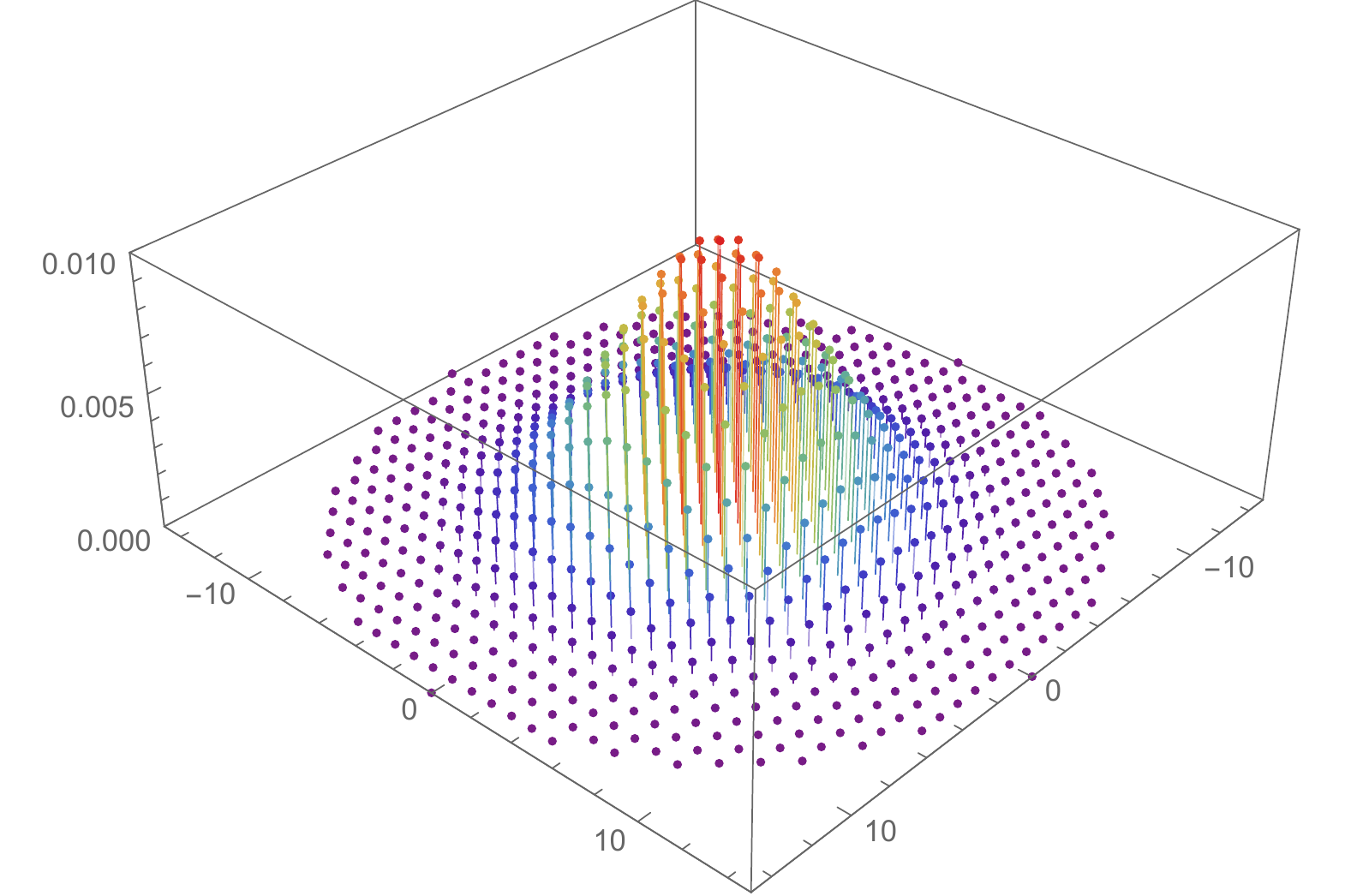}
\qquad
\includegraphics[width=0.4 \textwidth]{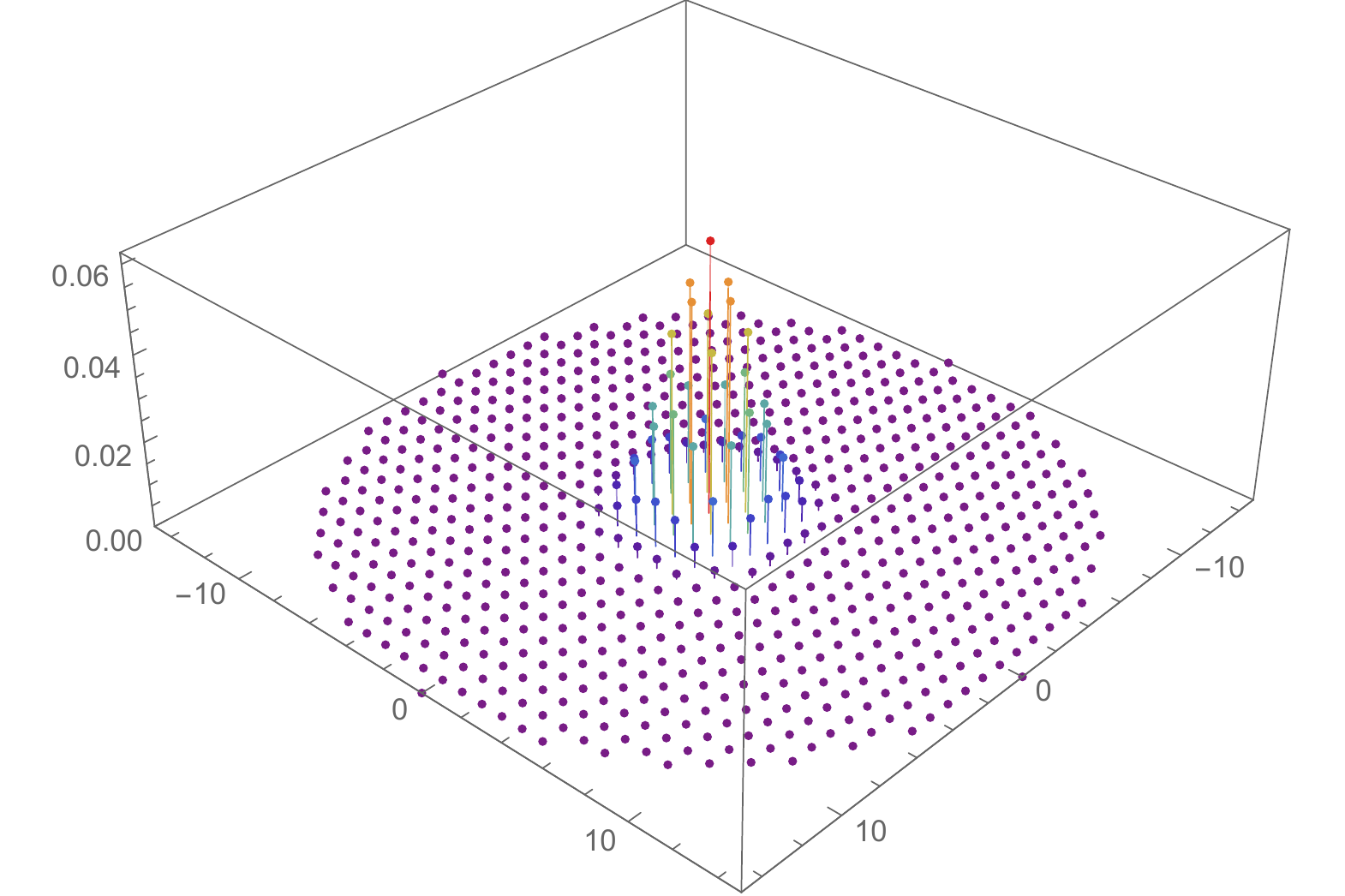}
\caption{\label{fig:DGS} 
The discrete Gaussian distribution on $\Z^2$ with parameter $s=10$ (left) and $s=4$ (right)}
\end{center}
\end{figure}

Define $\rho_s(\vec{x}) =
e^{-\pi\|\vec{x}\|_2^2/s^2}$ and $\rho_s(A) = \sum_{\vec{y} \in A} \rho_s(\vec{y})$
for any discrete set $A \subseteq \R^n$. The discrete Gaussian distribution
$D_{\lat,s}$ over the lattice $\lat \subseteq \R^n$ with parameter $s$
is the distribution satisfying 
\[
\Pr_{\vec{X} \sim D_{\lat,s}}[\vec{X}=\vec{x}] = \rho_s(\vec{x})/\rho_s(\lat),\quad 
\forall \vec{x} \in \lat \text{.}
\]
See Figure~\ref{fig:DGS} for an illustration.
The parameter $s$ determines the ``width'' of the discrete Gaussian. Note that
as $s$ becomes smaller, $D_{\lat,s}$ becomes more and more concentrated on short
lattice vectors. Hence it should not come as a surprise that being able to obtain sufficiently many samples 
from $D_{\lat,s}$ for an arbitrary $s$ leads to a solution to \problem{SVP}. We will discuss 
this relatively natural reduction below, but first let us describe our main
technical contributions, the Gaussian samplers. 

\begin{theorem}[General discrete Gaussian Sampler, informal] 
There is an algorithm that takes as input a lattice $\lat \subset \R^n$ and \emph{any parameter} $s >0$ and outputs 
$2^{n/2}$ i.i.d.\ samples from $D_{\lat,s}$ using $2^{n+o(n)}$ time and space.
\label{thm:gen-dgs-inf}
\end{theorem}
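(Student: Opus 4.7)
The plan is to reduce sampling at an arbitrary parameter $s$ to sampling at a much larger parameter $s_0$ above the smoothing parameter, where efficient samplers (such as Klein/GPV) are already known. The engine of the reduction is the parallelogram-law identity
\[
\rho_s(\vec{x})\rho_s(\vec{y}) \;=\; \rho_{s\sqrt{2}}(\vec{x}+\vec{y})\,\rho_{s\sqrt{2}}(\vec{x}-\vec{y}),
\]
which, after some rearrangement, yields the following key fact: if $\vec{X}_1,\vec{X}_2 \sim D_{\lat,s}$ are independent, then conditional on $\vec{X}_1+\vec{X}_2 \in 2\lat$ (equivalently $\vec{X}_1 \equiv \vec{X}_2 \pmod{2\lat}$) the midpoint $(\vec{X}_1+\vec{X}_2)/2$ is distributed exactly as $D_{\lat,s/\sqrt{2}}$. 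Algorithmically this suggests the following subroutine: bucket the current population of samples by coset mod $2\lat$, pair up samples inside each bucket, and emit the midpoints to obtain samples at parameter $s/\sqrt{2}$ while roughly halving the population.

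Using this tool, the full algorithm is two-phase. First, invoke the polynomial-time sampler of Klein/GPV to produce $N = 2^{n+o(n)}$ independent samples from $D_{\lat,s_0}$ at a starting parameter $s_0 = 2^{n/4 + o(n)} \cdot s$ chosen safely above $\smootheps(\lat)$; this phase costs $\poly(n) \cdot N = 2^{n+o(n)}$ time and space. Second, apply the bucket-and-pair subroutine for $k = 2\log_2(s_0/s) = n/2 + o(n)$ rounds. Each round shrinks the parameter by a factor $\sqrt{2}$ and the population by about a factor of $2$ (since, while the samples remain spread across the $2^n$ cosets of $2\lat$, almost every sample finds a partner in its bucket). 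After $k$ rounds we are left with $\approx 2^{n/2}$ samples at parameter $s$, at total cost $2^{n+o(n)}$. Extremely small $s$, for which $s_0$ would still fall below $\smootheps(\lat)$, can be handled as a separate edge case, since in that regime $D_{\lat,s}$ concentrates overwhelmingly on $\vec{0}$ and the few short lattice vectors.

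The main technical obstacle is making the ``lose a factor of about $2$ per round'' accounting rigorous and showing that the emitted samples are genuinely i.i.d.\ rather than merely having the right marginals. One subtlety is the matching step inside each bucket: the pairing must be done symmetrically (e.g., by a uniformly random perfect matching within each bucket) so that the extracted pair truly looks like two independent copies of $D_{\lat,s}$ conditioned on lying in a common coset, rather than a biased pair whose correlations would leak into the output. A second subtlety is concentration of bucket sizes. When $s$ sits above $\smootheps(\lat)$ the distribution over cosets mod $2\lat$ is near-uniform and $(1-o(1))N$ samples find partners, but once $s$ has dropped below smoothing after several rounds, the coset distribution can be highly non-uniform, and one must argue that it retains enough min-entropy for bucket counts to concentrate and $\Omega(N)$ pairs to survive. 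Controlling the resulting $1+o(1)$ statistical drift through $n/2 + o(n)$ compositions of the combining step is where the real work lies.
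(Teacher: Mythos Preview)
Your proposal has a genuine gap: the greedy bucket-and-pair step does \emph{not} produce samples from $D_{\lat,s/\sqrt{2}}$ once $s$ drops below smoothing, and the error is not a $1+o(1)$ drift but a $\Theta(1)$ distortion. You correctly state Lemma~\ref{lem:sumofgaussians}: if $(\vec{X}_1,\vec{X}_2)\sim D_{\lat,s}^2$ and we condition on $\vec{X}_1\equiv\vec{X}_2\pmod{2\lat}$, the midpoint is exactly $D_{\lat,s/\sqrt{2}}$. Under that conditioning, however, the common coset $\vec{c}$ appears with probability proportional to $\rho_s(\vec{c})^2$. Your bucketing procedure instead produces pairs whose coset is (up to negligible boundary effects) proportional to $\rho_s(\vec{c})$, because the number of pairs a bucket yields is roughly half its population. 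A uniformly random matching inside a bucket makes the two members look like i.i.d.\ draws from $D_{\vec{c},s}$, but it cannot repair the coset marginal. Figure~\ref{fig:greedybargraph} in the paper shows exactly this mismatch in one dimension. Above smoothing the $p_{\vec{c}}$ are nearly equal so $p_{\vec{c}}\approx p_{\vec{c}}^2/\max_{\vec{d}} p_{\vec{d}}$ and greedy pairing is fine (this is the Peikert observation the paper cites); below smoothing the two differ by arbitrary constant or larger factors, and your output is simply not a discrete Gaussian.

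The paper's fix is the \emph{square sampler} (Theorem~\ref{thm:squaresampler}): a rejection procedure that, given samples with coset probabilities $(p_{\vec{c}})$, outputs samples with probabilities $(p_{\vec{c}}^2/\sum_{\vec{d}} p_{\vec{d}}^2)$, at a cost of a multiplicative loss $\approx \sum p_{\vec{c}}^2/\max p_{\vec{c}}$ per round. This loss can be as bad as $2^{-n/2}$ in a single round, so your ``halve each round for $n/2$ rounds'' accounting is also off. What saves the analysis is a telescoping identity (Corollary~\ref{cor:pipeline}): writing $\psi(i)=\rho_{2^{-i/2}s}(\lat)$, the per-round loss is $\psi(i+1)^2/(\psi(i)\psi(i+2))$, and the product over all rounds collapses to a single ratio bounded by $2^{n/2}$ regardless of $\ell$. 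This ``magic cancellation'' is the missing idea; without it one cannot bound the total loss, and without the square sampler one does not have the right distribution to telescope in the first place.
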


Notice the amortized aspect of the algorithm: we obtain $2^{n/2}$ vectors in about $2^n$ time. We do not know how to reduce the time to 
$2^{(1- \eps)n}$
---even if all we want is just one vector!
(But see below for a faster algorithm that works for large parameters.)
Improving the running time of the algorithm (while still outputting a sufficiently large number of samples) would immediately translate into an improved \problem{SVP} algorithm.

As we explain below, a closer inspection of the technique used in our algorithm 
suggests that with some refinement it might be able to achieve a running time of $2^{n/2}$.
Indeed, we actually do achieve this, but only for sufficiently large parameters $s$. In fact, for such parameters, we actually manage to output $2^{n/2}$ samples in $2^{n/2+o(n)}$ time.
This is our second main technical contribution.

\begin{theorem}[Smooth discrete Gaussian sampler, informal] 
There is an algorithm that takes as input a lattice $\lat \subset \R^n$ and a parameter $s $ \emph{above the smoothing parameter} of $\lat$ and outputs 
$2^{n/2}$ i.i.d.\ samples from $D_{\lat,s}$ using $2^{n/2+o(n)}$ time and space.
\label{thm:smooth-dgs-inf}
\end{theorem}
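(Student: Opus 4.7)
I would prove Theorem~\ref{thm:smooth-dgs-inf} by refining the iterative parameter-halving strategy underlying Theorem~\ref{thm:gen-dgs-inf}, maintaining a pool of $N = 2^{n/2+o(n)}$ samples throughout the chain. The algorithm begins by generating $N$ samples from $D_{\lat,\sigma_0}$ at a large parameter $\sigma_0 = 2^{\poly(n)}\cdot s$ using Klein's algorithm on an LLL-reduced basis, which costs $\poly(n)$ per sample since $\sigma_0$ is far above Klein's threshold. It then performs $k = O(\log(\sigma_0/s)) = \poly(n)$ halving rounds, each of which transforms a pool of samples at parameter $\sigma$ into a pool of samples at parameter $\sigma/\sqrt{2}$, preserving the pool size up to a constant factor. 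After $k$ rounds the pool consists of $N$ (nearly) i.i.d.\ samples from $D_{\lat,s}$, and the total time and space are $N \cdot k \cdot \poly(n) = 2^{n/2+o(n)}$.

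\textbf{Halving identity.} Each halving round rests on the Gaussian factorization
\[
\rho_\sigma(x)\,\rho_\sigma(y) \;=\; \rho_{\sigma/\sqrt{2}}\!\left(\tfrac{x+y}{2}\right)\rho_{\sigma/\sqrt{2}}\!\left(\tfrac{x-y}{2}\right),
\]
which follows by applying the isometry $(x,y)\mapsto((x+y)/\sqrt{2},\,(x-y)/\sqrt{2})$ to the norms in the exponent. Whenever $X,Y\sim D_{\lat,\sigma}$ are i.i.d.\ and satisfy the ``matching'' condition $X\equiv Y\pmod{2\lat}$, the two vectors $(X+Y)/2$ and $(X-Y)/2$ lie in $\lat$; the factorization above, combined with the fact that the coset marginal is $\eps$-uniform on $\lat/2\lat$ when $\sigma \geq \smooth_\eps(\lat)$, implies that these two vectors are (essentially) i.i.d.\ samples from $D_{\lat,\sigma/\sqrt{2}}$. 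Thus every matched pair produces two halved-parameter samples.

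\textbf{Main obstacle.} The key difficulty is throughput: with $N=2^{n/2}$ samples spread nearly uniformly over the $2^n$ cosets of $2\lat$, birthday-style matching within the pool yields only $\binom{N}{2}/2^n = O(1)$ matched pairs per round, which is catastrophic. My way around this is a ``coset resampling'' step that exploits smoothness directly: for each existing sample $X$, generate a fresh independent sample $Y\sim D_{X+2\lat,\sigma}$ by running Klein's on a basis of $2\lat$ shifted so as to lie in the coset $X+2\lat$, which is efficient as long as $\sigma \geq \smooth_\eps(2\lat)=2\,\smooth_\eps(\lat)$. The pair $(X,Y)$ is automatically matched, so each input sample yields two output samples at $\sigma/\sqrt{2}$ without any birthday loss (trimmable back to $N$ by subsampling). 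For the final $O(\log n)$ rounds near $s$, where $\sigma$ may dip below $\smooth_\eps(2\lat)$, I would switch to a chain of sublattices $\lat\supset\lat_1\supset\lat_2\supset\cdots$ with slowly-growing index chosen so that the appropriate smoothing condition holds at each level. A standard hybrid argument bounds the aggregated statistical error across the chain by $k\cdot\eps$, which is absorbed into the $o(n)$ by taking $\eps = 2^{-\omega(n)}$; the same hybrid argument ensures that the final $N$ output samples are jointly (not merely marginally) close to i.i.d.
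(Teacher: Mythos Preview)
Your ``coset resampling'' step rests on a false premise. You assert that Klein's algorithm efficiently samples $Y\sim D_{X+2\lat,\sigma}$ whenever $\sigma\geq\eta_\eps(2\lat)$. But Klein/GPV sampling (Theorem~\ref{thm:GPV}) requires $\sigma\gtrsim\|\gs{\basis}\|\sqrt{\log n}$ for a \emph{known} basis $\basis$, and even after strong basis reduction this quantity exceeds $\eta_\eps(\lat)$ by a $2^{\Theta(\log^2 n)}$ factor at best (see Proposition~\ref{prop:startgauss}). If Klein's algorithm could sample above the smoothing parameter, Theorem~\ref{thm:smooth-dgs-inf} would be a polynomial-time triviality. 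So your resampling trick is available only while $\sigma$ is still $2^{o(n)}$ above target---precisely the regime already handled by the initialization---and is unavailable for the rounds that matter. Your one-line fallback (``switch to a chain of sublattices for the final $O(\log n)$ rounds'') is not a detail; it is the entire content of the theorem, and you have not said what the chain is or why a pool of $2^{n/2}$ samples survives it.

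There is a second gap in the error accounting. You take $\eps=2^{-\omega(n)}$ so that the hybrid error $k\eps$ is negligible, but in the final rounds $\sigma$ is only a constant multiple of $\eta_{1/2}(\lat)$, so the coset marginal on $\lat/2\lat$ is only $O(1)$-close to uniform, not $2^{-\omega(n)}$-close. With $2^{n/2}$ samples per round, a constant per-sample bias is fatal for the joint distribution. The paper avoids this by never relying on approximate coset-uniformity: it uses a ``square sampler'' before pairing and a ``square-root sampler'' after pairing to make the output \emph{exactly} discrete Gaussian at every step, and the restriction to parameters above smoothing enters only because the square-root sampler's sample complexity scales with the inverse of the minimum coset weight.

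For comparison, the paper's actual mechanism is a tower $\lat_0\supset\lat_1\supset\cdots\supset\lat_\ell=\lat$ with each index $|\lat_{i-1}/\lat_i|=2^a\approx 2^{n/2}$ and $2\lat_{i-1}\subseteq\lat_i$. One starts with $2^{n/2+o(n)}$ samples from $D_{\lat_0,2^{-\ell/2}s}$ (via Proposition~\ref{prop:startgauss}) and, at step $i$, pairs samples lying in the same coset of $\lat_i$ in $\lat_{i-1}$---now only $2^{n/2}$ cosets, so birthday matching succeeds---and \emph{sums} each pair to get a sample from $D_{\lat_i,\sqrt{2}\sigma}$. The parameter grows while the lattice sparsifies, and the tower is tuned so that after $\ell$ steps one lands at $D_{\lat,s}$.
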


The smoothing parameter is the value of $s$ above which $D_{\lat,s}$ ``looks like'' a continuous Gaussian 
in a certain precise mathematical sense. (See
Definition~\ref{def:smooth}.) While sampling above smoothing appears to be insufficient 
to solve exact lattice problems, it is enough to solve major lattice
problems approximately. 
Indeed, we show how this is sufficient to approximate the decision version of \problem{SVP} 
to within a constant factor in time $2^{n/2+o(n)}$ (with the constant being roughly $1.93$). 
This holds the record for the fastest provable running time of a hard
lattice problem. 

We note that Theorem~\ref{thm:smooth-dgs-inf} actually implies a slightly stronger result; we can obtain $2^{n/2}$ samples from the Gaussian over a lattice \emph{shift}, $D_{\lat - \vec{t}, s}$, in $2^{n/2 + o(n)}$ time and space, as long as $s$ is above smoothing. However, we know of no application of this slightly stronger result. (See Section~\ref{sec:shifted} for a proof sketch.)

\subsection{Comparison with prior work}

The task of discrete Gaussian sampling is by no means new.
It by now has a long history
within cryptography \cite{MR04,GPV08,oded05,Peikert09, MicciancioP13}.
By analyzing an algorithm of Klein~\cite{Klein00}, 
Gentry, Peikert, and Vaikuntanathan~\cite{GPV08} first showed how to solve \problem{DGS} in polynomial time for large parameters.
(We remark that Klein analyzed this algorithm for very small parameters
and used it to solve the BDD problem. For such parameters, the algorithm does not produce samples distributed according to the discrete Gaussian distribution.)
\problem{DGS}
has been used extensively to improve reductions from worst-case lattice problems (such as
approximate decisional $\problem{SVP}$) to the average-case Short Integer Solution (SIS) and
Learning with Errors (LWE) problems \cite{MR04,oded05,Peikert09, MicciancioP13}, and as a core subroutine for
instantiating certain cryptographic primitives~\cite{GPV08}. In all
previous works, the \problem{DGS} procedure either samples at very high parameters or requires \emph{a priori knowledge of a
relatively short lattice basis}---typically only available when a user is able to generate
the lattice themselves, such as in certain trapdoor schemes---or access to
\emph{powerful oracles}, such as SIS or LWE oracles.

Furthermore, even with oracles and a short basis, none of the algorithms from prior work
could be used to sample below the \emph{smoothing
parameter} of the lattice.
The reason that we are able to achieve this is because of our observation that, if we allow ourselves exponential time, we can carefully combine vectors sampled from a discrete Gaussian together to obtain vectors whose distribution is \emph{exactly} a discrete Gaussian with a smaller parameter. (See Lemma~\ref{lem:sumofgaussians} and the proof overview below.) All prior work only obtained a distribution that is \emph{statistically close} to the discrete Gaussian, with error that is unbounded below the smoothing parameter.

We note that our approach is similar to that of the AKS algorithm at a high level. In particular, like AKS, we use a sieve algorithm that starts with a large collection of randomly selected vectors and proceeds to combine them together in pairs to find short lattice vectors. The major important difference between our approach and that of the AKS algorithm and its derivatives is that we maintain complete control over the distribution of the lattice
points that we generate at each step. While prior work is focused (quite naturally) on controlling the \emph{lengths} of the vectors after each step, our algorithm actually completely ignores their lengths---choosing whether to combine two vectors based only on their \emph{coset} mod a sublattice. 

Indeed, we view our $2^{n+o(n)}$-time algorithm as an efficient discrete Gaussian sampler that
consequently yields an efficient solution to \problem{SVP}, rather than as a
sieve algorithm for \problem{SVP}. It is the simplicity and elegance of the
discrete Gaussian distribution that allows us to side-step many of the
complications that arise with other sieve algorithms (such as the
\scarequotes{perturbation} step). Indeed, the $2^{n+o(n)}$-time
algorithm is quite simple; the most technical tool that it uses is 
a simple subroutine that we call
the \scarequotes{square sampler} (described below).

One negative aspect of our approach is that it has a clear lower bound. It seems that we cannot use this approach to find any algorithm that runs in time less than $2^{n/2}$. And, the quoted running time of each algorithm ($2^{n+o(n)}$ and $2^{n/2 + o(n)}$ respectively) is essentially tight in both theory and practice---for large (and relevant) parameters, our 
sieves yield essentially nothing when their input consists of fewer than $2^n$ or $2^{n/2}$ vectors respectively. This is in contrast to AKS-style algorithms, which seem to perform well heuristically~\cite{NguyenVidick08,WLTB11,ZYH14,Laarhoven14}. 

\subsection{Proof overview}

We now include a high-level description of our proofs, first that of Theorem~\ref{thm:gen-dgs-inf} and then that of the more refined Theorem~\ref{thm:smooth-dgs-inf}. We end with a brief discussion of how to use Gaussian samples
to solve \problem{SVP} as well as other applications.

\paragraph{A $2^{n+o(n)}$-time combiner for DGS.}
Recall that efficient algorithms are known for sampling from the discrete Gaussian at very high parameters \cite{GPV08}.
It therefore suffices to find a way to efficiently \emph{convert} samples from the discrete Gaussian with a high parameter to samples with a parameter lowered by a constant factor. By repeating this \scarequotes{conversion} many times, we can obtain samples with much lower parameters.

Note that this is trivial to do for the continuous Gaussian: if we divide a vector sampled from the continuous Gaussian distribution by $2$, the result is distributed as a continuous Gaussian with half the width. Of course, half of a lattice vector is not typically in the lattice, so this method fails spectacularly when applied to the discrete Gaussian. But, we can try to fix this by \emph{conditioning} on the result staying in the lattice. I.e., we can sample many vectors from $D_{\lat, s}$, keep those that are in the \scarequotes{doubled lattice} $2\lat$, and divide them by two. This method does work, but it is terribly inefficient---there are $2^n$ cosets of $2\lat$, and for some typical parameters, a sample from $D_{\lat, s}$ will land in $2\lat$ with probability as small as $2^{-n}$. I.e., our \scarequotes{loss factor,} the ratio of the number of output vectors to the number of input vectors, can be as bad as $2^{-n}$ for a single step. If we wish to iterate this $k$ times, we could need $2^{kn}$ input vectors for each output vector, resulting in a very slow algorithm!

We can be much more efficient, however, if we instead look for \emph{pairs} of vectors sampled from $D_{\lat, s}$ whose \emph{sum} is in $2\lat$, or equivalently pairs of vectors that lie in the same coset $\vec{c}$ mod $2\lat$. Taking our intuition from the continuous Gaussian, we might hope that the \emph{average} of two such vectors will be distributed as $D_{\lat, s/\sqrt{2}}$. This suggests an \emph{amortized} algorithm, in which we sample many vectors from $D_{\lat,s}$, place them in \scarequotes{buckets} according to their coset mod $2\lat$, and then take the average of disjoint pairs of elements in the same bucket. We call such an algorithm a \scarequotes{combiner.} The most natural combiner to consider is the \scarequotes{greedy combiner,} which simply pairs as many vectors in each bucket as it can, leaving at most one unpaired vector per bucket. Since there are $2^n$ cosets, if we take, say, $\Omega(2^n)$ samples from $D_{\lat, s}$, almost all of the resulting vectors will be paired. A lemma due to Peikert (\cite{Pei10}) shows that the resulting distribution will be statistically close to the desired distribution, $D_{\lat, s/\sqrt{2}}$, \emph{provided that the parameter $s$ is above the smoothing parameter}.

At this point, we can already build a roughly $2^n$-time algorithm for \problem{DGS} that works for such parameters. (Namely, use prior work to sample at some very high parameter and iteratively apply the combiner described above.) While this is not our main result (it is strictly weaker), we note that we have not seen this observation mentioned elsewhere.\footnote{One can likely also obtain a $2^{O(n)}$-time algorithm for $\problem{DGS}$ above the smoothing parameter by instantiating the oracles in~\cite{MicciancioP13}.} But, in order to move \emph{below smoothing} (which is necessary, e.g., for solving \problem{SVP}), we need to do something else.

\begin{figure}[ht]
\begin{center}
\[
\hspace{-3ex}
\raisebox{14 ex}{\includegraphics[width=0.1 \textwidth]{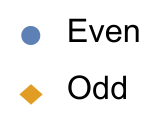}}
\includegraphics[width=.40 \textwidth]{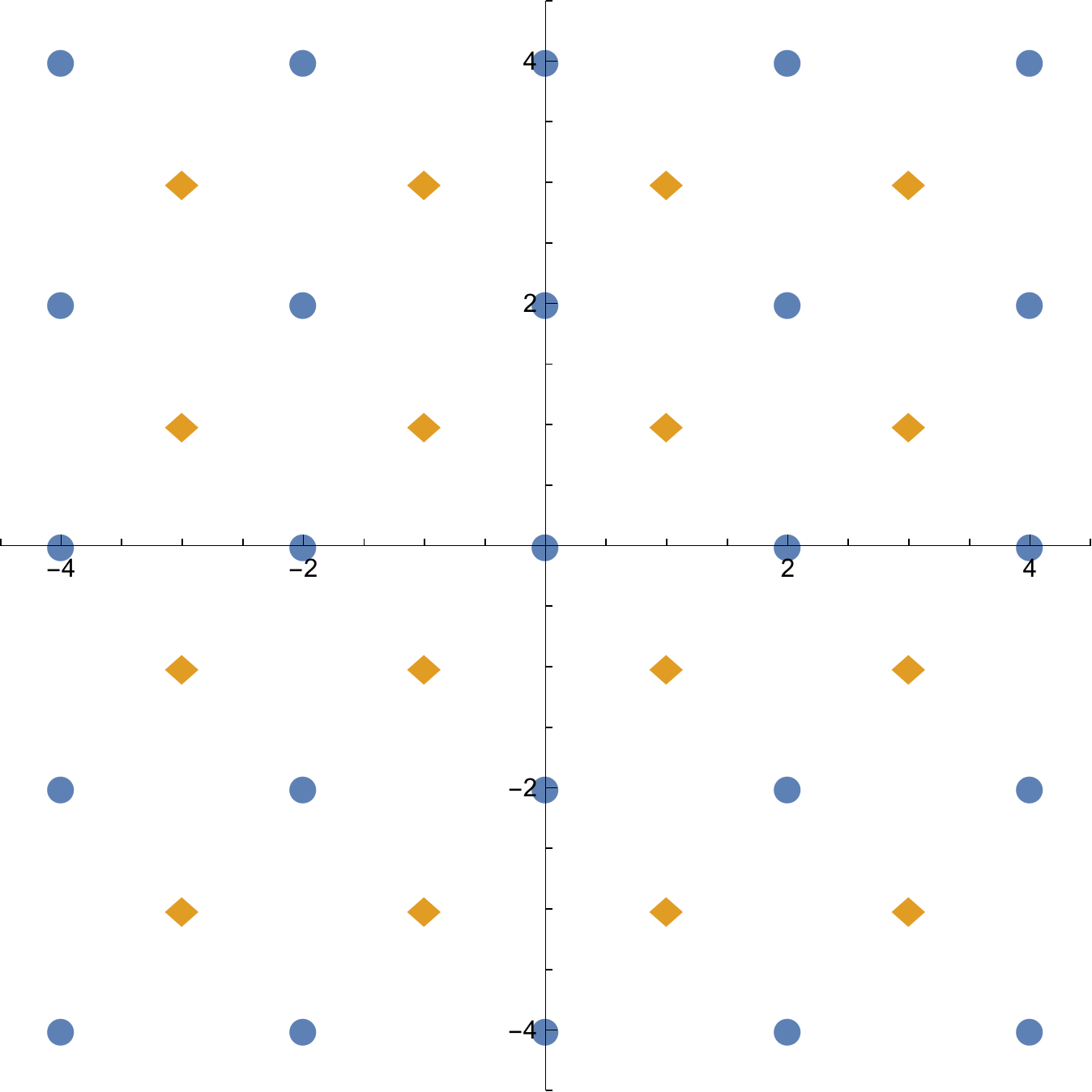}
\quad
\raisebox{10 ex}{\includegraphics[width = .1 \textwidth]{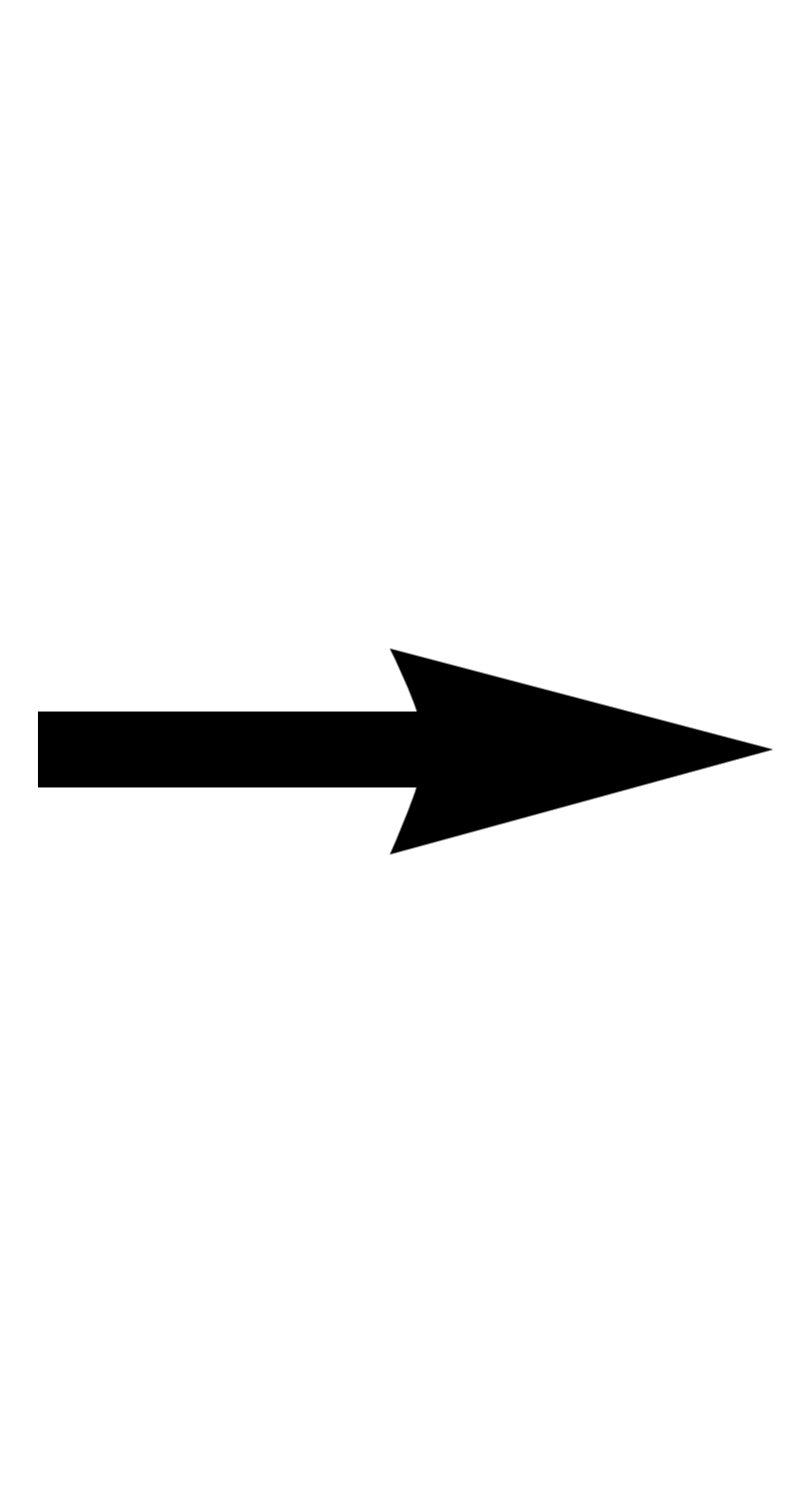}} %
\quad
\includegraphics[width=0.40 \textwidth]{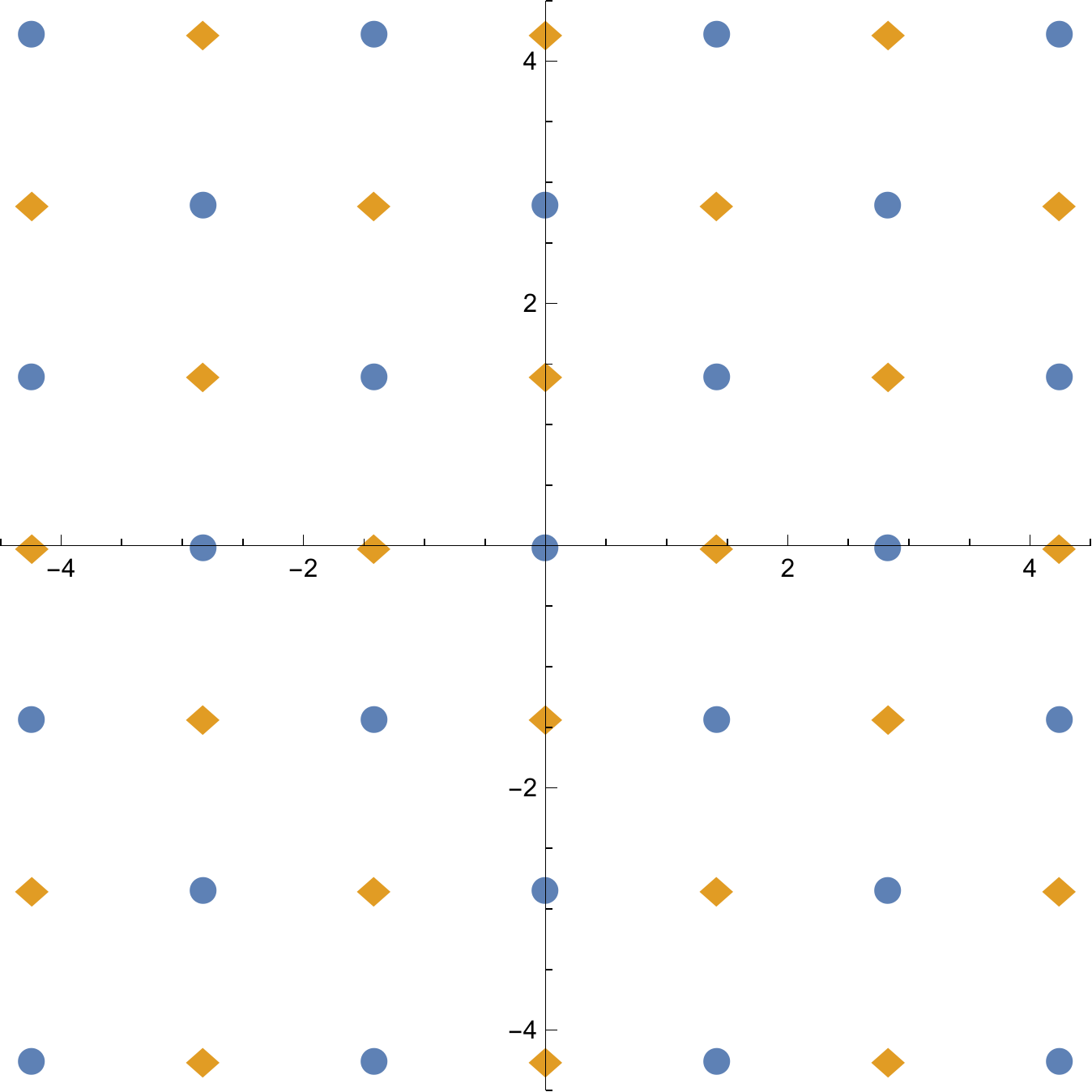}
\]
\caption{\label{fig:rotation} The rotation mapping $\overline{\lat} = \{ (z_1, z_2) \in \Z^2 : z_1 = z_2 \bmod 2\}$ to $(\sqrt{2} \Z)^2$, $(z_1, z_2) \mapsto (z_1 + z_2, z_1-z_2)/\sqrt{2}$. Our algorithm effectively creates samples from $D_{\overline{\lat}, s}$ and then outputs the first coordinate of the rotated result scaled down by a factor of $\sqrt{2}$, $(z_1 + z_2)/2$. The resulting distribution is exactly $D_{\Z,s/\sqrt{2}}$.
}
\end{center}
\end{figure}

\begin{figure}[ht]
\begin{center}
\includegraphics[width=0.8 \textwidth]{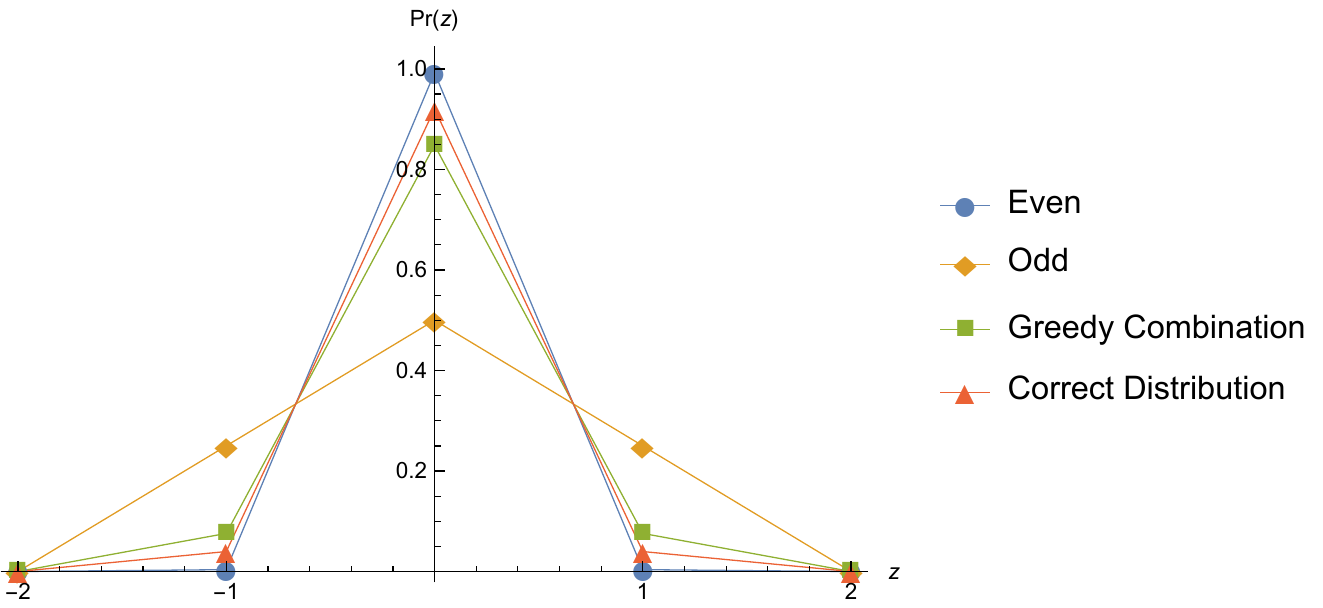}
\caption{\label{fig:greedybargraph}
The distribution of averages of pairs of integers sampled from $D_{\Z, \sqrt{2}}$ resulting from taking (1) only even pairs; (2) only odd pairs; (3) even and odd pairs with \scarequotes{greedy} weights proportional to $\rho_{\sqrt{2}}(2\Z)$ and $\rho_{\sqrt{2}}(2\Z+1)$ respectively; and (4) even and odd pairs with \scarequotes{squared} weights proportional to $\rho_{\sqrt{2}}(2\Z)^2$ and $\rho_{\sqrt{2}}(2\Z+1)^2$ respectively. The fourth distribution is exactly $D_{\Z}$.}
\end{center}
\end{figure}

In particular, below the smoothing parameter, combining discrete Gaussian vectors \scarequotes{greedily} as above will not typically give a result that is statistically close to a Gaussian distribution. However, all is not lost. 
Recall that our algorithm works by picking pairs of vectors sampled independently from $D_{\lat, s}$ that are in the same coset $\vec{c}$ mod $2\lat$, and then taking the average of each pair. 
So, the algorithm effectively samples a vector $(\vec{X}_1, \vec{X}_2)$ from \emph{some} distribution over the $2n$-dimensional lattice of pairs of vectors from $\lat$ that are in the same coset mod $2\lat$, 
\[
\overline{\lat} := \{ (\vec{X}_1, \vec{X}_2) \in \lat^2\ :\ \vec{X}_1 = \vec{X}_2 \bmod 2\lat \} = \bigcup_{\vec{c} \in \lat/(2\lat)} \vec{c} \times \vec{c}
\; ,
\] 
and then outputs $(\vec{X}_1+\vec{X}_2)/2$.
We claim that \emph{assuming that that distribution is} $D_{\overline{\lat},s}$, 
the output $(\vec{X}_1+\vec{X}_2)/2$ is distributed \emph{exactly}
as $D_{\lat, s/\sqrt{2}}$. 
This fact, shown in Lemma~\ref{lem:sumofgaussians}, has a straightforward proof, yet we have not seen this observation before. 
(It is closely related to Riemann's theta relations, as described 
in~\cite[Chapter 1, Section 5]{Mumford}; see also~\cite{RegevS15})
The idea is the following.
It is not difficult to show that if we apply the $45^\circ$ rotation given by 
\[ 
(\vec{X}_1, \vec{X}_2) \mapsto \Big(\frac{\vec{X}_1 + \vec{X}_2}{\sqrt{2}}, \frac{\vec{X}_1 - \vec{X}_2}{\sqrt{2}} \Big)
\] 
to $\overline{\lat}$,
we obtain the \emph{product} lattice $(\sqrt{2} \lat)^2 = \{ (\sqrt{2} \vec{X}_1, \sqrt{2}\vec{X}_2)~|~\vec{X}_1,\vec{X}_2 \in \lat\}$. 
(Figure~\ref{fig:rotation} shows the one-dimensional case.)
Note that the rotation of a discrete Gaussian is again a discrete Gaussian, and the discrete Gaussian over a product lattice is a product distribution.
Therefore, if $(\vec{X}_1, \vec{X}_2)$ is distributed according to $D_{\overline{\lat}, s}$, 
the distribution of $(\vec{X}_1 + \vec{X}_2,\vec{X}_1 - \vec{X}_2)/\sqrt{2}$ is
according to $D_{(\sqrt{2} \lat)^2,s}$ which is a product distribution,
and hence $(\vec{X}_1 + \vec{X}_2)/\sqrt{2}$ is distributed according
to $D_{\sqrt{2} \lat,s}$, as claimed. 

However, note that if the combiner just 
greedily paired as many vectors from each coset as possible, 
it would \emph{not} yield samples from $D_{\overline{\lat},s}$. In 
particular, the probability that a sample from $D_{\overline{\lat}, s}$ 
will land in $\vec{c} \times \vec{c}$ for some coset $\vec{c}$ 
is proportional to the \scarequotes{squared weight} of the 
coset $\rho_s(\vec{c})^2$. But, the greedy approach pairs vectors
from $\coset$ with probability roughly proportional to 
$\rho_s(\coset)$. (Figure~\ref{fig:greedybargraph} shows 
how the resulting distributions differ in the one-dimensional 
case.) For parameters above smoothing, these distributions 
are roughly the same, but to go below smoothing (and to 
avoid the statistical error resulting from the greedy 
approach), we need a way to sample pairs from this 
\scarequotes{squared distribution} directly.

This mismatch between the \scarequotes{squared distribution} that we want and the \scarequotes{unsquared} distribution that we get is the primary technical challenge that we must overcome to build our general discrete Gaussian combiner. To solve it, we present a generic solution for
\scarequotes{converting any probability distribution to its square} relatively
efficiently, which we call the \scarequotes{square sampler.} Informally, the square sampler is given access to samples from some probability distribution that assigns respective (unknown) probabilities $(p_1,\ldots, p_N)$ to the elements in some (large) finite set $\{1,\ldots, N\}$. It uses this to efficiently sample a large collection of \emph{independent} coin flips $b_{i,j}$ such that $b_{i,j} = 1$ with probability proportional to $p_i$.  Then, using these coins, it applies rejection sampling to the input samples (accepting the $j$th instance of input value $i$ if $b_{i,j} = 1$) in order to obtain the desired \scarequotes{squared distribution.} If $\Pr[b_{i,j} = 1] = T p_i$ for some proportionality factor $T$, it is not hard to see that the expected \scarequotes{loss factor} 
of this process is $T\sum p_i^2$. We therefore take $T$ to be as large as possible by setting $T \approx 1/\max p_i$ (if we took $T$ to be any larger, we would need a coin that lands on heads with probability greater than one!), making the loss factor of the square sampler approximately $\sum p_i^2/\max p_i$. (See Section~\ref{sec:squaresampler} and Theorem~\ref{thm:squaresampler} in particular.)

In particular, when combining discrete Gaussian vectors, the loss factor is approximately the \emph{collision probability} over the cosets $\vec{c}$ of
$2\lat$, $\sum \rho_s(\vec{c})^2/\rho_s(\lat)^2$, divided by the maximal
probability of a single coset. As a result, if one
coset has a  $2^{-n/2}$ fraction of the total weight and the other cosets split the remaining weight
roughly evenly, then the loss factor is roughly $ 2^{-n/2}$ \emph{for a single
step of the combiner}. This looks terrible for us, as it could be the case that $k$ applications of the combiner could yield a loss factor of $2^{-kn/2}$! Surprisingly, we
show that the product of all loss factors for an arbitrarily long sequence of
applications of the combiner is at worst $2^{-n/2}$ (ignoring loss due to other
factors). I.e., the accumulated loss factor can be no worse than essentially the
worst-case loss factor in a single step!\footnote{While the purely algebraic proof of
this fact is quite simple (see the proof of Corollary~\ref{cor:pipeline}), we do not yet
have good intuitive understanding of it. Indeed, we have found ourselves referring to it as the \scarequotes{magic cancellation.}} As a result, our general combiner
always returns $2^{n/2}$ vectors when its input is $2^{n + o(n)}$ vectors
sampled from the discrete Gaussian. (See Corollary~\ref{cor:pipeline} for the formal analysis of repeated application of our combiner.)

\paragraph{A $2^{n/2+o(n)}$-time combiner for DGS above smoothing.} Recall that the general combiner described above starts with many vectors and then repeatedly takes the average of pairs of vectors that lie in the same coset of $2\lat$. We observed that this combiner necessarily needs over $ 2^n$ vectors \scarequotes{just to get started} because it works over the $2^n$ cosets of $2\lat$. To get a faster combiner, we therefore try pairing vectors according to the cosets of some sublattice $\lat'$ that \scarequotes{lies between} $\lat$ and $2\lat$ such that $2\lat \subseteq \lat' \subset \lat$. If we simply take many samples from $D_{\lat, s}$, group them according to their cosets mod $\lat'$, and sum them together (taking averages is a bit less natural in this context), analogy with the continuous Gaussian suggests that the resulting vectors will be distributed as roughly $D_{\lat', \sqrt{2}s}$. 
Note that the parameter has increased, which is not what we wanted, but we are now sampling from a sparser lattice. In particular, suppose that we apply this combiner twice, so that in the second step we obtain vectors from some sublattice $\lat''$. We then expect to obtain samples from roughly $D_{\lat'', 2s}$. So, intuitively, if we take $\lat''$ to be a sublattice of $2\lat$, we have \scarequotes{made progress,} even though we have doubled the parameter. Our running time will be proportional to the index of $\lat'$ over $\lat$ (assuming that the index of $\lat''$ over $\lat'$ is the same), so we should take the index of $\lat'$ over $\lat$ to be as small as possible. 
More specifically, we can build a \scarequotes{tower} of progressively sparser lattices $(\lat_0,\ldots, \lat_\ell)$ with the index of $\lat_i$ over $\lat_{i-1}$ taken to be slightly larger than $2^{n/2}$.\footnote{We note that Becker et al.~\cite{BGJ14} also use a tower of lattices in their heuristic algorithm.
}
If we take $\lat_\ell$ to be the lattice from which we wish to obtain samples with parameter $s$ and $\lat_0$ to be a dense lattice from which we can sample efficiently with parameter $2^{-\ell/2}s$, we can hope that iteratively applying such a combiner \scarequotes{up the tower} will yield a sampling algorithm.

As in the description of our $2^n$-time combiner, 
the lemma from~\cite{Pei10} shows that the above approach, when instantiated with the \scarequotes{greedy combiner,} will yield an algorithm that can output vectors whose distribution is statistically close to the discrete Gaussian for parameters $s$ that are above the smoothing parameter. 
Though this statistical distance can be made small, 
it is large enough to break applications such as our approximation 
algorithm for decision $\problem{SVP}$.

To avoid this error, the natural hope is that the same combiner used in the $2^n$-time algorithm above (the one with the \scarequotes{square sampler}) will suffice.
Unfortunately, this gives the wrong distribution. In particular, we obtain a distribution in which the cosets of $\lat'$ over $2\lat$ have weight that is proportional to the \emph{square} of their weights over the discrete Gaussian. (See Lemma~\ref{lem:anysublattice}. 
Note that when $\lat' = 2\lat$ there is only one such coset, which is why our $2^n$-time combiner does not run into this problem.) In some sense, this is the \scarequotes{inverse} of the problem that the square sampler solves. And, indeed, we solve it by building a \scarequotes{square-root sampler}---based on a clever trick (used implicitly in~\cite{MosselP05} and discussed in~\cite{DidMSE}) that allows one to flip a coin with probability $\sqrt{p}$ given black-box access to a coin with unknown probability $p$. (See Claim~\ref{clm:sqrt} for the trick and Theorem~\ref{thm:sqrtsampler} for the square-root sampler.) So, our combiner works by \scarequotes{squaring the weights} of the cosets mod $\lat'$ of input vectors; pairing them according to these squared weights and summing the pairs; and then \scarequotes{taking the square root of the weights} of the cosets mod $2\lat$ of the resulting output vectors.

This completes the description of the proof of Theorem~\ref{thm:smooth-dgs-inf}. The above only works above the smoothing parameter  because the required size of the input to the square-root sampler depends on $1/\pmin$, where $\pmin$ is the probability of landing in the coset with minimal weight. We therefore only know how to use the square-root sampler to efficiently sample above the smoothing parameter, where the minimal weight is roughly equal to the maximal weight. Indeed, in this regime, both the square-root sampler and the square sampler incur \scarequotes{almost no loss,} so that we obtain an algorithm that runs in time $2^{n/2 + o(n)}$ and returns $2^{n/2}$ samples from the discrete Gaussian. But, below this, $1/\pmin$ can be arbitrarily large. (Intuitively, some sort of dependence on $1/\pmin$ is necessary for a square-root sampler because a coset whose weight is negligible could have significant weight after we \scarequotes{take the square root.} So, we should expect that any square-root sampler would need at least enough samples to \scarequotes{see} such a coset.)

However, we again stress that our techniques do not incur error that depends on \scarequotes{how smooth the distribution is.} 
This leaves open the possibility that our algorithm might be modified to work even \emph{below} smoothing. The only bottleneck is that the square-root sampler requires very large input in such cases. But, we note that the way that we currently use the square-root sampler might not be optimal.
More specifically, we observe the rather strange behavior of 
our current algorithm: when the algorithm \scarequotes{takes the square root} of some coset weights, it typically \scarequotes{squares} the weights of some (different!) cosets immediately afterwards. 
 So, while the second step is not the exact inverse of the first, it does still seem that the square-root step is a bit counterproductive. 
This suggests that there is room for improvement in this algorithm,
and we have made some progress to that end
by proving a correlation inequality that we believe should play
a central role in such an improved algorithm~\cite{RegevS15}. 

Finally, we note that this algorithm can actually be used to obtain $2^{n/2}$ samples from the \emph{shifted} discrete Gaussian $D_{\lat - \vec{t}, s}$ for any $\vec{t} \in \R^n$ and parameter $s$ above smoothing in $2^{n/2+o(n)}$ time. We know of no applications for this, but we present a proof sketch in Section~\ref{sec:shifted} for completeness.

\paragraph{Reduction from SVP to DGS.} 
As we mentioned above, if we could efficiently
sample from $D_{\lat,s}$ at the \emph{right parameter}, we can hope that a
$D_{\lat,s}$ sample will hit a shortest non-zero vector of $\lat$ with
reasonable probability. 
One can quickly see here that there is an important trade-off in the choice of
$s$. For $s$ too small, the discrete Gaussian becomes completely concentrated on
$\vec{0}$, whereas for $s$ too large, the distribution becomes too diffuse
over $\lat$ and will rarely hit a shortest vector. 
By properly choosing $s$, we show that the discrete Gaussian yields a shortest non-zero lattice vector with probability $2^{-0.465n-o(n)}$. (In Section~\ref{sec:boundonmass}, we note that the optimal parameter has a nice interpretation in terms of the smoothing parameter.) Since our \problem{DGS} algorithm returns $2^{n/2}$ vectors in time $2^{n + o(n)}$, we obtain a $2^{n + o(n)}$-time algorithm for \problem{SVP}.

In order to obtain this bound, we use the result of Kabatjanski{\u\i} and Leven{\v{s}}te{\u\i}n that achieves the current best upper bound on the kissing number~\cite{KL78}. (The kissing number bounds from above the maximal number of shortest non-zero vectors in a lattice. Note that the reciprocal of the latter is a natural upper bound on the above probability.) At a high level, this is essentially the same problem faced by the randomized
sieving algorithms, and our techniques are very similar to those developed there (in
particular those in~\cite{PS09,MV10}).

\paragraph{Reduction from decision SVP to DGS above smoothing.} 
In order to approximate the length of the shortest non-zero lattice vector to within a constant factor, we note (in Lemma~\ref{lem:etalambda1}) that it suffices to approximate the smoothing parameter of the dual lattice (for exponentially small $\eps$) to within a constant factor. Of course, if we had a $2^{n/2}$-time discrete Gaussian sampler that worked above smoothing and always failed below smoothing, then it would be trivial to use this to approximate the smoothing parameter. However, while our $2^{n/2 + o(n)}$-time sampler does in fact always work above smoothing, 
it is not a priori clear how it behaves when asked to provide samples below smoothing. 

We handle this problem as follows. First, while we cannot guarantee that our sampler always fails below smoothing, we show (with a bit more work) that it always either fails or outputs valid discrete Gaussian samples. We call such a sampler \scarequotes{honest.} (See Definition~\ref{def:hDGS}.) Second, we show a simple test that can distinguish between the discrete Gaussian distribution with parameter slightly above smoothing and the discrete Gaussian with parameter below smoothing. (See Lemma~\ref{lem:smoothcovariance}.) With this, we obtain an $O(1)$-approximation algorithm for the smoothing parameter that runs in $2^{n/2+o(n)}$ time.

\paragraph{Further applications.} 
Another fundamental problem on lattices
is the Closest Vector Problem (CVP), in which we must find a closest lattice vector to some target vector $\vec{t}$. 
$\problem{CVP}$ is known to be at least as hard as $\problem{SVP}$, as
there is a polynomial-time approximation-preserving reduction from $\problem{SVP}$ to
$\problem{CVP}$~\cite{GMSS99}. Furthermore, almost all of the major lattice problems reduce to $\problem{CVP}$
in this way~\cite{Micciancio08}.

The fastest exact algorithm for $\problem{CVP}$ is again the
$\widetilde{O}(4^n)$-time and $\widetilde{O}(2^n)$-space algorithm due to
Micciancio and Voulgaris~\cite{MV13} (which in fact more directly solves $\problem{CVP}$ than
$\problem{SVP}$). For approximation factor $\gamma = 1+r$ for $r > 0$, randomized sieving techniques have
been shown capable of solving $\gamma$-approximate $\problem{CVP}$ in
$2^{O(n)}(1+1/r)^{O(n)}$ time and space~\cite{AKS02,BN09}, though little effort has
been made to optimize the constant in the exponent. 

Based on an embedding trick of Kannan~\cite{Kannan87} and standard concentration bounds on the discrete Gaussian, 
we show how to use our sampler to solve $1.97$-approximate \problem{CVP} in time $2^{n+o(n)}$. As mentioned above, the reductions of~\cite{Micciancio08}
show that this yields the same approximation factor and running time for almost all lattice problems.  

 Also, our algorithm
from Theorem~\ref{thm:smooth-dgs-inf} gives $2^{n/2 + o(n)}$-time algorithms for 
$.422$-\problem{BDD} (Corollary~\ref{cor:bdd}) and 
$O(\sqrt{n \log n})$-approximate \problem{SIVP} (Corollary~\ref{cor:sivp}).

\subsection{Conclusions and open problems} 

Our work raises many questions and potential avenues for improvement. 
Firstly,
we suspect that the algorithm from Theorem~\ref{thm:smooth-dgs-inf} can be modified to work for an arbitrary parameter $s$ with the same running time of roughly $2^{n/2}$ (at least to sample a single vector). Such a result would subsume 
Theorem~\ref{thm:gen-dgs-inf} and would lead to an improved algorithm for \problem{SVP}, as well as 
other problems. We have made some modest progress towards proving this, but a solution still
seems far. 

Another central open problem is whether $\problem{SVP}$ can be solved in singly exponential time but only polynomial space. The best running time known for polynomial-space algorithms is the $n^{O(n)}$ obtained by enumeration-based methods~\cite{Kannan87,Helfrich86,HanrotStehle07,MicciancioWalter14}.

Finally, this work shows that Discrete Gaussian Sampling is a lattice problem of central importance. However, \problem{DGS} for parameters below smoothing is not nearly as well-understood as many other lattice problems, and many natural questions remain open. For example, is there a dimension-preserving reduction from \problem{DGS} to \problem{CVP}? (This question was answered by \cite{DGStoSVP} after a preliminary version of this work appeared.) Is (centered) \problem{DGS} NP-hard?

\paragraph{Follow-up work.}
In a follow-up work by three of us~\cite{ADS15} we generalize Theorem~\ref{thm:gen-dgs-inf} by presenting a $2^{n+o(n)}$-time algorithm to sample from the \emph{shifted} discrete Gaussian $D_{\lat - \vec{t}, s}$ for any $\vec{t} \in \R^n$ and $s>0$. 
As an application of that algorithm, we show in~\cite{ADS15} how to obtain a $2^{n+o(n)}$-time algorithm for \emph{exact} \problem{CVP}
(which is a harder problem than $\problem{SVP}$, as follows from the dimension-preserving reduction in~\cite{GMSS99}). 
While those results are stronger than some of the results presented in this paper, the 
proofs in~\cite{ADS15} are also significantly more involved.

\paragraph{Organization.} In Section~\ref{sec:prelims}, we overview the
necessary background material and give the basic definitions used throughout
the paper. In Section~\ref{sec:main-dgs}, we give our general $2^{n+o(n)}$-time DGS
sampler (Theorem~\ref{thm:DGS}). In Section~\ref{sec:SVP}, we prove
our bound on the number of discrete Gaussian samples needed for $\problem{SVP}$
(Lemma~\ref{lem:Gaussian-sum-bound} and Proposition~\ref{prop:boundonmass-KL}) and give our reduction from \problem{SVP} to \problem{DGS} (Theorem~\ref{thm:svptodgs}). In Section~\ref{sec:abovesmooth}, we
give our $2^{n/2+o(n)}$-time DGS sampler for parameters above smoothing
(Theorem~\ref{thm:smoothDGS}). In Section~\ref{sec:gapSVP}, we show our reduction from \problem{GapSVP} to \problem{DGS} above smoothing (Theorem~\ref{thm:gapSVP}). Finally, in Section~\ref{sec:other}, we show our
$2^{n+o(n)}$-time algorithm for $1.97$-approximate $\problem{CVP}$
(Theorem~\ref{thm:cvptodgs}) and our $2^{n/2 + o(n)}$-time algorithms 
for 
$.422$-\problem{BDD} (Corollary~\ref{cor:bdd}) and 
$O(\sqrt{n \log n})$-approximate \problem{SIVP} (Corollary~\ref{cor:sivp}).

\section{Preliminaries}
\label{sec:prelims}

Let $\N = \{0,1,\ldots, \}$. Except where we specify otherwise, we use $C$, $C_1$, and $C_2$ to denote universal positive constants, which might differ from one occurrence to the next. We use bold letters $\vec{x}$ for vectors and denote a vector's coordinates with indices $x_i$. Throughout the paper, $n$ will always be the dimension of the ambient space $\R^n$.

\subsection{Lattices}

A rank $d$ lattice $\lat\subset \R^n$ is the set of all integer linear combinations of $d$ linearly independent vectors $\basis = (\vec{b}_1, \ldots, \vec{b}_d )$. $\basis$ is called a basis of the lattice and is not unique. Formally, a lattice is represented by a basis $\basis$ for computational purposes, though for simplicity we often do not make this explicit.  If $n = d$, we say that the lattice has full rank, and we often assume this as results for full-rank lattices naturally imply results for arbitrary lattices. 

Given a basis, $(\vec{b}_1,\ldots, \vec{b}_d)$, we write $\lat(\vec{b}_1,\ldots, \vec{b}_d)$ to denote the lattice with basis $(\vec{b}_1,\ldots, \vec{b}_d)$. The length of a shortest non-zero vector in the lattice is written $\lambda_1(\lat)$. For a vector $\vec{t} \in \R^n$, we write $\dist(\vec{t}, \lat)$ to denote the distance between $\vec{t}$ and the lattice, $\min_{\vec{y} \in \lat}(\length{\vec{t} - \vec{y}})$.

For a lattice $\lat \subset \R^n$, the dual lattice, denoted $\lat^*$, is defined as the set of all points in $\spn (\lat)$ that have integer inner products with all lattice points,
\[ \lat^* = \{ \vec{w} \in \spn(\lat) : \forall \vec{y} \in \lat, \inner{\vec{w},\vec{y}} \in \Z \}\;. \]
Similarly, for a lattice basis $\basis = (\vec{b}_1,\ldots, \vec{b}_d)$, we define the dual basis $\basis^*=(\vec{b}_1^*,\ldots, \vec{b}_d^*)$ to be the unique set of vectors in $\spn(\lat)$ satisfying $\inner{ \vec{b}_i^*, \vec{b}_j} = \delta_{i,j} $. It is easy to show that $\lat^*$ is itself a rank $d$ lattice and $\basis^*$ is a basis of $\lat^*$.

\begin{definition}
For a lattice $\lat$, the $i$th successive minimum of $\lat$ is
\[ \lambda_i(\lat) = \inf \{ r : \dim (\spn (\lat \cap B(\vec0, r))) \geq i \}  \;.\]
\end{definition}

In other words, the $i$th successive minimum of $\lat$ is the smallest value $r$ such that there are $i$ linearly independent vectors in $\lat$ of length at most $r$. 

\subsection{The discrete Gaussian distribution}

For any $s>0$, we define the function $\rho_s : \R^n \rightarrow\R$ as $\rho_s(\vec{t}) = \exp(-\pi \length{\vec{t}}^2/s^2)$. When $s=1$, we simply write $\rho(\vec{t})$. For a discrete set $A \subset \R^n$ we define $\rho_s(A)=\sum_{\vec{x}\in A} \rho_s(\vec{x})$. 
\begin{definition} 
For a lattice $\lat \subset \R^n$ and a vector $\vec{t} \in \R^n$, let $D_{\lat + \vec{t},s}$ be the probability distribution over $\lat + \vec{t}$ such that the probability of drawing $\vec{x} \in \lat + \vec{t}$ is proportional to $\rho_{s}(\vec{x})$. We call this the discrete Gaussian distribution over $\lat + \vec{t}$ with parameter $s$.
\end{definition}

We make frequent use of the discrete Gaussian over the cosets of a sublattice. If $\lat' \subseteq \lat$ is a sublattice of $\lat$, then the set of cosets, $\lat/\lat'$ is the set of translations of $\lat'$ by lattice vectors, $\vec{c} = \lat' + \vec{y}$ for some $\vec{y} \in \lat$. It is easily seen from the Poisson summation formula that for any $\coset \in \lat/\lat'$, $\rho_s(\lat') \geq \rho_s(\coset)$, i.e., the zero coset has maximal weight (see, e.g., \cite{banaszczyk}). We use this fact throughout the paper. In particular, it follows that $\rho_s(\lat)/\rho_s(\lat') \leq |\lat/\lat'|$.

Banaszczyk proved the following two bounds on the discrete Gaussian~\cite{banaszczyk}.

\begin{lemma}[{\cite[Lemma 1.4]{banaszczyk}}]
\label{lem:banaszczyk} 
For any lattice $\lat\subset\R^n$ and $s > 1$,
\[
\rho_s(\lat) \leq s^n \rho(\lat)
\;.
\]
\end{lemma}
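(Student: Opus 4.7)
The plan is to use the Poisson summation formula together with the fact that Gaussians are fixed (up to scaling) by the Fourier transform. First I would recall that, under the convention $\rho_s(\vec{x}) = e^{-\pi \|\vec{x}\|^2/s^2}$, the Fourier transform is $\widehat{\rho_s} = s^n \rho_{1/s}$, so applying Poisson summation to the rank-$n$ lattice $\lat$ (it suffices to treat the full-rank case, since $\rho_s$ depends only on the norm and the argument is intrinsic to $\spn(\lat)$) gives
\[
\rho_s(\lat) \;=\; \frac{s^n}{\det(\lat)}\,\rho_{1/s}(\lat^*),
\qquad
\rho(\lat) \;=\; \frac{1}{\det(\lat)}\,\rho(\lat^*).
\]

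Dividing these two identities, the determinant factors cancel and I obtain
\[
\frac{\rho_s(\lat)}{\rho(\lat)} \;=\; s^n \cdot \frac{\rho_{1/s}(\lat^*)}{\rho(\lat^*)}.
\]
Now comes the only inequality in the proof: since $s > 1$, we have $\rho_{1/s}(\vec{y}) = e^{-\pi s^2 \|\vec{y}\|^2} \leq e^{-\pi \|\vec{y}\|^2} = \rho(\vec{y})$ for every $\vec{y} \in \R^n$, and therefore $\rho_{1/s}(\lat^*) \leq \rho(\lat^*)$ by summing term-by-term over $\lat^*$. Substituting back yields $\rho_s(\lat) \leq s^n \rho(\lat)$, which is the claim.

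There is no serious obstacle here; the whole argument is essentially a one-line consequence of Poisson summation once one knows $\widehat{\rho_s} = s^n \rho_{1/s}$. The only thing worth being careful about is the normalization convention for the Fourier transform (the paper's $\rho_s$ corresponds to $a = 1/s^2$ in $e^{-\pi a \|\vec{x}\|^2}$, giving the factor $s^n$ rather than $s^{-n}$), and the implicit use of the full-rank case, which is justified because both sides of the inequality and the Fourier-analytic identity are intrinsic to $\spn(\lat)$.
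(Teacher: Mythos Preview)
Your proof is correct and is exactly the standard Poisson-summation argument. The paper does not give its own proof of this lemma; it simply cites it from Banaszczyk~\cite{banaszczyk}, where the same Poisson-summation idea is used.
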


\begin{lemma}[{\cite[Lemma 2.13]{cvpp}}]
\label{lem:banaszczyktail} 
For any lattice $\lat\subset\R^n$, $s > 0$, $\vec{u} \in \R^n$, and $t \geq1/\sqrt{2\pi}$,
\[
\Pr_{\vec{X} \sim D_{\lat + \vec{u}, s}}[\length{\vec{X}} > t s\sqrt{n} ] < \frac{\rho_s(\lat)}{\rho_s(\lat+ \vec{u})}\big( \sqrt{2 \pi e t^2} \exp(-\pi t^2) \big)^n
\; .
\]

\end{lemma}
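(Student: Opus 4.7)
The plan is to follow Banaszczyk's classical Markov-style argument, combined with the already-stated Lemma~\ref{lem:banaszczyk}. First I would rescale to the case $s = 1$: the map $\vec{x} \mapsto \vec{x}/s$ is a bijection between $\lat + \vec{u}$ and $\tfrac{1}{s}(\lat + \vec{u})$ sending $\rho_s$ to $\rho$, and both sides of the target inequality transform identically under this rescaling, so it suffices to prove the $s=1$ case.

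The heart of the argument is a ``Markov-with-an-auxiliary-parameter'' bound. For any $c \in (0,1]$ and any $\vec{x}$ with $\length{\vec{x}} > t\sqrt{n}$, we may write
\[
\rho(\vec{x}) \;=\; e^{-\pi(1-c^2)\length{\vec{x}}^2} \cdot \rho_{1/c}(\vec{x}) \;\leq\; e^{-\pi(1-c^2) t^2 n} \cdot \rho_{1/c}(\vec{x}),
\]
with strict inequality when $c < 1$. Summing this over $\vec{x} \in (\lat+\vec{u}) \setminus t\sqrt{n}\ball$, using $\rho_{1/c}(\lat+\vec{u}) \leq \rho_{1/c}(\lat)$ (the zero coset has maximal weight, as observed in the preliminaries), and applying Lemma~\ref{lem:banaszczyk} with parameter $1/c \geq 1$ to obtain $\rho_{1/c}(\lat) \leq c^{-n}\rho(\lat)$, yields
\[
\Pr_{\vec{X} \sim D_{\lat+\vec{u}}}\!\bigl[\length{\vec{X}} > t\sqrt{n}\bigr] \;\leq\; c^{-n}\, e^{-\pi(1-c^2) t^2 n} \cdot \frac{\rho(\lat)}{\rho(\lat+\vec{u})}.
\]

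Finally I would optimize the prefactor over $c$. Setting the derivative of $-n\log c + \pi c^2 t^2 n$ to zero gives the minimizer $c = 1/(t\sqrt{2\pi})$, and this value lies in the admissible range $(0,1]$ precisely when $t \geq 1/\sqrt{2\pi}$---this is where the hypothesis is used. Substituting yields $c^{-n} e^{\pi c^2 t^2 n} = (2\pi t^2)^{n/2} \cdot e^{n/2} = (2\pi e t^2)^{n/2}$, which combined with the remaining factor $e^{-\pi t^2 n}$ produces the claimed prefactor $(\sqrt{2\pi e t^2}\, e^{-\pi t^2})^n$. Strictness of the final inequality is immediate from the Markov step whenever $c < 1$ (i.e., $t > 1/\sqrt{2\pi}$); at the boundary $t = 1/\sqrt{2\pi}$ the right-hand side equals $\rho(\lat)/\rho(\lat+\vec{u}) \geq 1$, and one checks strictness separately via the fact that $D_{\lat+\vec{u}}$ places positive probability on vectors of length $\leq t\sqrt{n}$.

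The only substantive idea is the Markov splitting with the parameter $c$; everything else is a direct invocation of Lemma~\ref{lem:banaszczyk} and a one-variable calculus optimization. The main ``obstacle,'' if any, is simply seeing that introducing $c$ and then tuning it optimally produces the correct prefactor, and noticing that the hypothesis $t \geq 1/\sqrt{2\pi}$ is exactly what is needed to keep $c \leq 1$ so that Lemma~\ref{lem:banaszczyk} is applicable.
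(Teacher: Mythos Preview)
The paper does not prove this lemma; it simply cites it from~\cite[Lemma~2.13]{cvpp}. Your argument is the standard Banaszczyk proof and is correct: the rescaling to $s=1$, the splitting $\rho(\vec{x}) = e^{-\pi(1-c^2)\length{\vec{x}}^2}\rho_{1/c}(\vec{x})$, the appeal to $\rho_{1/c}(\lat+\vec{u}) \le \rho_{1/c}(\lat)$ and to Lemma~\ref{lem:banaszczyk}, and the optimization $c = 1/(t\sqrt{2\pi})$ all go through exactly as you describe.

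One small gap: your strictness argument at the boundary $t = 1/\sqrt{2\pi}$ asserts that $D_{\lat+\vec{u}}$ always places positive mass on vectors of length at most $t\sqrt{n}$, but this need not hold for arbitrary $\vec{u}$. The fix is easy: if no such short vector exists then in particular $\vec{0}\notin \lat+\vec{u}$, so $\vec{u}\notin\lat$, and then the Poisson-summation inequality $\rho(\lat+\vec{u}) \le \rho(\lat)$ is \emph{strict}, giving RHS $> 1 \ge$ LHS. Either way the inequality is strict.
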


\begin{definition}
\label{def:smooth}
For a lattice $\lat \subset \R^n$ and $\epsilon > 0$, we define the smoothing parameter $\eta_\epsilon(\lat)$ as the unique value satisfying $\rho_{1/\eta_\epsilon(\lat)}(\lat^* \setminus \{ \vec0 \}) = \epsilon $.
\end{definition}

We note that if $\lat' \subseteq \lat$, then $\eta_\eps(\lat) \leq \eta_\eps(\lat')$, and we have $\eta_\eps(s\lat) = s \eta_\eps(\lat)$. The name smoothing parameter comes from the following fact.

\begin{claim}
\label{clm:smooth}
For any lattice $\lat \subset \R^n$ and $\eps \in (0,1)$, if $s \geq \eta_{\eps}(\lat)$, then
for all $\vec{t} \in \R^n$,
\[\frac{\rho_s(\lat + \vec{t})}{\rho_s(\lat)} \geq \frac{1-\eps}{1+\eps}
\; .
\]
\end{claim}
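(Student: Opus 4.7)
The plan is to apply the Poisson summation formula to $\rho_s(\lat+\vec{t})$ and then use the definition of the smoothing parameter to bound the resulting Fourier-side expression. This is essentially the standard Micciancio--Regev argument.

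First, I would recall that the Fourier transform of $\rho_s$ is $\widehat{\rho_s}(\vec{w}) = s^n \rho_{1/s}(\vec{w})$. Applying Poisson summation to the shifted lattice sum yields
\[
\rho_s(\lat+\vec{t}) \;=\; \frac{s^n}{\det(\lat)} \sum_{\vec{w}\in\lat^*} \rho_{1/s}(\vec{w})\, e^{2\pi i \langle \vec{w},\vec{t}\rangle}.
\]
Setting $\vec{t}=\vec0$ gives $\rho_s(\lat) = \frac{s^n}{\det(\lat)}\, \rho_{1/s}(\lat^*)$. Dividing, the normalization cancels and I get
\[
\frac{\rho_s(\lat+\vec{t})}{\rho_s(\lat)} \;=\; \frac{1 + \sum_{\vec{w}\in\lat^*\setminus\{\vec0\}} \rho_{1/s}(\vec{w})\, e^{2\pi i \langle \vec{w},\vec{t}\rangle}}{1 + \rho_{1/s}(\lat^*\setminus\{\vec0\})}.
\]

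Next, I would observe that the sum in the numerator is real (pair each $\vec{w}$ with $-\vec{w}$ to obtain cosines) and bounded in absolute value by $\rho_{1/s}(\lat^*\setminus\{\vec0\})$, since $|e^{2\pi i \langle\vec{w},\vec{t}\rangle}|\le 1$ and $\rho_{1/s}(\vec{w})\ge 0$. Hence the numerator is at least $1 - \rho_{1/s}(\lat^*\setminus\{\vec0\})$.

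Finally, I would invoke the definition of the smoothing parameter (Definition~\ref{def:smooth}): since $\rho_{1/s}(\lat^*\setminus\{\vec0\})$ is monotonically decreasing in $s$ and equals $\eps$ at $s=\eta_\eps(\lat)$, the hypothesis $s\ge\eta_\eps(\lat)$ gives $\rho_{1/s}(\lat^*\setminus\{\vec0\})\le\eps$. Plugging this bound into both numerator and denominator yields
\[
\frac{\rho_s(\lat+\vec{t})}{\rho_s(\lat)} \;\ge\; \frac{1-\eps}{1+\eps},
\]
as desired. The only mild subtlety is justifying Poisson summation and verifying the Fourier-transform formula for $\rho_s$; these are standard and I expect no real obstacle. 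The argument is essentially a one-line Fourier calculation once these tools are in place.
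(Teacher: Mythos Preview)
Your argument is correct and is exactly the standard Micciancio--Regev proof via Poisson summation. The paper does not actually include a proof of this claim; it is stated as a known fact, so there is nothing further to compare.
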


Finally, we will need the following basic bounds on the smoothing parameter, the first of which is essentially the same as \cite[Lemma 2.4]{CDLP13}.

\begin{lemma}
\label{lem:doublesmooth}
For any lattice $\lat \subset \R^n$, $\eps \in (0,1)$, and $k > 1$, we have
$k\eta_{\eps}(\lat) > \eta_{\eps^{k^2}}(\lat)
$.
\end{lemma}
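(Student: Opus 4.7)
The plan is to rewrite both sides in terms of the dual lattice sum that defines the smoothing parameter and then reduce to an elementary inequality about $p$th powers of positive reals summing to a fixed total.

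First I would set $s = \eta_\eps(\lat)$, so that by definition
\[
\rho_{1/s}(\lat^* \setminus \{\vec 0\}) = \sum_{\vec y \in \lat^* \setminus \{\vec 0\}} e^{-\pi \|\vec y\|^2 s^2} = \eps .
\]
The key observation is the pointwise identity $\rho_{1/(ks)}(\vec y) = e^{-\pi \|\vec y\|^2 (ks)^2} = \bigl(e^{-\pi \|\vec y\|^2 s^2}\bigr)^{k^2} = \rho_{1/s}(\vec y)^{k^2}$, so that
\[
\rho_{1/(ks)}(\lat^* \setminus \{\vec 0\}) \;=\; \sum_{\vec y \in \lat^* \setminus \{\vec 0\}} \rho_{1/s}(\vec y)^{k^2} .
\]

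Next, I would apply the elementary fact that for any positive reals $a_1, a_2, \ldots$ with $\sum a_i = S$ and any $p > 1$, one has $\sum a_i^p \le S^p$, with strict inequality whenever at least two of the $a_i$ are positive (immediate from pulling out $S^p$ and noting $(a_i/S)^p < a_i/S$ for each $a_i \in (0,S)$). Applied with $a_{\vec y} = \rho_{1/s}(\vec y)$, $S = \eps$ and $p = k^2 > 1$, this yields
\[
\rho_{1/(ks)}(\lat^* \setminus \{\vec 0\}) \;<\; \eps^{k^2},
\]
where strictness is guaranteed because $\lat^*$ is centrally symmetric and so contains both $\vec y$ and $-\vec y$ for every nonzero dual vector, giving at least two positive summands (the case of the trivial lattice being vacuous).

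Finally, since $t \mapsto \rho_{1/t}(\lat^* \setminus \{\vec 0\})$ is a strictly decreasing, continuous function of $t > 0$ (each term $e^{-\pi \|\vec y\|^2 t^2}$ is strictly decreasing in $t$), the value $t = \eta_{\eps^{k^2}}(\lat)$ is the unique $t$ at which this function equals $\eps^{k^2}$. The strict inequality displayed above, evaluated at $t = ks$, therefore gives $ks > \eta_{\eps^{k^2}}(\lat)$, which is exactly the claim. I do not foresee any real obstacle; the only point requiring a sentence of care is justifying the strictness of the power-sum inequality, which follows from the central symmetry of $\lat^*$.
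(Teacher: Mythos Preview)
Your proposal is correct and follows essentially the same approach as the paper: normalize, use the pointwise identity $\rho_{1/(ks)}(\vec y)=\rho_{1/s}(\vec y)^{k^2}$, and apply $\sum a_i^{p}<(\sum a_i)^{p}$ for $p>1$. You are simply more explicit than the paper about the strictness (via central symmetry of $\lat^*$) and about the final monotonicity step.
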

\begin{proof}
Suppose without loss of generality that $\eta_{\eps}(\lat) = 1$. Then,
\begin{align*}
\rho_{1/k}(\lat^* \setminus \{\vec0\}) &= \sum_{\vec{y} \in \lat^* \setminus \{ \vec0\}} \rho(\lat)^{k^2}
< \Big( \sum_{\vec{y} \in \lat^* \setminus \{ \vec0\}} \rho(\lat)\Big)^{k^2} = \eps^{k^2}
\; .
\qedhere
\end{align*}
\end{proof}

\begin{lemma}\label{lem:etalambdansqrt} For any lattice $\lat \subset \R^n$ and $\eps = 0.99$,
\[
\eta_{\epsilon}(\lat) \geq C\frac{\lambda_n(\lat)}{\sqrt{n}} \; .\]
\end{lemma}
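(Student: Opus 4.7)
My plan is to prove this lower bound on the smoothing parameter by combining Banaszczyk's transference theorem with a direct lower bound on the Gaussian mass of the dual lattice.

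I would start by fixing $s = \eta_\epsilon(\lat)$ with $\epsilon = 0.99$, so that by the definition of the smoothing parameter $\rho_{1/s}(\lat^* \setminus \{\vec{0}\}) = 0.99$. Let $\vec{v}_0 \in \lat^*$ be a shortest nonzero vector, so $\|\vec{v}_0\| = \lambda_1(\lat^*)$. The pair $\pm \vec{v}_0$ alone contributes $2 e^{-\pi s^2 \lambda_1(\lat^*)^2}$ to $\rho_{1/s}(\lat^* \setminus \{\vec{0}\})$, so
\[
2 e^{-\pi s^2 \lambda_1(\lat^*)^2} \le \rho_{1/s}(\lat^* \setminus \{\vec{0}\}) = 0.99,
\]
which rearranges to $s \cdot \lambda_1(\lat^*) \ge \sqrt{\ln(2/0.99)/\pi}$. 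Banaszczyk's transference theorem gives $\lambda_1(\lat^*) \cdot \lambda_n(\lat) \le n$, and combining this with the above yields $\eta_\epsilon(\lat) \ge C_0 \lambda_n(\lat)/n$ for a universal constant $C_0$.

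To improve the $n$ in the denominator to $\sqrt{n}$ as in the statement, I expect one must exploit more than a single pair of shortest dual vectors. One natural path is to use the fact that Banaszczyk's transference in fact guarantees $n$ linearly independent dual vectors of length at most $\lambda_n(\lat^*) \le n/\lambda_1(\lat)$, and to lower-bound the cumulative Gaussian mass of the sublattice of $\lat^*$ that they generate via Poisson summation. A cleaner alternative (and the one I would execute) is to invoke the iterated-smoothing bound of Lemma~\ref{lem:doublesmooth} with parameter $k = \sqrt{n}$: this gives $\sqrt{n} \cdot \eta_\epsilon(\lat) \ge \eta_{\epsilon^n}(\lat)$, so it suffices to lower-bound $\eta_{\epsilon^n}(\lat)$. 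Applying the two-shortest-dual-vector argument above at the much smaller error threshold $\epsilon^n = 0.99^n$ produces $s \cdot \lambda_1(\lat^*) \ge \sqrt{\ln(2/0.99^n)/\pi} = \Omega(\sqrt{n})$, and combined with Banaszczyk's transference this yields $\eta_{\epsilon^n}(\lat) \gtrsim \sqrt{n} \cdot \lambda_n(\lat)/n = \lambda_n(\lat)/\sqrt{n}$, whence $\eta_\epsilon(\lat) \gtrsim \lambda_n(\lat)/\sqrt{n}$ after dividing by the $\sqrt{n}$ from Lemma~\ref{lem:doublesmooth}.

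The main obstacle will be ensuring that the constants line up after the application of Lemma~\ref{lem:doublesmooth}: the factor of $k = \sqrt{n}$ introduced by the doubling lemma must be offset exactly by the $\sqrt{n}$ factor arising from the $\ln(2/\epsilon^{k^2})$ term in the shortest-pair bound, so I would need to track constants carefully to verify that the combination produces a genuinely positive constant $C$ in the final inequality. The only other subtlety is a mild one: the bound on $\pm \vec{v}_0$ relies on $\lambda_1(\lat^*) > 0$, which is automatic for any lattice, so no edge cases arise.
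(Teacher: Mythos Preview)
Your proposed argument has a genuine gap in the final step, and it is not a matter of constants but of a full factor of $\sqrt{n}$. You correctly derive $\eta_{\eps^n}(\lat)\cdot\lambda_1(\lat^*) \gtrsim \sqrt{n}$ from the pair $\pm\vec{v}_0$, and then transference $\lambda_1(\lat^*)\lambda_n(\lat)\le n$ gives $\eta_{\eps^n}(\lat)\gtrsim \lambda_n(\lat)/\sqrt{n}$. But Lemma~\ref{lem:doublesmooth} says $\sqrt{n}\cdot\eta_\eps(\lat) > \eta_{\eps^n}(\lat)$, so dividing by $\sqrt{n}$ yields only $\eta_\eps(\lat)\gtrsim \lambda_n(\lat)/n$, not $\lambda_n(\lat)/\sqrt{n}$. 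In fact the doubling trick can never help here: for any $k$, the gain $\sqrt{\log(1/\eps^{k^2})}=k\sqrt{\log(1/\eps)}$ in the shortest-pair bound is exactly cancelled by the factor $k$ lost in Lemma~\ref{lem:doublesmooth}, so you are always left with $\eta_\eps(\lat)\gtrsim \sqrt{\log(1/\eps)}/\lambda_1(\lat^*)\gtrsim \lambda_n(\lat)/n$. The bottleneck is the transference inequality $\lambda_1(\lat^*)\lambda_n(\lat)\le n$, which is in general tight and costs a full factor of $n$.

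The paper avoids transference altogether. It argues on the primal side: if $\lambda_n(\lat)$ were much larger than $\sqrt{n}\,\eta_\eps(\lat)$, then the covering radius would exceed a large multiple of $\sqrt{n}\,\eta_\eps(\lat)$, so there is a shift $\vec{u}$ with $\dist(\vec{u},\lat)\gg \sqrt{n}\,\eta_\eps(\lat)$. Banaszczyk's tail bound (Lemma~\ref{lem:banaszczyktail}) then forces $\rho_{\eta_\eps(\lat)}(\lat+\vec{u})$ to be exponentially small relative to $\rho_{\eta_\eps(\lat)}(\lat)$, contradicting Claim~\ref{clm:smooth}, which says this ratio is at least $(1-\eps)/(1+\eps)$. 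The key point is that the tail bound is already sharp at radius $\Theta(\sqrt{n})\cdot s$, which is exactly what produces the $\sqrt{n}$ in the denominator rather than $n$.
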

\begin{proof} 
Suppose $\lambda_n(\lat) > 500\sqrt{n}\eta_{\epsilon}(\lat)$. Then there exists a $\vec{u} \in \R^n$ such that $\dist(\vec{u},\lat) > 250\sqrt{n}\eta_{\epsilon}(\lat)$. Then, using Lemma~\ref{lem:banaszczyktail},
\[
\rho_{\eta_{\epsilon}(\lat)}(\lat + \vec{u})
= \rho_{\eta_{\epsilon}(\lat)}((\lat+\vec{u})\setminus  B(\vec{0}, 250\sqrt{n}\eta_{\epsilon}(\lat)))
\leq 200^{-n} \rho_{\eta_{\epsilon}(\lat)}(\lat).
\]
Using Claim~\ref{clm:smooth}, this gives
\[
\eta_{\epsilon}(\lat)^n\det(\lat^*)(1-\epsilon) \leq 200^{-n} \eta_{\epsilon}(\lat)^n\det(\lat^*)(1+\epsilon),
\]
which is a contradiction.
\end{proof}

\begin{lemma}[{\cite[Lemma 2.8]{MP12}}]
\label{lem:subgaussianity}
For any lattice $\lat \subset \R^n$, $s > 0$, $t > 0$, and unit vector $\vec{v} \in \R^n$,
\[
\Pr_{\vec{X} \sim D_{\lat, s}}[\abs{\inner{\vec{X}, \vec{v}}} \geq t ] \leq 2 e^{-\pi t^2/s^2}
\; .
\]
\end{lemma}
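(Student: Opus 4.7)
The plan is to establish subgaussianity via a Chernoff-style moment generating function (MGF) argument, which is the standard method for such tail bounds on discrete Gaussians. First I will bound $\Pr[\inner{\vec{X},\vec{v}}\geq t]$; the corresponding lower-tail bound will follow by symmetry (since $D_{\lat,s}(\vec{x}) = D_{\lat,s}(-\vec{x})$), and a union bound produces the factor of $2$.

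By Markov's inequality applied to $e^{\alpha \inner{\vec{X},\vec{v}}}$ for a parameter $\alpha>0$ to be chosen later, it suffices to control the MGF
\[
\E_{\vec{X}\sim D_{\lat,s}}\!\bigl[e^{\alpha\inner{\vec{X},\vec{v}}}\bigr] \;=\; \frac{1}{\rho_s(\lat)} \sum_{\vec{x}\in\lat} e^{\alpha\inner{\vec{x},\vec{v}}}\rho_s(\vec{x}).
\]
The key computation is to complete the square in the exponent: for each $\vec{x}\in\lat$, writing $\vec{u} := \tfrac{\alpha s^2}{2\pi}\vec{v}$, one has
\[
-\frac{\pi\length{\vec{x}}^2}{s^2} + \alpha\inner{\vec{x},\vec{v}} \;=\; -\frac{\pi\length{\vec{x}-\vec{u}}^2}{s^2} + \frac{\alpha^2 s^2}{4\pi}.
\]
Summing over $\vec{x}\in\lat$ yields $\sum_{\vec{x}\in\lat} e^{\alpha\inner{\vec{x},\vec{v}}}\rho_s(\vec{x}) = e^{\alpha^2 s^2/(4\pi)}\,\rho_s(\lat - \vec{u})$.

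Next I need the fact that $\rho_s(\lat - \vec{u})\leq \rho_s(\lat)$ for any shift $\vec{u}\in\R^n$. This is the standard Banaszczyk/Poisson-summation inequality: the Fourier expansion $\rho_s(\lat-\vec{u}) = (s^n/\det\lat)\sum_{\vec{y}\in\lat^*} e^{-2\pi i\inner{\vec{u},\vec{y}}}\rho_{1/s}(\vec{y})$ is maximized in absolute value when $\vec{u}=\vec0$ since all Fourier coefficients are nonnegative. (The paper uses the analogous sublattice-coset version in the preliminaries; the shifted version is the same fact applied to $\lat$ itself.) Plugging this in gives $\E[e^{\alpha\inner{\vec{X},\vec{v}}}]\leq e^{\alpha^2 s^2/(4\pi)}$, so
\[
\Pr[\inner{\vec{X},\vec{v}}\geq t] \;\leq\; e^{-\alpha t + \alpha^2 s^2/(4\pi)}.
\]

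Finally I optimize over $\alpha$: setting $\alpha = 2\pi t/s^2$ minimizes the exponent and yields the value $-\pi t^2/s^2$, giving $\Pr[\inner{\vec{X},\vec{v}}\geq t]\leq e^{-\pi t^2/s^2}$. The symmetric argument for the negative tail and a union bound complete the proof. There is no serious obstacle here; the only subtle point is the appeal to $\rho_s(\lat-\vec{u})\leq\rho_s(\lat)$ for arbitrary real shifts, which I would either cite as a standard consequence of Poisson summation or sketch via the Fourier expansion above.
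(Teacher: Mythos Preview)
Your proof is correct and is exactly the standard MGF/Chernoff argument for subgaussianity of the discrete Gaussian. Note, however, that the paper does not actually prove this lemma---it is stated as a citation to \cite[Lemma 2.8]{MP12} without proof---so there is no ``paper's own proof'' to compare against. Your argument is essentially the one that appears in the cited source, and the one step you flag as subtle, $\rho_s(\lat-\vec{u})\leq\rho_s(\lat)$ for arbitrary real shifts, is indeed the standard Poisson-summation fact (the paper alludes to the coset version of this in the preliminaries just before Lemma~\ref{lem:banaszczyk}).
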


\subsection{The Gram-Schmidt orthogonalization}

Given a basis, $\basis = (\vec{b}_1,\ldots, \vec{b}_n)$, we define its Gram-Schmidt orthogonalization $(\gs{\vec{b}}_1,\ldots, \gs{\vec{b}}_n)$ by
\[  \gs{\vec{b}}_i = \pi_{\{b_1, \ldots, b_{i-1} \}^\perp}(\vec{b}_i) \; . \]
Here, $\pi_A$ is the orthogonal projection on the subspace $A$ and $\{b_1, \ldots, b_{i-1} \}^\perp$ denotes the subspace orthogonal to $b_1, \ldots, b_{i-1}$.

\subsection{Lattice problems}

The following problem plays a central role in this paper.

\begin{definition}
\label{def:dgs}
For $\eps = \eps(n) \geq 0$, $\sigma$ a function that maps lattices to non-negative real numbers, and $m = m(n) \in \N$, $\DGS{\eps}{\sigma}{m}$ (the Discrete Gaussian Sampling problem) is defined as follows: 
The input is a basis $\basis$ for a lattice $\lat \subset \R^n$ and a parameter $s > \sigma(\lat)$. The goal is to output a sequence of $m$ vectors whose joint distribution is $\eps$-close to $D_{\lat, s}^m$.
\end{definition}

We omit the parameter $\eps$ if $\eps = 0$, the parameter $\sigma $ if $\sigma = 0$, and the parameter $m$ if $m = 1$. We stress that $\eps$ bounds the statistical distance between the \emph{joint} distribution of the output vectors and $m$ independent samples of $D_{\lat,s}$. 

For our applications, we consider the following lattice problems.

\begin{definition}
The search problem $\problem{SVP}$ (Shortest Vector Problem) is defined as follows: The input is a basis $\basis$ for a lattice $\lat \subset \R^n$. The goal is to output a vector $\vec{y} \in \lat$ with $\length{\vec{y}} = \lambda_1(\lat)$.
\end{definition}

\begin{definition}
For $\gamma = \gamma(n) \geq 1$ (the approximation factor), the decision problem $\gamma\text{-}\problem{GapSVP}$ is defined as follows: The input is a basis $\basis$ for a lattice $\lat \subset \R^n$ and a number $d > 0$. The goal is to output yes if $\lambda_1(\lat) < d$ and no if $\lambda_1(\lat) \geq \gamma \cdot d$.
\end{definition}

\begin{definition}
For $\gamma = \gamma(n) \geq 1$ (the approximation factor), the search problem $\gamma\text{-}\problem{CVP}$ (Closest Vector Problem) is defined as follows: The input is a basis $\basis$ for a lattice $\lat \subset \R^n$ and a target vector $\vec{t} \in \R^n$. The goal is to output a vector $\vec{y} \in \lat $ with $\length{\vec{y} - \vec{t}} \leq \gamma \cdot \dist(\vec{t}, \lat)$.
\end{definition}

\begin{definition}
For $\alpha = \alpha(n) < 1/2$ (the approximation factor), the search problem $\alpha\text{-}\problem{BDD}$ (Bounded Distance Decoding) is defined as follows: The input is a basis $\basis$ for a lattice $\lat \subset \R^n$ and a target vector $\vec{t} \in \R^n$ with $\dist(\vec{t}, \lat) \leq \alpha \cdot \lambda_1(\lat)$. The goal is to output a closest lattice vector to $\vec{t}$.
\end{definition}

Note that, while our other problems become more difficult as the approximation factor $\gamma$ becomes smaller, $\alpha\text{-}\problem{BDD}$ becomes more difficult as $\alpha$ gets larger. For convenience, when we discuss the running time of algorithms solving the above problems, we ignore polynomial factors in the bit-length of the individual input basis vectors. (I.e., we consider only the dependence on the ambient dimension $n$.)

\subsection{Some lattice algorithms}

The following theorem was proven by Ajtai, Kumar, and Sivakumar~\cite{AKS01}, building on work of Schnorr~\cite{Schnorr87}.
(While we use the AKS algorithm repeatedly in the sequel for convenience, we note that we could instead use the conceptually simpler algorithm from~\cite{Schnorr87} to obtain asymptotically identical results.)
\begin{theorem}
\label{thm:BKZ}
There is an algorithm that takes as input a lattice $\lat \subset \R^n$ and $r \geq 2$ and outputs an $r^{n/r}$-reduced basis of $\lat$ in time $\exp(O(r)) \cdot \poly(n)$, where we say that a basis $\basis = (\vec{b}_1,\ldots, \vec{b}_n)$ of a lattice $\lat$ is $\gamma$-reduced for some $\gamma \geq 1$ if 
\begin{enumerate}
\item $\length{\vec{b}_1} \leq \gamma \cdot \lambda_1(\lat)$; and
\item $\pi_{\{ \vec{b_1} \}^\perp}(\vec{b}_2), \ldots, \pi_{\{ \vec{b}_1 \}^\perp}(\vec{b}_n)$ is a $\gamma$-reduced basis of $\pi_{\{ \vec{b_1} \}^\perp}(\lat)$.
\end{enumerate}
\end{theorem}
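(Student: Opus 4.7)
The plan is to combine an exact SVP solver in dimension $r$ with Schnorr's block-reduction strategy, and then apply the resulting construction recursively to projected sublattices in order to enforce the recursive part of the $\gamma$-reduced definition.

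First, I would invoke the AKS sieve as a black-box exact SVP subroutine, which runs in $\exp(O(r))$ time on any $r$-dimensional (possibly projected) sublattice. Plugging this oracle into a block-$r$-reduction procedure in the style of~\cite{Schnorr87}, I would maintain a basis of $\lat$; repeatedly call the oracle on each window $\pi_{\{\vec{b}_1, \ldots, \vec{b}_{i-1}\}^\perp}\bigl(\lat(\vec{b}_i, \ldots, \vec{b}_{i+r-1})\bigr)$; replace $\vec{b}_i$ whenever a strictly shorter projected vector is found; and interleave with LLL or size-reductions to keep all coefficient bit-lengths polynomial in the input. A standard potential argument on $\prod_i \length{\gs{\vec{b}}_i}$ shows termination in $\poly(n)$ oracle calls, and the usual block-reduction analysis then gives $\length{\vec{b}_1} \leq r^{n/r} \cdot \lambda_1(\lat)$, verifying the first condition of the definition.

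Second, to enforce the recursive condition that $\pi_{\{\vec{b}_1\}^\perp}(\vec{b}_2), \ldots, \pi_{\{\vec{b}_1\}^\perp}(\vec{b}_n)$ itself be $r^{n/r}$-reduced, I would recurse on the rank-$(n{-}1)$ projected lattice $\pi_{\{\vec{b}_1\}^\perp}(\lat)$, obtaining vectors $\vec{c}_2, \ldots, \vec{c}_n$ that form an $r^{(n-1)/r}$-reduced (and hence a fortiori $r^{n/r}$-reduced) basis there. Lifting each $\vec{c}_i$ back to a preimage $\vec{b}_i \in \lat$ via size-reduction (adding an appropriate integer multiple of $\vec{b}_1$) leaves the projections unchanged, so $(\vec{b}_1, \vec{b}_2, \ldots, \vec{b}_n)$ is $r^{n/r}$-reduced by induction on $n$. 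The recursion has depth at most $n$, and each level runs in $\exp(O(r)) \cdot \poly(n)$ time, so the total running time is still $\exp(O(r)) \cdot \poly(n)$ as claimed.

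The main technical obstacle is establishing both polynomial-time termination of the block-reduction loop and the precise approximation factor $r^{n/r}$ (as opposed to a weaker bound like $r^{(n-1)/(r-1)}$ that falls out of the most naive analysis); these points are the content of~\cite{Schnorr87} (combined with the $\exp(O(r))$-time exact SVP routine of~\cite{AKS01}), which I would invoke as a black box rather than re-derive in this preliminaries section.
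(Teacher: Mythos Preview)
The paper does not actually prove this theorem; it is stated in the preliminaries as a known result, attributed to~\cite{AKS01} building on~\cite{Schnorr87}, with no proof given. Your sketch is consistent with that attribution: block-reduce using an exact $\exp(O(r))$-time SVP oracle on $r$-dimensional windows (AKS), then recurse on the projection to enforce the recursive clause of the definition. So you are aligned with what the paper invokes.

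One small technical slip worth fixing if you ever write this out in full: the potential $\prod_i \length{\gs{\vec{b}}_i}$ you propose for termination is just $\det(\lat)$, which is invariant under basis changes and hence useless as a potential. The standard argument uses something like $\prod_i \det(\lat(\vec{b}_1,\ldots,\vec{b}_i))$ (equivalently $\prod_i \length{\gs{\vec{b}}_i}^{n-i+1}$), which strictly decreases with each successful oracle replacement. Since you explicitly say you would invoke~\cite{Schnorr87,AKS01} as a black box for exactly this point, this does not affect the correctness of your proposal, but the stated potential as written would not work.
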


In order to initialize our algorithm, we will need to use a Gaussian sampler such as the one 
given by Gentry, Peikert, and Vaikuntanathan~\cite{GPV08}.
For convenience, we use the following 
modest strengthening of this result, which provides exact samples and gives slightly better bounds on the parameter $s$.
\begin{theorem}[{\cite[Lemma 2.3]{BLPRS13}}]
\label{thm:GPV}
There is a probabilistic polynomial-time algorithm that takes as input a basis $\basis $ for a lattice $\lat \subset \R^n$ and $s > \length{\gs{\basis}}\sqrt{C \log n}$ and outputs a vector that is distributed exactly as $D_{\lat, s}$, where $\length{\gs{\basis}} := \max \length{\gs{\vec{b}}_i}$.
\end{theorem}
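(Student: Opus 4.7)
The plan is to implement the classical Klein/GPV sampler and then apply a rejection sampling step to convert its nearly-exact output into a truly exact sample from $D_{\lat,s}$. Write $\basis = (\vec{b}_1,\dots,\vec{b}_n)$ with Gram--Schmidt vectors $\gs{\vec{b}}_i$ and set $s_i = s/\length{\gs{\vec{b}}_i}$. Klein's algorithm chooses coefficients $z_n,z_{n-1},\dots,z_1$ in reverse order: at step $i$, having fixed $z_{i+1},\dots,z_n$, it sets $c_i = -\langle \sum_{j>i} z_j \vec{b}_j,\gs{\vec{b}}_i\rangle/\length{\gs{\vec{b}}_i}^2$ and samples $z_i$ from the one-dimensional discrete Gaussian $D_{\Z+c_i,\,s_i}$. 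The output is $\vec{v} = \sum_i z_i \vec{b}_i$. A direct expansion in the Gram--Schmidt basis shows that the probability of producing a specific $\vec{v}\in\lat$ is
\[
\Pr[\text{output}=\vec{v}] \;=\; \frac{\rho_s(\vec{v})}{\prod_{i=1}^n \rho_{s_i}(\Z+c_i(\vec{v}))}\,,
\]
so the only obstacle to exactness is that the denominator depends (mildly) on $\vec{v}$ through the $c_i$.

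To cancel this dependence I would use rejection sampling: upon obtaining $\vec{v}$, compute the product $P(\vec{v}) := \prod_i \rho_{s_i}(\Z+c_i(\vec{v}))$ and accept $\vec{v}$ with probability $P(\vec{v})/P_{\max}$, where $P_{\max}$ is any fixed upper bound on $P(\cdot)$ (e.g.\ $P_{\max} := \prod_i \rho_{s_i}(\Z)$, which is indeed an upper bound since the zero coset has maximal weight). After rejection, each $\vec{v}$ is output with probability proportional to $\rho_s(\vec{v})$, so the conditional distribution is exactly $D_{\lat,s}$. The acceptance probability on any round is $\rho_s(\lat)/(P_{\max}\cdot Z)$ where $Z$ is Klein's normalizer, and the hypothesis $s > \length{\gs{\basis}}\sqrt{C\log n}$ will let me make this $\Omega(1)$.

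Indeed, the choice of $C$ is dictated by the smoothing parameter of $\Z$: a standard computation gives $\eta_\eps(\Z) \leq \sqrt{\log(2(1+1/\eps))/\pi}$, so with $C$ sufficiently large each $s_i \geq \eta_{1/n^2}(\Z)$. Claim~\ref{clm:smooth} then implies $\rho_{s_i}(\Z+c_i) \geq (1-1/n^2)/(1+1/n^2)\cdot\rho_{s_i}(\Z)$ for every $\vec{v}$, so $P(\vec{v})/P_{\max} \geq (1-O(1/n))$ uniformly, giving expected $O(1)$ restarts. The same bound shows that the overall output distribution of a single Klein execution is within statistical distance $O(1/n)$ of $D_{\lat,s}$, confirming the rejection sampling is essentially free.

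The remaining ingredient is an exact polynomial-time sampler for the one-dimensional distribution $D_{\Z+c,s'}$, which I regard as the main technicality. The approach is standard: truncate to the window $|z-c|\leq s'\sqrt{C'\log n}$ (which by Lemma~\ref{lem:banaszczyktail} carries all but a negligible fraction of the mass, and can then be corrected by a further rejection step), compute the finitely many $\rho_{s'}(z-c)$ to sufficient bit-precision, and sample from the resulting discrete distribution in the standard way. Since $s_i \geq \eta_{1/n^2}(\Z)$, the number of candidate integers in the window and the required precision are both $\poly(n)$, so each step runs in polynomial time, completing the proof.
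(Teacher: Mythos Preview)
The paper does not give its own proof of this statement; it simply imports it as \cite[Lemma~2.3]{BLPRS13}. Your proposal is essentially the argument that appears there: run the Klein/GPV randomized nearest-plane sampler, observe that the output probability of $\vec{v}$ equals $\rho_s(\vec{v})$ divided by a product of one-dimensional theta values depending on $\vec{v}$, and cancel that product by a rejection step whose acceptance probability is $1-O(1/n)$ once each $s_i$ exceeds the smoothing parameter of~$\Z$.

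Two small points you should tighten if you want the sketch to stand on its own. First, there is a notational slip: you say you sample $z_i$ from $D_{\Z+c_i,s_i}$, but $z_i$ is an integer; what you mean is to sample $z_i\in\Z$ with probability proportional to $\rho_{s_i}(z_i+c_i)$ (equivalently, sample from $D_{\Z-c_i,s_i}$ and shift). Your probability formula is nonetheless correct. Second, both the rejection coin $P(\vec{v})/P_{\max}$ and the one-dimensional normalizers are transcendental, so ``compute to sufficient bit-precision'' is not by itself enough for \emph{exact} sampling. The fix (used in \cite{BLPRS13}) is the standard bit-by-bit comparison: generate uniform bits lazily and compare against the binary expansion of the target probability, computing further bits of the theta ratios on demand; since the inputs are rational and the relevant exponents are rational multiples of~$\pi$, each additional bit is computable in polynomial time, and the expected number of bits needed is $O(1)$. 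With that in place your argument is complete and matches the cited source.
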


Ideally, we would like to use Theorem~\ref{thm:BKZ} and Theorem~\ref{thm:GPV} to solve $\problem{DGS}_\sigma$ for $\sigma = \lambda_1(\lat) \cdot r^{n/r}$ in time $\approx \exp(O(r))$. Unfortunately, this does not work. The problem is that Theorem~\ref{thm:GPV} only allows us to sample from $D_{\lat, s}$ if \emph{all} of the Gram-Schmidt vectors are smaller than $\approx s$. We cannot hope to achieve this even for $s \approx \lambda_1(\lat) \cdot r^{n/r}$. Indeed, there may not even be such a basis! Instead, we show that we can sample from a sublattice for which we can find such a basis, and we show that this sublattice contains all of the \scarequotes{short} lattice points.

\begin{proposition}
\label{prop:startgauss}
There is an algorithm that takes as input a lattice $\lat \subset \R^n$ with 
$n \geq 2$, $2 \leq r \leq O(n)$, $M \in \N$ (the desired number of output vectors), and 
$s > 0$
 and outputs a sublattice $\smalllat$ and $M$ independent samples from $D_{\smalllat, s}$ in time $(2^{O(r)} + M) \cdot \poly(n)$. The sublattice $\smalllat $ contains all vectors in $\lat$ of length at most $r^{-n/r}s$.
Furthermore, if 
\[
s 
\geq (Cr)^{n/r} \cdot \sqrt{n \log n} \cdot \eta_{0.99}(\lat)
\; ,
\] 
then $\smalllat = \lat$.
\end{proposition}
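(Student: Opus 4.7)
The plan is to combine block reduction with the Klein--GPV sampler applied to a carefully chosen sublattice. First I would invoke Theorem~\ref{thm:BKZ} on $\lat$ with block parameter $r$, obtaining in time $2^{O(r)}\cdot\poly(n)$ a $\gamma$-reduced basis $\basis=(\vec{b}_1,\dots,\vec{b}_n)$ of $\lat$ with $\gamma\le r^{n/r}$. Applying condition~(2) of the reducedness definition recursively, the projected family $(\pi_i(\vec{b}_i),\dots,\pi_i(\vec{b}_n))$ is a $\gamma$-reduced basis of $\pi_i(\lat)$, where $\pi_i$ denotes the orthogonal projection onto $\{\vec{b}_1,\dots,\vec{b}_{i-1}\}^\perp$; in particular
\[
\length{\gs{\vec{b}}_i}\;=\;\length{\pi_i(\vec{b}_i)}\;\le\;\gamma\,\lambda_1(\pi_i(\lat))\qquad\text{for every } i\;.
\]

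Next I would truncate the basis: let $k$ be the largest index with $\length{\gs{\vec{b}}_j}\le s/\sqrt{C\log n}$ for all $j\le k$, and set $\smalllat:=\lat(\vec{b}_1,\dots,\vec{b}_k)$. Since the Gram--Schmidt vectors of $(\vec{b}_1,\dots,\vec{b}_k)$ form a prefix of those of $\basis$, the hypothesis of Theorem~\ref{thm:GPV} is satisfied at parameter $s$, so I can draw $M$ i.i.d.\ samples from $D_{\smalllat,s}$ in time $M\cdot\poly(n)$. Combined with the reduction step this gives a total running time of $(2^{O(r)}+M)\cdot\poly(n)$.

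For the containment claim, take $\vec{v}\in\lat\setminus\smalllat$; then $\pi_{k+1}(\vec{v})\neq\vec{0}$, so by maximality of $k$,
\[
\length{\vec{v}}\;\ge\;\length{\pi_{k+1}(\vec{v})}\;\ge\;\lambda_1(\pi_{k+1}(\lat))\;\ge\;\length{\gs{\vec{b}}_{k+1}}/\gamma\;>\;\frac{s}{\gamma\sqrt{C\log n}}\;.
\]
Invoking Theorem~\ref{thm:BKZ} with a mildly enlarged---but still $O(r)$---block parameter absorbs the extra $\sqrt{C\log n}$ factor into $\gamma$, giving $\length{\vec{v}}>r^{-n/r}s$ as required. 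For the final claim, the elementary bound $\lambda_1(\pi_i(\lat))\le\lambda_n(\lat)$ (any $n$ linearly independent lattice vectors of length $\lambda_n(\lat)$ cannot all lie in the $(i-1)$-dimensional kernel of $\pi_i$) combined with Lemma~\ref{lem:etalambdansqrt} gives $\length{\gs{\vec{b}}_i}\le\gamma\,\lambda_n(\lat)\le C^{-1}\gamma\sqrt{n}\,\eta_{0.99}(\lat)$ for every $i$. The hypothesis $s\ge(Cr)^{n/r}\sqrt{n\log n}\,\eta_{0.99}(\lat)$ then forces $\length{\gs{\vec{b}}_i}\le s/\sqrt{C\log n}$ for every $i$, so $k=n$ and $\smalllat=\lat$.

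The only delicate point is book-keeping: the $\sqrt{\log n}$ slack in Theorem~\ref{thm:GPV}, the $\sqrt{n}$ loss from $\lambda_n$--to--$\eta$ comparison, and the $\gamma=r^{n/r}$ blow-up must all be folded cleanly into the advertised exponents in both the containment statement and the smoothing-parameter hypothesis. This is achieved by running Theorem~\ref{thm:BKZ} with an $O(r)$ block parameter rather than exactly $r$ and by subsuming lower-order multiplicative factors into the universal constant $C$, which does not affect the claimed asymptotic running time.
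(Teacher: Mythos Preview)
Your proposal is correct and follows essentially the same approach as the paper's proof: run block reduction (Theorem~\ref{thm:BKZ}) with an $O(r)$ block parameter, truncate the basis at the largest prefix whose Gram--Schmidt vectors are below $s/\sqrt{C\log n}$, and sample from the resulting sublattice via Theorem~\ref{thm:GPV}; the containment and the $\smalllat=\lat$ claims are then derived exactly as you do, via $\lambda_1(\pi_{k+1}(\lat))\ge\length{\gs{\vec{b}}_{k+1}}/\gamma$ and $\length{\gs{\vec{b}}_i}\le\gamma\lambda_n(\lat)$ combined with Lemma~\ref{lem:etalambdansqrt}. The only cosmetic difference is that the paper fixes the enlarged block parameter $C_1 r$ up front rather than retroactively, which makes the constant bookkeeping slightly cleaner.
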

\begin{proof}
On input $\lat$ the algorithm first runs the procedure from Theorem~\ref{thm:BKZ} on $\lat$ with parameter $C_1 r$, receiving output $\basis = (\vec{b}_1,\ldots, \vec{b}_n)$. 
Let $\gs{\vec{b}}_1,\ldots, \gs{\vec{b}}_n$ be the corresponding Gram-Schmidt vectors, and let $k$ be maximal such that $\length{\gs{\vec{b}}_i} \leq s/\sqrt{C_2\log n}$ for all $i \leq k$. The algorithm then runs the procedure from Theorem~\ref{thm:GPV} $M$ times on input $(\vec{b}_1,\ldots, \vec{b}_k)$ and $s $ and outputs the result together with $\smalllat = \lat(\vec{b}_1,\ldots, \vec{b}_k)$.

The running time is clear. It follows immediately from Theorem~\ref{thm:GPV} that the output has the correct distribution. If $k = n$, then we are done. Otherwise, let $\biglat = \pi_{\smalllat^\perp}(\lat)$. By Theorem~\ref{thm:BKZ}, we have that 
 \[
 \lambda_1(\biglat) 
 \geq (C_1r)^{-n/(C_1r)} \length{\gs{\vec{b}}_{k+1}} 
 > (C_1r)^{-n/(C_1r)}s/\sqrt{C_2 \log n} > r^{-n/r} s
 \; .
 \]
The main result follows by noting that $\vec{y} \in \lat \setminus \smalllat$ implies that $\length{\vec{y}} \geq \lambda_1(\biglat) > r^{-n/r}s$.

Finally, we note that $\length{\gs{\vec{b}}_i} \leq (C_1 r)^{n/(C_1r)} \cdot  \lambda_n(\lat)$ for all $i$. It follows that, if $s \geq (Cr)^{n/r} \sqrt{\log n} \cdot \lambda_n(\lat)$, then $k = n$, and therefore $\smalllat = \lat$. The second statement then follows from Lemma~\ref{lem:etalambdansqrt}.
\end{proof}

\subsection{Probability distributions}

\begin{definition}[Poisson distribution]
The Poisson distribution with parameter $\lambda > 0$ is the distribution defined by 
\[
\Pr_{X \sim \Pois(\lambda)}[X = r] = \frac{\lambda^r}{r!}\cdot e^{-\lambda}
\] 
for all $m \in \N$.
\end{definition}

Intuitively, the Poisson distribution is the distribution obtained by, e.g., counting the number of decay events over some fixed time period in some large, homogenous radioactive source. The parameter $\lambda$ is just the expected count.

\begin{lemma}[Poisson tail bounds~\cite{Glynn87}]
\label{lem:glynn}
For $\lambda > 0$ let $X$ be a $\pois(\lambda)$ random variable. Then,
\begin{itemize}
\item
for any $0 \le m < \lambda$,
\[
\Pr(X \le m ) \le \frac{\exp(-\lambda) \lambda^m}{m! ( 1-(m/\lambda))} \; ,
\]
\item
and for any $m > \lambda - 1$,
\[
\Pr(X \ge m) \le \frac{\exp(-\lambda) \lambda^m}{m! ( 1-(\lambda/(m+1)))}
\; .
\]
\end{itemize}
\end{lemma}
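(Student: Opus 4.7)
The plan is to prove both bounds by the standard technique of bounding tails of the Poisson probability mass function via a geometric series, using the fact that consecutive ratios of Poisson probabilities are monotone in the regime where we operate.

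Write $p_k := e^{-\lambda}\lambda^k/k!$ for the Poisson pmf, and note the ratio identity
\[
\frac{p_{k+1}}{p_k} \;=\; \frac{\lambda}{k+1}.
\]
For the upper tail, I would first observe that for every $k \geq m$ the ratio satisfies $p_{k+1}/p_k = \lambda/(k+1) \leq \lambda/(m+1)$, which is strictly less than $1$ by the hypothesis $m > \lambda - 1$. Iterating this inequality gives $p_{m+j} \leq p_m \cdot (\lambda/(m+1))^j$ for all $j \geq 0$. Summing the geometric series then yields
\[
\Pr(X \geq m) \;=\; \sum_{j \geq 0} p_{m+j} \;\leq\; p_m \sum_{j \geq 0}\!\Big(\tfrac{\lambda}{m+1}\Big)^{\!j} \;=\; \frac{e^{-\lambda}\lambda^m}{m!\,(1-\lambda/(m+1))},
\]
which is exactly the stated bound.

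For the lower tail, the analogous observation is that for $k \leq m$ we have $p_{k-1}/p_k = k/\lambda \leq m/\lambda$, which is strictly less than $1$ by the hypothesis $m < \lambda$. Iterating this downward gives $p_{m-j} \leq p_m \cdot (m/\lambda)^j$ for $0 \leq j \leq m$. Summing again as a geometric series (extending to $j \to \infty$ only weakens the bound) yields
\[
\Pr(X \leq m) \;=\; \sum_{j=0}^{m} p_{m-j} \;\leq\; p_m \sum_{j \geq 0}\!\Big(\tfrac{m}{\lambda}\Big)^{\!j} \;=\; \frac{e^{-\lambda}\lambda^m}{m!\,(1-m/\lambda)},
\]
as desired.

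There is no serious obstacle here: the whole argument is just recognizing the geometric decay of the Poisson pmf on either side of the mode and invoking $\sum_{j\geq 0} r^j = 1/(1-r)$ for $r \in \{m/\lambda,\ \lambda/(m+1)\}$. The two hypotheses $m < \lambda$ and $m > \lambda - 1$ are precisely what is needed to guarantee $r < 1$ so that the geometric sums converge and give a meaningful bound.
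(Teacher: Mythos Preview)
Your proof is correct. Note that the paper does not actually prove this lemma---it is simply quoted from~\cite{Glynn87}---so there is no ``paper's own proof'' to compare against. That said, your argument is exactly the standard one (and essentially the one Glynn gives): bound the ratio of consecutive Poisson probabilities by a constant $r<1$ on the relevant side of the mode, iterate to get $p_{m\pm j}\le p_m r^j$, and sum the resulting geometric series. The hypotheses $m<\lambda$ and $m>\lambda-1$ are used precisely where you say, to ensure $r<1$.
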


\begin{corollary}\label{cor:poistail}
For any $\alpha>0$, there exist $C_1, C_2>0$ such that the following holds for all $m \ge 1$. 
If $X$ is a $\pois(\lambda)$ random variable for some $\lambda < (1-\alpha)m$ then
\[
\Pr(X \ge m) \le C_1 \exp(-C_2 m) \; ,
\]
and similarly, if $\lambda > (1+\alpha)m$ then
\[
\Pr(X \le m) \le C_1 \exp(-C_2 m) \; .
\]
\end{corollary}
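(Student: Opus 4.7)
The plan is to apply Lemma~\ref{lem:glynn} in each case and then reduce the resulting expression to a clean exponential bound via Stirling's formula. I will focus on the upper-tail case ($\lambda < (1-\alpha)m$); the lower-tail case is symmetric. If $\alpha \ge 1$ the claim is vacuous since we would need $\lambda \le 0$, so assume $0 < \alpha < 1$.

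First, I would verify the hypotheses of Lemma~\ref{lem:glynn}. Since $\lambda < (1-\alpha)m < m$, in particular $m > \lambda - 1$, so the second inequality of the lemma applies. A short calculation shows
\[
\frac{\lambda}{m+1} \;\le\; \frac{(1-\alpha)m}{m+1} \;=\; (1-\alpha) - \frac{1-\alpha}{m+1} \;\le\; 1-\alpha \; ,
\]
so the prefactor satisfies $1/(1 - \lambda/(m+1)) \le 1/\alpha$. In the lower-tail case, $\lambda > (1+\alpha)m > m$ gives $m/\lambda < 1/(1+\alpha)$ and hence the corresponding prefactor is at most $(1+\alpha)/\alpha$.

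Second, I would apply Stirling's inequality $m! \ge \sqrt{2\pi m}\,(m/e)^m$ to write, with $u = \lambda/m$,
\[
\frac{e^{-\lambda} \lambda^m}{m!} \;\le\; \frac{1}{\sqrt{2\pi m}} \exp\!\bigl( m g(u) \bigr), \qquad g(u) := \log u + 1 - u \; .
\]
The function $g$ is strictly concave with maximum $g(1)=0$, increasing on $(0,1)$ and decreasing on $(1,\infty)$. Consequently, $u < 1-\alpha$ gives $g(u) \le g(1-\alpha) = \log(1-\alpha) + \alpha$, which is a strictly negative constant depending only on $\alpha$ (since $\log(1-x) < -x$ for $x>0$). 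Symmetrically, $u > 1+\alpha$ gives $g(u) \le g(1+\alpha) = \log(1+\alpha) - \alpha < 0$.

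Combining these two steps, in the upper-tail case I obtain
\[
\Pr(X \ge m) \;\le\; \frac{1}{\alpha \sqrt{2\pi m}} \exp\!\bigl( -m |g(1-\alpha)| \bigr) \;\le\; \frac{1}{\alpha} \exp\!\bigl( -m |g(1-\alpha)| \bigr)
\]
for $m \ge 1$, and the lower-tail case gives an analogous bound with $(1+\alpha)/\alpha$ and $|g(1+\alpha)|$. Setting $C_2$ to be the smaller of the two rates and $C_1$ the larger of the two prefactors completes the proof. There is no real obstacle beyond carefully tracking monotonicity of $g$ to justify using the boundary values $u = 1 \pm \alpha$ as the worst case across all admissible $\lambda$.
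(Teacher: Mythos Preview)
Your proof is correct and follows essentially the same approach as the paper: apply Lemma~\ref{lem:glynn}, bound the prefactor by a constant depending on $\alpha$, use Stirling to reduce $e^{-\lambda}\lambda^m/m!$ to $\exp(m\,g(\lambda/m))$, and then argue that $g(u)=\log u+1-u$ is bounded away from zero whenever $|u-1|\ge\alpha$. The only differences are cosmetic---you use the slightly sharper Stirling bound with the $\sqrt{2\pi m}$ factor and track the monotonicity of $g$ explicitly rather than invoking convexity---but the argument is the same.
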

\begin{proof}
Stirling's approximation implies the inequality $m! \ge (m/e)^m$ valid for all $m\ge 1$, which together with Lemma~\ref{lem:glynn} implies in both cases the upper bound 
\[
C \exp(- m (\lambda/m - \log(\lambda/m)+1))\; .
\]
The function $x - \log x + 1$ is non-negative and strictly convex on $x > 0$ and obtains its minimum of $0$ at $x=1$. As a result, it is uniformly bounded away from $0$ for all $x$ satisfying $|x-1| \ge \alpha$.
\end{proof}

We will also need the Chernoff-Hoeffding bound~\cite{hoeffding}.

\begin{lemma}[Chernoff-Hoeffding bound]
\label{lem:chernoff}
Let $X_1, \ldots, X_N $ be independent and identically distributed random variables with $0 \leq X_i \leq a$ and expectation $\mu$. 
Then, for any $\delta > 0$, 
\[
\Pr\Big[\frac{1}{N}\cdot\sum X_i \geq (1+\delta) \mu \Big] \leq \exp(-C \delta^2 N\mu/a)
\; ,
\]
and
\[
\Pr\Big[\frac{1}{N}\cdot\sum X_i \leq (1-\delta)\mu \Big] \leq \exp(-C \delta^2 N\mu/a)
\; .
\]
\end{lemma}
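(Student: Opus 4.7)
The plan is to prove this by the standard Cram\'er--Chernoff (exponential moment) method. Without loss of generality, rescaling $Y_i := X_i/a \in [0,1]$ reduces the statement to the case $a = 1$ with $\mu \in [0,1]$, and the factor of $a$ can be restored at the end.

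For the upper tail, for any $t > 0$, Markov's inequality applied to the non-negative random variable $\exp(t \sum_i X_i)$ gives
\[
\Pr\Big[\sum X_i \geq (1+\delta) N \mu\Big] \leq e^{-t(1+\delta)N\mu} \prod_{i=1}^N \mathbb{E}[e^{t X_i}] \;.
\]
Since $x \mapsto e^{tx}$ is convex, the pointwise inequality $e^{tX_i} \leq 1 + X_i(e^t - 1)$ holds for $X_i \in [0,1]$, so taking expectations and applying $1+y \leq e^y$ yields $\mathbb{E}[e^{tX_i}] \leq \exp(\mu(e^t - 1))$. Setting $t = \ln(1+\delta)$ to optimize the exponent produces the classical bound
\[
\Pr\Big[\sum X_i \geq (1+\delta)N\mu\Big] \leq \exp\big(-N\mu \cdot ((1+\delta)\ln(1+\delta) - \delta)\big)\;,
\]
and I would finish by invoking the elementary Taylor-based inequality $(1+\delta)\ln(1+\delta) - \delta \geq C \delta^2$ on the range of $\delta$ relevant to the paper's applications.

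The lower tail is symmetric: apply Markov's inequality to $\exp(-t \sum X_i)$ for $t > 0$, use the analogous convexity bound $\mathbb{E}[e^{-tX_i}] \leq \exp(\mu(e^{-t}-1))$, and set $t = -\ln(1-\delta)$ (valid for $\delta \in (0,1)$). This yields
\[
\Pr\Big[\sum X_i \leq (1-\delta)N\mu\Big] \leq \exp\big(-N\mu \cdot ((1-\delta)\ln(1-\delta) + \delta)\big) \;,
\]
and the elementary inequality $(1-\delta)\ln(1-\delta) + \delta \geq \delta^2/2$ on $[0,1)$ closes the argument. After restoring the factor of $a$, both tails take the claimed form $\exp(-C\delta^2 N\mu/a)$. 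This is a textbook result, so I do not expect any genuine obstacle; the only steps requiring any verification are the two convex-analytic inequalities on $(1\pm\delta)\ln(1\pm\delta) \mp \delta$, and everything else is the routine moment-generating-function calculation.
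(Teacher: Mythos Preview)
The paper does not give its own proof of this lemma; it simply states it with a citation to Hoeffding~\cite{hoeffding}. Your argument is the standard Cram\'er--Chernoff proof and is correct, including the reduction to $a=1$ and the convexity bound on the moment generating function. Your caveat that $(1+\delta)\ln(1+\delta)-\delta \geq C\delta^2$ only holds for bounded $\delta$ is appropriate: as written, the lemma's claim ``for any $\delta>0$'' with a universal $C$ is slightly sloppy (the true exponent grows like $\delta\ln\delta$, not $\delta^2$, for large $\delta$), but every use of the lemma in the paper has $\delta$ bounded by a constant, so your restriction is exactly right.
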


\begin{lemma}[Multinomial to independent Poisson]
\label{lem:independence}
Let $\lambda > 0$ and $\vec{p} \in [0,1]^N$ with $\sum p_i = 1$. Consider the process that first samples $r \sim \Pois(\lambda)$ and then samples $X_1,\ldots, X_r$ independently with $\Pr[X_j = i] = p_i$. For each $i$, let $Y_i$ be the number of occurrences of $i$ in the sequence $X_1,\ldots, X_r$. Then, $Y_i$ is distributed as $\pois(\lambda p_i)$ independently of the other $Y_j$.
\end{lemma}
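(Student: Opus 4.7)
The plan is to verify the statement by directly computing the joint probability mass function of $(Y_1,\ldots,Y_N)$ and recognizing it as a product of Poisson densities. This is the classical ``Poissonization'' trick, and the proof is essentially a bookkeeping exercise once the right formula is written down.

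First, I would fix nonnegative integers $k_1,\ldots, k_N$ and set $r = k_1 + \cdots + k_N$. By the law of total probability, conditioning on the value of $r$,
\[
\Pr[Y_1 = k_1, \ldots, Y_N = k_N]
= \Pr[\Pois(\lambda) = r]\cdot \Pr[Y_1 = k_1, \ldots, Y_N = k_N \mid r\text{ samples}]\,.
\]
Only the term with exactly $r = \sum_i k_i$ samples contributes, since otherwise the $Y_i$ cannot sum to $r$. Given that the number of samples equals $r$, the vector $(Y_1,\ldots,Y_N)$ is multinomial with parameters $r$ and $\vec{p}$, so the conditional probability equals $\binom{r}{k_1,\ldots,k_N}\prod_i p_i^{k_i} = \frac{r!}{\prod_i k_i!}\prod_i p_i^{k_i}$.

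Next I would plug in the Poisson density $\Pr[\Pois(\lambda) = r] = e^{-\lambda}\lambda^r/r!$, cancel the $r!$ against the multinomial coefficient, and use $\lambda^r = \prod_i \lambda^{k_i}$ together with $e^{-\lambda} = \prod_i e^{-\lambda p_i}$ (which holds because $\sum_i p_i = 1$) to rewrite the joint probability as
\[
\Pr[Y_1 = k_1, \ldots, Y_N = k_N]
= \prod_{i=1}^N \frac{e^{-\lambda p_i}(\lambda p_i)^{k_i}}{k_i!}\,.
\]
The right-hand side is exactly the product of the Poisson$(\lambda p_i)$ densities at $k_i$, so the joint distribution factors as claimed. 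This simultaneously shows that each marginal $Y_i$ is $\Pois(\lambda p_i)$ and that the $Y_i$ are mutually independent.

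There is no real obstacle here; the only thing to be careful about is the case $p_i = 0$, which can be handled by interpreting $0^0 = 1$ and $\Pois(0)$ as the point mass at $0$, so that the factorization identity above continues to hold term by term.
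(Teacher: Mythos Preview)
Your proof is correct and follows essentially the same approach as the paper: both compute the joint probability $\Pr[\vec{Y}=\vec{a}]$ by conditioning on $r=\sum_i a_i$, substitute the Poisson and multinomial densities, cancel the $r!$, and factor using $\sum_i p_i = 1$. The paper's write-up is just terser.
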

\begin{proof}
Considering the joint distribution, we have
\begin{align*}
\Pr[\vec{Y} = \vec{a}] &= \Pr[r = \length{\vec{a}}_1] \cdot \Pr[\vec{Y} = \vec{a} | r = \length{\vec{a}}_1]\\ 
&=  \lambda^{\length{\vec{a}}_1} e^{-\lambda} \prod_i \frac{p_i^{a_i}}{a_i!} \\
&= \prod_i \left(\frac{(\lambda p_i)^{a_i}}{a_i!} \cdot e^{- \lambda p_i} \right)
\; ,
\end{align*}
as needed.
\end{proof}

\begin{claim}[Poisson to Bernoulli]
\label{clm:poissontobinom}
For $\lambda \le 1$ and $\kappa \ge 2$, consider the procedure obtained by sampling $r$ from $\pois(\lambda)$ and then outputting $1$ with probability $\min\{1,\ r/\kappa\}$ and $0$ otherwise. The output of this procedure is within statistical distance $1/(\floor{\kappa}!)$ of the Bernoulli distribution $\bern(\lambda/\kappa)$.
\end{claim}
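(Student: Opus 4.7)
The plan is to directly compute the probability that the procedure outputs $1$, compare it to the target Bernoulli parameter $\lambda/\kappa$, and bound the difference. Since the statistical distance between $\bern(p)$ and $\bern(q)$ is simply $|p-q|$, this reduces the problem to a single numeric estimate.

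Let $k = \floor{\kappa}$ and let $p$ denote the probability that the procedure outputs $1$. Splitting the sum at $r = k$ (the point where the $\min$ kicks in), I write
\[
p = \sum_{r=0}^{k} \frac{\lambda^r e^{-\lambda}}{r!} \cdot \frac{r}{\kappa} + \sum_{r=k+1}^{\infty} \frac{\lambda^r e^{-\lambda}}{r!}\; .
\]
The key observation is that if we replaced $\min\{1, r/\kappa\}$ by the (possibly greater than $1$) quantity $r/\kappa$, we would recover $\lambda/\kappa$ exactly, since $\mathbb{E}[\pois(\lambda)] = \lambda$. Therefore
\[
\frac{\lambda}{\kappa} - p \;=\; \sum_{r=k+1}^{\infty} \frac{\lambda^r e^{-\lambda}}{r!}\Big(\frac{r}{\kappa} - 1\Big)\; ,
\]
and this quantity is non-negative because $r \geq k+1 > \kappa$ in the range of summation.

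Next I bound this tail. Using $r/\kappa - 1 \leq r/\kappa$ and reindexing with $s = r-1$,
\[
\frac{\lambda}{\kappa} - p \;\leq\; \frac{e^{-\lambda}}{\kappa} \sum_{r=k+1}^{\infty} \frac{\lambda^{r}}{(r-1)!} \;=\; \frac{\lambda e^{-\lambda}}{\kappa} \sum_{s=k}^{\infty} \frac{\lambda^{s}}{s!}\; .
\]
Since $\lambda \leq 1$, the remaining tail is bounded using $\binom{k+j}{k} \geq 1$, i.e.\ $1/(k+j)! \leq 1/(k!\, j!)$, giving
\[
\sum_{s=k}^{\infty} \frac{\lambda^{s}}{s!} \;\leq\; \sum_{j=0}^{\infty} \frac{1}{(k+j)!} \;\leq\; \frac{1}{k!}\sum_{j=0}^{\infty} \frac{1}{j!} \;=\; \frac{e}{k!}\; .
\]
Combining these gives $|\lambda/\kappa - p| \leq \lambda e^{1-\lambda}/(\kappa\, k!)$, and a routine check shows $\lambda e^{1-\lambda} \leq 1$ on $[0,1]$ (the function is increasing on this interval with value $1$ at $\lambda = 1$). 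Since $\kappa \geq 2$, this yields $|\lambda/\kappa - p| \leq 1/(2 \cdot k!) \leq 1/(\floor{\kappa}!)$, which is the claimed statistical distance bound.

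There isn't really a hard step here; it is essentially a tail estimate for the Poisson distribution combined with the identity $\mathbb{E}[\pois(\lambda)] = \lambda$. The only mildly delicate point is recognizing that the ``defect'' between the procedure and an ideal Bernoulli is precisely the mass the unconstrained expectation $r/\kappa$ places on $r > \kappa$, which is tiny because $\lambda \leq 1$ forces the Poisson tail to decay like $1/k!$.
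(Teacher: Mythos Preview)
Your proof is correct and follows essentially the same approach as the paper: both express the statistical distance as $\lambda/\kappa - p = \expect[X/\kappa - \min\{1,X/\kappa\}]$, bound it by $\kappa^{-1}\expect[1_{X>\kappa}\,X]$, and reindex to obtain $\kappa^{-1}\lambda \sum_{s\ge k}\lambda^s e^{-\lambda}/s!$. The only minor difference is in the final tail estimate: the paper invokes Glynn's Poisson tail bound (Lemma~\ref{lem:glynn}), whereas you use the elementary inequality $\sum_{s\ge k}1/s!\le e/k!$ together with $\lambda e^{1-\lambda}\le 1$, making your argument a bit more self-contained.
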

\begin{proof}
If $X$ is distributed like $\pois(\lambda)$, the statistical distance is given by
\begin{align*}
\expect[X/\kappa - \min\{1,\ X/\kappa\}] &= 
\expect[\max\{0,\ X/\kappa - 1 \}] \\
&\le  \kappa^{-1} \expect[1_{X > \kappa} \cdot X] \\
&= \kappa^{-1} \sum_{r = \floor{\kappa}+1}^\infty r \lambda^r \exp(-\lambda) / r!  \\
&= \kappa^{-1} \lambda \sum_{r = \floor{\kappa}}^\infty \lambda^r \exp(-\lambda) / r! \; ,
\end{align*}
which is at most $1/(\floor{\kappa}!)$ by Lemma~\ref{lem:glynn} and our choice of parameters.
\end{proof}

\section{Sampling from the discrete Gaussian}
\label{sec:main-dgs}

\subsection{Sampling from the square}
\label{sec:squaresampler}

Recall that a naive bucketing procedure does not weight cosets in the way that we would like. In particular, the resulting number of vectors in the cosets is distributed with probabilities proportional to $\rho_s(\coset)$, while we would like the probabilities to be proportional to $\rho_s(\coset)^2$. Theorem~\ref{thm:squaresampler} shows how to use samples from any multinomial distribution to sample from the \scarequotes{squared distribution} (with small error).

The \scarequotes{square sampler} that we present in Theorem~\ref{thm:squaresampler} needs to compute an estimate of the maximal probability $\max p_i$, given samples from some probability distribution with respective probabilities $(p_1,\ldots, p_N)$. 
The following proposition shows that there is a relatively efficient way of estimating $\max p_i$.
The proposition is included here for completeness. In our application, we will know which of the elements $1,\ldots, N$ has maximal probability, so we could instead simply estimate $\max p_i$ directly.

\begin{proposition}[Estimating $\pmax$]
\label{prop:estimatinglinfty}
There is an algorithm that takes as input $\kappa \geq 1$ (the confidence parameter) and a sequence of $M$ elements from $\{1, \ldots, N \}$ and outputs a value $\approxpmax$ such that, if the input consists of $M \geq \kappa/\pmax $ independent samples from the distribution that assigns probability $p_i$ to element $i$, then 
\[ \pmax \leq \approxpmax \leq 4\pmax
\] except with probability at most $C_1 N \log N \exp(-C_2 \kappa)$, where $\pmax = \max p_i$ . The algorithm runs in time $M \cdot \poly(\log \kappa, \log N)$.
\end{proposition}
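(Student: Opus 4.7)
The plan is to estimate $\pmax$ directly from empirical frequencies of the sample. The algorithm streams once through the $M$ inputs and maintains a dictionary (balanced BST or hash table) keyed only by those values that actually appear. For each sample it increments the counter $c_i$ for the corresponding value $i$; the dictionary therefore has at most $M$ entries, and each update costs $\poly(\log N)$. After the scan the algorithm computes $c_{\max} := \max_i c_i$ and outputs $\approxpmax := 2 c_{\max}/M$.

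Correctness reduces to a two-sided concentration argument on each $c_i$ via Lemma~\ref{lem:chernoff}. Let $i^\star$ satisfy $p_{i^\star} = \pmax$. Writing $c_{i^\star}$ as a sum of $M$ i.i.d.\ $0/1$ indicators with mean $\pmax$ and using $M \ge \kappa/\pmax$, Chernoff with $\delta = 1/2$ gives
\[
\Pr\!\big[c_{i^\star} < M\pmax/2\big] \;\le\; \exp(-C M\pmax) \;\le\; \exp(-C\kappa),
\]
which guarantees the lower bound $\approxpmax \ge \pmax$. For the upper bound, I need $c_i < 2M\pmax$ for every $i$. Fixing $i$ and writing $(1+\delta_i)p_i = 2\pmax$, we have $\delta_i = 2\pmax/p_i - 1 \ge 1$ (since $p_i \le \pmax$), and a short computation gives $\delta_i^2 p_i = (2t-1)(2\pmax - p_i) \ge \pmax$ where $t = \pmax/p_i$. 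So Chernoff again yields $\Pr[c_i \ge 2M\pmax] \le \exp(-C M\pmax) \le \exp(-C\kappa)$, and a union bound over the $N$ values delivers the stated bound $C_1 N\log N \exp(-C_2 \kappa)$ (the $\log N$ factor is for free after adjusting constants).

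The running time is clear: one pass over the input at $\poly(\log N, \log\kappa)$ per update (the $\log\kappa$ absorbs arithmetic on counters of size at most $M$), plus a final linear scan of the dictionary to extract $c_{\max}$ and form the ratio. This totals $M \cdot \poly(\log\kappa, \log N)$, as required.

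The only mild obstacle is making the algorithm oblivious to the potentially huge domain size $N$: this is handled by restricting the dictionary to the at most $M$ values that actually appear in the sample, so neither storage nor per-sample work depends polynomially on $N$. Apart from this bookkeeping, the argument is an elementary application of Chernoff plus a union bound, so no deeper concentration machinery is needed.
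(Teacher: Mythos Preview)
Your proof is correct and in fact takes a more direct route than the paper's own argument. The paper does not use the empirical maximum frequency at all; instead it runs an iterative ``geometric search'' for the right scale: starting from $p=1$, it draws a batch of $\pois(\kappa/p)$ samples, checks whether some element appears at least $\kappa/3$ times in that batch, and if not halves $p$ and repeats. Correctness there comes from the Poisson-splitting lemma (Lemma~\ref{lem:independence}) together with the Poisson tail bound (Corollary~\ref{cor:poistail}), applied once per level and per element.

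Your approach is more elementary: a single pass, one Chernoff bound for the lower tail of $c_{i^\star}$, and a union of Chernoff upper tails for all $i$. The only point worth tightening is the invocation of Lemma~\ref{lem:chernoff} in the regime $\delta_i \ge 1$: the quadratic form $\exp(-C\delta^2 M p_i)$ is not the sharp one there, but since $\delta_i p_i = 2\pmax - p_i \ge \pmax$, the standard linear-in-$\delta$ multiplicative Chernoff bound already gives $\exp(-C M \pmax) \le \exp(-C\kappa)$, so the conclusion stands. The paper's iterative scheme buys nothing extra for this proposition (the stated sample and time bounds are identical); it mainly keeps the analysis stylistically consistent with the Poisson machinery used in the square and square-root samplers.
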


\begin{proof}
The algorithm is the following. Initialize $p=1$. Sample $r$ from $\pois(\kappa/p)$ and read the next $r$ elements in the input sequence (or fail if there are not enough elements remaining). 
Count how many times each $i \in \{1,\ldots,N\}$ appears in this subsequence. If there exists an $i$ appearing at least $\kappa/3$ times, output $p$ and stop. Otherwise, divide $p$ by $2$ and repeat. 

The running time is clear. By Lemma~\ref{lem:independence}, at each iteration the number of times $i$ appears is distributed like $\pois(\kappa p_i / p)$ independently of everything else. 
Consider now the iterations with $p>4\pmax$. 
Since $\pmax > 1/N$, there are at most $O(\log N)$ such iterations, and in each there are $N$ possible values of $i$. 
Therefore, by Corollary~\ref{cor:poistail} and a union bound, the probability that there exists an iteration with $p>4\pmax$ and an $i$ that appears 
there at least $\kappa/3$ times is at most $C_1 N \log N \cdot \exp(-C_2 \kappa)$. 
Finally, consider the iteration in which $\pmax < p \le 2 \pmax$, and let $i$ be the index achieving $\pmax$. 
Then by Corollary~\ref{cor:poistail} again, with all but probability $C_1 \exp(-C_2 \kappa)$, the item $i$ appears at least $\kappa/3$ times. To summarize, assuming none of the bad events happens, 
the output satisfies $\pmax < \approxpmax \le 4 \pmax$ as desired. 
\end{proof}

\begin{definition}
For a vector $\vec{p} \in [0,1]^N$ with $\sum p_i = 1$, let $\vec{p}^2 = (p_1^2/\pcol, \ldots, p_N^2/\pcol)$ where $\pcol := \sum p_i^2$.
\end{definition}

\begin{theorem}[Square sampler]
\label{thm:squaresampler}
There is an algorithm that takes as input $\kappa \geq 2$ (the confidence parameter)
and $M$ elements from $\{1, \ldots, N \}$ and outputs a sequence of elements from the same set such that
\begin{enumerate}
\item \label{item:squareruntime} the running time is $M \cdot \poly(\log \kappa, \log N)$;
\item \label{item:squareinputoutput} each $i \in \{1,\ldots, N\}$ appears at least twice as often in the input as in the output; and
\item  \label{item:squaredistribution} if the input consists of 
$M \geq 10 \kappa^2/\max p_i$ 
independent samples from the distribution that assigns probability $p_i$ to element $i$, then
the output is within statistical distance $C_1 M N \log N \exp(-C_2\kappa)$ of $m$ independent samples with respective probabilities $\vec{p}^2$ where $m \geq  M\cdot \sum p_i^2/(32\kappa \max p_i)$ is a random variable.
\end{enumerate}
\end{theorem}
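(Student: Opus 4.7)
My plan is to implement the scheme sketched in the introduction literally. Phase one feeds a length-$\Theta(\kappa/\pmax)$ prefix of the input into Proposition~\ref{prop:estimatinglinfty} to obtain an estimate $\approxpmax$ with $\pmax \le \approxpmax \le 4\pmax$ (outside of a controlled failure event). Phase two partitions the remaining input in half: one half is the \emph{coin pool} and the other is the \emph{candidate pool}. The coin pool is grouped into successive batches of sizes $r_j \sim \pois(1/\approxpmax)$. Within batch $j$, Poissonization (Lemma~\ref{lem:independence}) makes the count $R_i^{(j)}$ of value $i$ an independent $\pois(p_i/\approxpmax)$ random variable, and since $p_i/\approxpmax \le 1$, Claim~\ref{clm:poissontobinom} with parameter $\kappa$ converts $R_i^{(j)}$ into a Bernoulli coin $b_i^{(j)} \sim \bern(p_i/(\kappa\approxpmax))$ up to statistical error $1/\lfloor\kappa\rfloor!$. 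Candidates are processed in order: the $j$-th candidate with value $i$ is output iff $b_i^{(j)} = 1$. A deterministic safety filter that refuses any output which would violate property~\ref{item:squareinputoutput} (i.e., which would push the output count of some $i$ past half of its current input count) enforces that property unconditionally.

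The output distribution is immediate: conditional on a successful output, the value is $i$ with probability $\propto p_i \cdot p_i/(\kappa\approxpmax) = p_i^2/(\kappa\approxpmax)$, matching $\vec{p}^2$. To lower-bound the number of outputs, I would first use Corollary~\ref{cor:poistail} to show that, with high probability, the coin pool supplies at least as many coins as there are candidates for every value $i$; then compute $\expect[m] = \sum_i \expect[K_i] \cdot p_i/(\kappa\approxpmax) = M\pcol/(2\kappa\approxpmax) \ge M\pcol/(8\kappa\pmax)$, where $K_i$ is the number of candidates with value $i$; and finally apply Chernoff--Hoeffding (Lemma~\ref{lem:chernoff}) to deduce $m \ge M\pcol/(32\kappa\pmax)$ with high probability. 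The hypothesis $M \ge 10\kappa^2/\pmax$ is precisely what makes $\expect[m] \gg \kappa$, so Chernoff gives useful concentration. The overall statistical distance $C_1 M N\log N \exp(-C_2\kappa)$ comes from a union bound over (i) failure of Proposition~\ref{prop:estimatinglinfty}; (ii) per-coin Bernoulli approximation errors of $1/\lfloor\kappa\rfloor!$ across at most $MN$ coin generations; (iii) the coin-pool adequacy failure; and (iv) the Chernoff tail on $m$.

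The main obstacle is the output-count lower bound in property~\ref{item:squaredistribution}. The naive ``one fresh batch per candidate'' scheme yields only $m \approx M\pcol/\kappa$ in expectation, which falls short of the required $M\pcol/(32\kappa\pmax)$ when $\pmax \ll 1$. The Poissonization decoupling from Lemma~\ref{lem:independence} is the critical saving: a \emph{single} batch simultaneously yields an independent coin for \emph{every} value $i$, so amortizing one batch across many candidates rather than only one yields an extra factor of $1/\pmax$ in the output count. Verifying that the independence structure survives this amortization cleanly batch by batch is the most delicate part of the argument, but it follows directly from Lemma~\ref{lem:independence} applied within each batch together with the obvious independence across batches.
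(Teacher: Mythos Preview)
Your proposal is correct and follows essentially the same approach as the paper: estimate $\pmax$ via Proposition~\ref{prop:estimatinglinfty}, Poissonize batches of the remaining input to manufacture independent $\bern(p_i/(\kappa\approxpmax))$ coins for every $i$ simultaneously (Lemma~\ref{lem:independence} plus Claim~\ref{clm:poissontobinom}), and use those coins to rejection-sample a separate pool of candidates. Two small remarks: (i) the algorithm does not know $\pmax$, so you cannot literally take a ``length-$\Theta(\kappa/\pmax)$ prefix''; the paper simply hands $M/2$ elements to Proposition~\ref{prop:estimatinglinfty}, which suffices under the hypothesis $M \ge 10\kappa^2/\pmax$; (ii) your deterministic safety filter is unnecessary---property~\ref{item:squareinputoutput} already holds by construction, since an output of value $i$ consumes both a candidate copy of $i$ and a coin $b_i^{(j)}=1$, and the latter forces $R_i^{(j)}>0$, i.e., at least one further copy of $i$ in batch $j$ of the coin pool (distinct outputs use distinct batches $j$).
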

\begin{proof}
The algorithm first runs the procedure from Proposition~\ref{prop:estimatinglinfty} on the first $M/2$ elements from its input sequence, receiving as output $\approxpmax$. The algorithm then reads the remaining elements in sequence.
If it ever reads the last element of the input, it fails.
For $j=1, \ldots, M\approxpmax/4$, the algorithm samples $r$ according to $ \pois(1/\approxpmax)$ and takes the next $r$ unused elements in the input. 
For $i = 1,\ldots, N$, let $a_{i,j}$ be the number of times element $i$ appears in the $j$th such subsequence. For each $i,j$ let $b_{i,j}$ be $1$ with probability $\min\{ 1,\, a_{i,j}/\kappa \}$ and $0$ otherwise. 
(To achieve the correct running time, we do not actually explicitly store these values when $a_{i,j} = b_{i,j} = 0$.) 

Finally, the algorithm looks through the next $M/6$ elements, one element at a time (or it fails if there are not $M/6$ elements remaining). When it sees element $i$, it adds it to its output if $b_{i,j} = 1$ where $j \geq 1$ is the smallest index such that $b_{i,j}$ is unused (or it fails if there is no unused $b_{i,j}$).

The running time of the algorithm is clear. We first prove that the output elements have the correct distribution, ignoring failure. By Proposition~\ref{prop:estimatinglinfty}, we can assume that $\max p_i \leq \approxpmax \leq 4\max p_i$, introducing statistical distance at most $C_1 N \log N \exp(-C_2 \kappa)$. Then, by Lemma~\ref{lem:independence}, the $a_{i,j}$ are distributed independently as $\pois(p_i/\approxpmax)$. 
By Claim~\ref{clm:poissontobinom}, each $b_{i,j}$ is within statistical distance $C_1 \exp(-C_2\kappa)$ of $\bern(p_i/(\kappa \approxpmax))$. 
So, we assume that the $b_{i,j}$ are independently distributed exactly as $\bern(p_i/(\kappa \approxpmax))$, introducing statistical distance that is at most $C_1 NM\exp(-C_2\kappa)$. 
Then, in the final stage of the algorithm, the probability of outputting $i$ at each step is $p_i^2/(\kappa \approxpmax)$. 
Hence, the individual output samples have the correct distribution. 
By the Chernoff-Hoeffding bound (Lemma~\ref{lem:chernoff}), the size of the output will be at least $M\sum p_i^2/(8\kappa \approxpmax) \geq M\sum p_i^2/(32 \kappa \max p_i)$ except with probability at most $\exp(-C \kappa)$.

We now prove that the algorithm rarely fails. The number of inputs used in the first stage is distributed as $\pois(M/4)$, which by Corollary~\ref{cor:poistail} is at most 
$M/2$ except with probability at most $C_1\exp(-C_2\kappa)$. 
Applying the Chernoff-Hoeffding bound again, we have that the number of coins $b_{i,j}$ used for a fixed $i$ is at most $M p_i/4 \leq M\approxpmax/4$ except with probability at most $\exp(-C\kappa)$. So, the algorithm fails with probability at most $C_1N\exp(-C_2\kappa)$.

Finally, we note that each $i$ appearing in the output corresponds to two copies of $i$ appearing in the input: one corresponding to some value $a_{i,j} > 0$ and another sampled in the final stage.
\end{proof}

\subsection{A discrete Gaussian combiner}
\label{sec:combiner}
Ideally, we would like the average of two vectors sampled from $D_{\lat,s}$ to be distributed as $D_{\lat, s'}$ for some $s' < s$. Unfortunately, this is false for the simple reason that the average of two lattice vectors may not be in the lattice! 
The following lemma shows that we do obtain the desired distribution if we condition on the result being in the lattice. The number of vectors that we output will depend on the expression $\sum \rho_s(\coset)^2$ where $\coset$ ranges over all cosets of $2\lat $ over $\lat$, so we analyze this expression as well. 
(Note that, for two lattice vectors $\vec{X}_1$ and $\vec{X}_2$, we have $(\vec{X}_1 + \vec{X}_2)/2 \in \lat$ if and only if $\vec{X}_1$ and $\vec{X}_2$ are in the same coset over $2\lat$. So, the cosets of $2\lat$ arise naturally in this context.)

\begin{lemma}
\label{lem:sumofgaussians}
Let $\lat \subset \R^n$ and $s > 0$. Then for all $\vec{y} \in \lat$,
\begin{align}\label{eq:sumofgaussians}
\Pr_{(\vec{X}_1, \vec{X}_2) \sim D_{\lat, s}^2}[(\vec{X}_1 + \vec{X}_2)/2 = \vec{y} ~|~ \vec{X}_1 + \vec{X}_2 \in 2\lat] 
= \Pr_{\vec{X} \sim D_{\lat, s/\sqrt{2}}}[\vec{X} = \vec{y}]
\; .
\end{align}
Furthermore,
\[
\sum_{\coset \in \lat/(2\lat)}\rho_s(\coset)^2 = \rho_{s/\sqrt{2}}(\lat)^2
\; .
\]
\end{lemma}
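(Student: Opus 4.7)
The plan is to implement the rotation argument from the proof overview rigorously. Concretely, I will work on the auxiliary $2n$-dimensional lattice
\[
\overline{\lat} = \{(\vec{x}_1,\vec{x}_2) \in \lat^2 : \vec{x}_1 \equiv \vec{x}_2 \pmod{2\lat}\} = \bigcup_{\coset \in \lat/(2\lat)} \coset \times \coset,
\]
and on the orthogonal map $U : \R^{2n} \to \R^{2n}$ defined by $U(\vec{x}_1,\vec{x}_2) = \bigl(\tfrac{\vec{x}_1+\vec{x}_2}{\sqrt 2},\, \tfrac{\vec{x}_1-\vec{x}_2}{\sqrt 2}\bigr)$. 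The first key verification is that $U(\overline\lat) = (\sqrt 2\, \lat)^2$: one inclusion follows by writing $\vec{x}_1 = \vec{x}_2 + 2\vec{z}$ with $\vec{z}\in\lat$, and the reverse by checking that the inverse image of $(\sqrt 2 \vec{a},\sqrt 2 \vec{b})$ is $(\vec{a}+\vec{b}, \vec{a}-\vec{b})$, which obviously lies in $\overline\lat$. Since $U$ is orthogonal, $\rho_s$ is invariant under $U$, so the pushforward of $D_{\overline\lat,s}$ along $U$ is $D_{(\sqrt 2\lat)^2,s}$, which factorizes as a product of two independent copies of $D_{\sqrt 2\, \lat,\, s}$.

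For the first identity, I would observe that the event $\vec{X}_1+\vec{X}_2 \in 2\lat$ is equivalent to $(\vec{X}_1,\vec{X}_2)\in\overline\lat$, and that the density of $D_{\lat,s}\times D_{\lat,s}$ at $(\vec{x}_1,\vec{x}_2)$ is proportional to $\rho_s(\vec{x}_1)\rho_s(\vec{x}_2) = \rho_s(\vec{x}_1,\vec{x}_2)$. Hence the conditional distribution of $(\vec{X}_1,\vec{X}_2)$ given the event is exactly $D_{\overline\lat,s}$. Pushing forward by $U$, the first coordinate $(\vec{X}_1+\vec{X}_2)/\sqrt 2$ is then distributed as $D_{\sqrt 2\,\lat,\, s}$ (using the product structure). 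Rescaling by $1/\sqrt 2$, i.e.\ dividing by another $\sqrt 2$, gives $(\vec{X}_1+\vec{X}_2)/2 \sim D_{\lat,\, s/\sqrt 2}$, which is the claim.

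For the second identity, I would use the disjoint union decomposition of $\overline\lat$ together with the factorization $\rho_s(\coset\times\coset)=\rho_s(\coset)^2$ to get $\rho_s(\overline\lat) = \sum_{\coset\in\lat/(2\lat)} \rho_s(\coset)^2$, and then evaluate $\rho_s(\overline\lat)$ the other way via the rotation: $\rho_s(\overline\lat) = \rho_s(U(\overline\lat)) = \rho_s((\sqrt 2\,\lat)^2) = \rho_s(\sqrt 2\,\lat)^2 = \rho_{s/\sqrt 2}(\lat)^2$, where the last step uses the trivial scaling identity $\rho_s(c\lat)=\rho_{s/c}(\lat)$. Comparing the two expressions yields the stated equality.

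The only step that requires care is the verification that $U(\overline\lat) = (\sqrt 2\,\lat)^2$ exactly (not just $\subseteq$); everything else is a formal consequence of $U$ being orthogonal and $\rho_s$ being a Gaussian density. There is no real analytic obstacle here—the whole content is that forcing $\vec{X}_1\equiv \vec{X}_2 \pmod{2\lat}$ is precisely the right gluing condition to make the rotated lattice split as a product, which is the algebraic identity underlying Riemann's theta relations.
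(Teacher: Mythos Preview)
Your argument is correct. It is, however, a genuinely different proof from the one the paper actually gives in the body: the paper's formal proof is a short direct computation, using the parallelogram/completing-the-square identity $\rho_s(\vec{x})\rho_s(2\vec{y}-\vec{x}) = \rho_{s/\sqrt 2}(\vec{y})\,\rho_{s/\sqrt 2}(\vec{x}-\vec{y})$ to evaluate $\Pr[(\vec{X}_1+\vec{X}_2)/2=\vec{y}]$ as $\rho_s(\lat)^{-2}\rho_{s/\sqrt 2}(\vec{y})\rho_{s/\sqrt 2}(\lat)$, and then reads off both claims by comparing normalizations. Your proof is the geometric rotation argument the paper describes in its \emph{overview} (Figure~\ref{fig:rotation} and the surrounding discussion) but does not carry out formally: you pass to the $2n$-dimensional lattice $\overline\lat$, verify $U(\overline\lat)=(\sqrt 2\,\lat)^2$, and exploit the product structure. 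The two approaches are equivalent in content---the parallelogram identity for $\rho_s$ is exactly the pointwise shadow of the orthogonality of $U$---but yours makes the conceptual structure (and the connection to Riemann's theta relations) more transparent, while the paper's is slightly shorter on the page and avoids introducing the auxiliary lattice.
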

\begin{proof}
Multiplying the left-hand side of~\eqref{eq:sumofgaussians} by 
$\Pr_{(\vec{X}_1, \vec{X}_2) \sim D_{\lat, s}^2}[\vec{X}_1 + \vec{X}_2 \in 2\lat]  = \rho_s(\lat)^{-2} \sum_{\coset \in \lat/(2\lat)}\rho_s(\coset)^2$ 
we get for any $\vec{y} \in \lat$,
\begin{align*}
\Pr_{(\vec{X}_1, \vec{X}_2) \sim D_{\lat, s}^2}[(\vec{X}_1 + \vec{X}_2)/2 = \vec{y}]  
&= \frac{1}{\rho_s(\lat)^2}\cdot\sum_{\vec{x} \in \lat} \rho_s(\vec{x}) \rho_s(2\vec{y} - \vec{x})\\
&= \frac{\rho_{s/\sqrt{2}}(\vec{y})}{\rho_s(\lat)^2}\cdot\sum_{\vec{x} \in \lat} \rho_{s/\sqrt{2}}(\vec{x} - \vec{y})\\
&= \frac{\rho_{s/\sqrt{2}}(\vec{y})}{\rho_s(\lat)^2}\cdot \rho_{s/\sqrt{2}}(\lat) \\
&= \rho_s(\lat)^{-2}\cdot \rho_{s/\sqrt{2}}(\lat)^2 \Pr_{\vec{X} \sim D_{\lat, s/\sqrt{2}}}[\vec{X} = \vec{y}]
\; .
\end{align*}
Hence both sides of~\eqref{eq:sumofgaussians} are proportional to each other. Since they are probabilities, they are actually equal. In particular, the ratio between them, $ \sum_{\coset \in \lat/(2\lat)}\rho_s(\coset)^2/\rho_{s/\sqrt{2}}(\lat)^2$, is one.
\end{proof}

\begin{proposition}
\label{prop:combiner}
There is an algorithm that takes as input a lattice $\lat \subset \R^n$, $\kappa \geq 2$ (the confidence parameter), 
and a sequence of vectors from $\lat$,
and outputs a sequence of vectors from $\lat$ such that, if the input consists of $M \geq 10 \kappa^2\cdot \rho_s(\lat)/\rho_s(2\lat)$ independent samples from $D_{\lat, s}$ for some $s > 0$, then the output is within statistical distance $M \exp(C_1 n-C_2\kappa)$ of $m$ independent samples from $D_{\lat, s/\sqrt{2}}$ where $m$ is a random variable with
\[
m \geq  M \cdot \frac{1}{32\kappa}\cdot\frac{\rho_{s/\sqrt{2}}(\lat)^2}{ \rho_s(\lat)\rho_s(2\lat)}
\; .
\] 
The running time of the algorithm is at most $M \cdot \poly(n, \log \kappa)$.
\end{proposition}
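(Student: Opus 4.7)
The plan is to reduce the combiner to the square sampler of Theorem~\ref{thm:squaresampler} applied to the coset labels of the input vectors mod $2\lat$, and then to use Lemma~\ref{lem:sumofgaussians} to conclude that averaging two input vectors that lie in the same coset (with the coset chosen according to the squared weights) yields exactly $D_{\lat,s/\sqrt{2}}$.

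First I would describe the algorithm. Given input $\vec{v}_1,\ldots,\vec{v}_M \in \lat$, compute each coset label $\coset_i = \vec{v}_i + 2\lat$; these are elements of the set $\lat/(2\lat)$ of size $N = 2^n$. Feed the sequence $\coset_1,\ldots,\coset_M$ to the square sampler with confidence $\kappa$. For each output coset label the sampler returns, property~\ref{item:squareinputoutput} of Theorem~\ref{thm:squaresampler} guarantees that at least two unused input vectors $\vec{v}_i, \vec{v}_{i'}$ lie in that coset; output the average $(\vec{v}_i + \vec{v}_{i'})/2 \in \lat$. The running time is dominated by the square sampler and is $M\cdot \poly(n,\log\kappa)$.

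To analyze correctness, note that when $\vec{v}_i \sim D_{\lat,s}$ the coset label is distributed with probabilities $p_\coset = \rho_s(\coset)/\rho_s(\lat)$, and since the zero coset has maximal weight, $\pmax = \rho_s(2\lat)/\rho_s(\lat)$. Thus the input hypothesis $M \ge 10\kappa^2\,\rho_s(\lat)/\rho_s(2\lat)$ is exactly the hypothesis $M \ge 10\kappa^2/\pmax$ needed to invoke Theorem~\ref{thm:squaresampler}, which yields an output sequence within statistical distance $C_1 MN\log N \exp(-C_2\kappa) \le M\exp(C_1 n - C_2\kappa)$ of $m$ i.i.d.\ samples from $\vec{p}^2$, where, using the identity $\sum_\coset \rho_s(\coset)^2 = \rho_{s/\sqrt{2}}(\lat)^2$ from Lemma~\ref{lem:sumofgaussians},
\[
m \ge \frac{M \sum_\coset p_\coset^2}{32\kappa\, \pmax} = \frac{M}{32\kappa}\cdot \frac{\rho_{s/\sqrt{2}}(\lat)^2/\rho_s(\lat)^2}{\rho_s(2\lat)/\rho_s(\lat)} = \frac{M}{32\kappa}\cdot \frac{\rho_{s/\sqrt{2}}(\lat)^2}{\rho_s(\lat)\,\rho_s(2\lat)}\; ,
\]
matching the claimed lower bound. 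Conditional on a chosen output coset $\coset$, the two paired input vectors are i.i.d.\ copies of the shifted Gaussian $D_{2\lat + \vec{c}, s}$ (for any $\vec{c}\in\coset$), so by Lemma~\ref{lem:sumofgaussians} their average is distributed exactly as $D_{\lat,s/\sqrt{2}}$ restricted to $\coset$; summing over cosets with the squared weights $p_\coset^2/\pcol$ produces the unconditional distribution $D_{\lat,s/\sqrt{2}}$, as desired.

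The main obstacle I anticipate is a careful bookkeeping argument showing that the $m$ output vectors are \emph{jointly} i.i.d.\ rather than merely having the correct marginals, despite the intricate accept/reject procedure inside the square sampler. The key point is that, conditional on the sequence of input coset labels and the square sampler's internal coin flips, each averaged pair depends only on the within-coset randomness of two specific input vectors, and this residual randomness is independent across distinct pairs because the $\vec{v}_i$'s were independent to begin with. Once this independence is in place, the per-sample distributional computation above upgrades to the full joint distribution, and the statistical distance bound follows by combining the square sampler's guarantee with the (exact) equality provided by Lemma~\ref{lem:sumofgaussians}.
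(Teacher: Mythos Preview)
Your proposal is correct and follows essentially the same approach as the paper: apply the square sampler to the coset labels mod $2\lat$, verify that $\pmax = \rho_s(2\lat)/\rho_s(\lat)$ so the hypothesis of Theorem~\ref{thm:squaresampler} matches, use the identity $\sum_{\coset}\rho_s(\coset)^2 = \rho_{s/\sqrt{2}}(\lat)^2$ from Lemma~\ref{lem:sumofgaussians} for the bound on $m$, and invoke Lemma~\ref{lem:sumofgaussians} again to see that averaging a pair from a coset chosen with squared weight yields exactly $D_{\lat,s/\sqrt{2}}$. Your explicit remark on joint independence is a welcome addition; the paper handles it more tersely by appealing directly to Item~\ref{item:squaredistribution} of Theorem~\ref{thm:squaresampler}.
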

\begin{proof}
Let $(\vec{X}_1, \ldots, \vec{X}_M)$ be the input vectors. 
For each $i$, let $\coset_i \in \lat/(2\lat)$ be the coset of $\vec{X}_i$. The combiner runs the algorithm from Theorem~\ref{thm:squaresampler} with input $\kappa$
and $(\coset_1,\ldots,\coset_M)$,
receiving output $(\coset_1', \ldots, \coset_m')$. 
(Formally, we must encode the cosets as integers in $\{1,\ldots, 2^n\}$.) Finally, for each $\coset_i'$, it chooses a pair of unpaired vectors $\vec{X}_j, \vec{X}_k$ with 
$\coset_j = \coset_k = \coset_i'$ and outputs $\vec{Y}_i = (\vec{X}_j + \vec{X}_k)/2$.

The running time of the algorithm follows from Item~\ref{item:squareruntime} of Theorem~\ref{thm:squaresampler}. Furthermore, we note that by Item~\ref{item:squareinputoutput} of the same theorem, there will always be a pair of indices $j,k$ for each $i$ as above.

To prove correctness, we observe that for $\coset \in \lat/(2\lat)$ and $\vec{y} \in \coset$, 
\[ 
\Pr[\vec{X}_i = \vec{y}] = \frac{\rho_s(\coset)}{\rho_s(\lat)} \cdot \Pr_{\vec{X} \sim D_{\coset,s}}[\vec{X} = \vec{y}] \;.
\]
In particular, we have that $\Pr[\coset_i = \coset] = \rho_s(\coset)/\rho_s(\lat)$, and $2\lat$ is the coset with the highest probability. 
Then, the cosets $(\coset_1,\ldots,\coset_M)$ satisfy the conditions necessary for Item~\ref{item:squaredistribution} of Theorem~\ref{thm:squaresampler} with $\max p_i = \rho_s(2\lat)/\rho_s(\lat)$. 

Applying the theorem, up to statistical distance $M \exp(C_1 n - C_2 \kappa)$, we have that the output vectors are independent, and
\[
m 
\geq M \cdot \frac{1}{32\kappa }\cdot\frac{\sum_{\coset \in \lat/(2\lat)}\rho_s(\coset)^2}{\rho_s(\lat)\rho_s(2\lat)}
= M \cdot \frac{1}{32\kappa }\cdot\frac{\rho_{s/\sqrt{2}}(\lat)^2}{\rho_s(\lat)\rho_s(2\lat)}
\; ,
\]
where the equality follows from Lemma~\ref{lem:sumofgaussians}.
 Furthermore, we have $\Pr[\coset_i' = \coset] = \rho_s(\coset)^2/\sum_{\coset'} \rho_s(\coset')^2$ for any coset $\coset \in \lat/(2\lat)$. Therefore, for any $\vec{y} \in \lat$, 
\begin{align*}
\Pr[\vec{Y}_i = \vec{y}] &= \frac{1}{\sum \rho_s(\coset)^2} \cdot \sum_{\coset \in \lat/(2\lat)} \rho_s(\coset)^2 \cdot \Pr_{(\vec{X}_j, \vec{X}_k) \sim D_{\coset, s}^2}[(\vec{X}_j + \vec{X}_k)/2  = \vec{y}]\\
&= \Pr_{(\vec{X}_1, \vec{X}_2) \sim D_{\lat, s}^2}[(\vec{X}_1 + \vec{X}_2)/2 = \vec{y} ~|~ \vec{X}_1 + \vec{X}_2 \in 2\lat] 
\; .
\end{align*}
The result then follows from Lemma~\ref{lem:sumofgaussians}.
\end{proof}

By calling the algorithm from Proposition~\ref{prop:combiner} repeatedly, we obtain a general discrete Gaussian combiner.

\begin{corollary}
\label{cor:pipeline}
There is an algorithm that takes as input a lattice $\lat \subset \R^n$, $\ell \in \N$ (the step parameter), 
$\kappa \geq 2$ (the confidence parameter), and $M = (32\kappa)^{\ell+1} 2^n$ vectors in $\lat$ such that, if the input vectors are distributed as $D_{\lat, s}$ for some $s > 0$, then the output is a sequence of $2^{n/2}$ vectors whose distribution is within statistical distance $\ell M \exp(C_1 n-C_2\kappa )$ of independent samples from $D_{\lat, 2^{-\ell/2}s}$.
The algorithm runs in time $\ell M \cdot \poly(n,\log \kappa)$.
\end{corollary}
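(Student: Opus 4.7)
}

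The plan is to iterate the combiner of Proposition~\ref{prop:combiner} exactly $\ell$ times, where iteration $i$ takes the current batch of vectors (conditionally distributed as $D_{\lat, s_{i-1}}$ with $s_i := 2^{-i/2} s$) and produces a batch distributed as $D_{\lat, s_i}$. Let $M_0 := M$ and let $M_i$ denote the number of vectors surviving after iteration $i$. Set $a_i := \rho_{s_i}(\lat)$. The key algebraic fact is that $\rho_{s_{i-1}}(2\lat) = \rho_{s_{i-1}/2}(\lat) = \rho_{s_{i+1}}(\lat) = a_{i+1}$, so the lower bound on $m$ from Proposition~\ref{prop:combiner} combined with Lemma~\ref{lem:sumofgaussians} gives
\[
M_i \;\ge\; M_{i-1}\cdot \frac{1}{32\kappa}\cdot \frac{a_i^2}{a_{i-1}\,a_{i+1}}\, .
\]

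The crucial step (the ``magic cancellation'' of the introduction) is that this recursion telescopes. Unrolling,
\[
M_\ell \;\ge\; M\cdot (32\kappa)^{-\ell}\cdot \prod_{i=1}^\ell \frac{a_i^2}{a_{i-1}\,a_{i+1}} \;=\; M\cdot (32\kappa)^{-\ell}\cdot \frac{a_1\,a_\ell}{a_0\,a_{\ell+1}}\, ,
\]
since in the product each interior $a_j^2$ in the numerator cancels against two copies of $a_j$ in the denominator. To control the surviving ratio, I would apply Poisson summation, $\rho_t(\lat) = (t^n/\det\lat)\,\rho_{1/t}(\lat^*)$, to write $a_j/a_{j+1} = 2^{n/2}\cdot\rho_{1/s_j}(\lat^*)/\rho_{\sqrt{2}/s_j}(\lat^*) \le 2^{n/2}$ (the last inequality holds pointwise since $\sqrt2/s_j > 1/s_j$). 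Equivalently, $a_1/a_0 \ge 2^{-n/2}$ and $a_\ell/a_{\ell+1} \ge 1$, so $a_1 a_\ell/(a_0 a_{\ell+1}) \ge 2^{-n/2}$. Plugging in $M = (32\kappa)^{\ell+1} 2^n$ then yields $M_\ell \ge 32\kappa\cdot 2^{n/2} \ge 2^{n/2}$, as needed. At that point we truncate the output to exactly $2^{n/2}$ vectors.

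I also need to verify that the input-size hypothesis of Proposition~\ref{prop:combiner} holds at every iteration, i.e.\ $M_{i-1} \ge 10\kappa^2\cdot a_{i-1}/a_{i+1}$. Using the same telescoping lower bound $M_{i-1} \ge M(32\kappa)^{-(i-1)}\cdot a_1 a_{i-1}/(a_0 a_i)$, it suffices to show $M(32\kappa)^{-(i-1)}\cdot a_1 a_{i+1}/(a_0 a_i) \ge 10\kappa^2$. Since $a_1/a_0 \ge 2^{-n/2}$ and $a_{i+1}/a_i \ge 2^{-n/2}$ by the Poisson bound above, the left-hand side is at least $M(32\kappa)^{-(i-1)}\cdot 2^{-n}$, which for $i\le\ell$ and $M=(32\kappa)^{\ell+1}2^n$ is at least $(32\kappa)^2 \ge 10\kappa^2$; so the hypothesis holds at every stage.

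For the error analysis, Proposition~\ref{prop:combiner} contributes statistical distance at most $M_{i-1}\exp(C_1 n - C_2\kappa) \le M\exp(C_1 n - C_2\kappa)$ at iteration $i$, so by a union bound over $\ell$ iterations the total statistical distance between the joint distribution of the output and $2^{n/2}$ independent samples of $D_{\lat, 2^{-\ell/2} s}$ is at most $\ell M \exp(C_1 n - C_2\kappa)$. The running time is the sum of the per-iteration costs $M_{i-1}\cdot\poly(n,\log\kappa) \le M\cdot\poly(n,\log\kappa)$, giving the claimed $\ell M\cdot\poly(n,\log\kappa)$. The main obstacle in this argument is the telescoping identity combined with the right application of Poisson summation: without it, a naive bound on a single combiner step yields a loss of up to $2^{-n/2}$, which would compound to a disastrous $2^{-\ell n/2}$ over $\ell$ steps. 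Once the telescoping collapses the accumulated loss to a single $2^{-n/2}$ factor, the rest is routine bookkeeping.
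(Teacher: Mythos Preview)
Your proof is correct and follows essentially the same approach as the paper: both arguments hinge on the identity $\rho_{s_{i-1}}(2\lat)=a_{i+1}$ and the resulting telescoping of $\prod a_i^2/(a_{i-1}a_{i+1})$, together with the bound $a_j/a_{j+1}\le 2^{n/2}$ (which the paper cites as Lemma~\ref{lem:banaszczyk} and you derive directly via Poisson summation). The paper packages the same computation as an induction on the invariant $M_i \ge (32\kappa)^{\ell-i+1}2^{n/2}\,a_i/a_{i+1}$, but this is just your partially-unrolled telescope, so the two presentations are equivalent.
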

\begin{proof}
Let $\mathcal{X}_0 = (\vec{X}_1,\ldots, \vec{X}_M)$ be the sequence of input vectors. For $i = 0,\ldots, \ell-1$, the algorithm calls the procedure from Proposition~\ref{prop:combiner} with input $\lat$, $\kappa$, and $\mathcal{X}_i$, receiving an output sequence $\mathcal{X}_{i+1}$ of some length $M_{i+1}$. Finally, the algorithm outputs the first $2^{n/2}$ vectors of $\mathcal{X}_\ell$ (or fails if there are not enough vectors).

The running time is clear. Fix $\lat$, $s$, and $\ell$. 
For convenience, let $\psi(i) := \rho_{2^{-i/2}s}(\lat)$. Note that by Lemma~\ref{lem:banaszczyk} we have that $1\leq \psi(i)/\psi(i+1) \leq 2^{n/2}$ for all $i$, a fact that we use repeatedly below. We wish to prove by induction that $\mathcal{X}_i$ is within statistical distance $i M\exp(C_1 n-C_2\kappa )$ of $D_{\lat,2^{-i/2}s}^{M_i}$ with
\begin{align}\label{eq:inductionhypo}
M_i \geq (32\kappa)^{\ell - i + 1} \cdot 2^{n/2} \frac{\psi(i)}{\psi(i+1)}
\; 
\end{align}
for all $i$.

Since $M_0 = M = (32\kappa)^{\ell + 1} 2^{n}$ and $\psi(0)/\psi(1) \leq 2^{n/2}$, it follows that~\eqref{eq:inductionhypo} holds when $i=0$. Suppose that $\mathcal{X}_i$ has the correct distribution and~\eqref{eq:inductionhypo} holds for some $i$ with $0 \leq i < \ell$. Notice that the right-hand side of~\eqref{eq:inductionhypo} is at least 
$10 \kappa^2 \psi(i)/\psi(i+2)$ and that the latter is precisely the lower bound on 
$M_i$ appearing in Proposition~\ref{prop:combiner}.
We can therefore apply the proposition and the induction hypothesis, and obtain that (up to statistical distance at most $(i+1) M \exp(C_1 n - C_2 \kappa)$),
$\mathcal{X}_{i+1}$ has the correct distribution with
\[
M_{i+1} \geq M_{i} \cdot \frac{1}{32\kappa }\cdot \frac{\psi(i+1)^2}{\psi(i)\psi(i+2)} \geq (32\kappa)^{\ell - i} \cdot 2^{n/2} \frac{\psi(i+1)}{\psi(i+2)}
\; ,
\]
as needed. 

The result follows by noting that $M_\ell \geq 2^{n/2} \psi(\ell)/\psi(\ell+1) \geq 2^{n/2}$.
\end{proof}

\subsection{A general discrete Gaussian sampler}

\begin{theorem}
\label{thm:DGS}
There is an algorithm that solves $\DGS{\exp(-\Omega(\kappa))}{}{2^{n/2}}$  in time $2^{n +  \polylog(\kappa) + o(n)}$ for any $\kappa \geq \Omega(n)$.
\end{theorem}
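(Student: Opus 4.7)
The strategy is to combine Proposition~\ref{prop:startgauss} (which samples from the discrete Gaussian but only at parameters that are too large for us) with Corollary~\ref{cor:pipeline} (which lowers the parameter by a factor $2^{\ell/2}$ at the cost of squaring the input size per level plus an overhead of $(32\kappa)^{\ell+1}$). I would first fix a BKZ parameter $r = \log^2 n$, so that $(Cr)^{n/r} = 2^{O(n \log\log n/\log^2 n)} = 2^{o(n)}$; this will make the initial Gaussian parameter needed by Proposition~\ref{prop:startgauss} exceed $\eta_{0.99}(\lat)$ by at most a factor $2^{o(n)}\sqrt{n\log n}$. Next, given the input parameter $s$, I would set $\ell$ to be the smallest non-negative integer for which
\[
2^{\ell/2} s \;\geq\; (Cr)^{n/r}\sqrt{n\log n}\,\eta_{0.99}(\lat)\;,
\]
so that Proposition~\ref{prop:startgauss} applied at parameter $s_0 = 2^{\ell/2} s$ actually samples from $\lat$ itself (not merely a sublattice). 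Set $M = (32\kappa)^{\ell+1} 2^n$.

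The algorithm then runs Proposition~\ref{prop:startgauss} to produce $M$ independent samples from $D_{\lat, s_0}$, and feeds them into the combiner pipeline of Corollary~\ref{cor:pipeline} with step parameter $\ell$ and confidence parameter $\kappa$. The pipeline returns $2^{n/2}$ vectors whose joint distribution is within statistical distance $\ell M \exp(C_1 n - C_2\kappa)$ of $D_{\lat, s}^{\,2^{n/2}}$. For $\kappa \geq C_3 n$ with $C_3$ a sufficiently large constant (absorbing $C_1$ and the $\log(\ell M)$ term, both $O(n + \ell\log\kappa)$), this distance is $\exp(-\Omega(\kappa))$ as required by $\DGS{\exp(-\Omega(\kappa))}{}{2^{n/2}}$.

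For the running time, Proposition~\ref{prop:startgauss} takes $(2^{O(r)} + M)\cdot \poly(n) = 2^{\polylog(n)} + M\cdot\poly(n)$, and the pipeline takes $\ell M\cdot\poly(n, \log\kappa)$. Plugging in $M = (32\kappa)^{\ell+1} 2^n$ and using $\ell = o(n) + O(\log(\eta_{0.99}(\lat)/s))$, both terms are dominated by $2^{n} \cdot (32\kappa)^{\ell+1} \cdot \poly(n,\log\kappa) = 2^{n + \polylog(\kappa) + o(n)}$, provided $\ell\log(32\kappa)$ is absorbed into $\polylog(\kappa) + o(n)$. The main place where care is needed is this final bookkeeping step: one has to verify that for the regime of $s$ relevant to the theorem (where the input is given to polynomial precision so that $\log(\eta/s)$ is polynomially bounded), the choice $\ell = O(\log n + \log(\eta/s))$ does yield $\ell\log\kappa \leq \polylog(\kappa) + o(n)$. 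The remaining steps---verifying that the output of Proposition~\ref{prop:startgauss} really is exactly $D_{\lat, s_0}$, that the error accumulated through $\ell$ applications of Proposition~\ref{prop:combiner} stays within the target $\exp(-\Omega(\kappa))$ bound, and that the output length is at least $2^{n/2}$---follow directly from the cited statements without further work.
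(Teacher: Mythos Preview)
Your plan has a genuine gap at exactly the point you flagged as ``needing care.'' You propose to choose $\ell$ as the smallest integer with $2^{\ell/2}s \geq (Cr)^{n/r}\sqrt{n\log n}\,\eta_{0.99}(\lat)$, so that Proposition~\ref{prop:startgauss} returns $\lat' = \lat$. There are two problems with this. First, the algorithm does not know $\eta_{0.99}(\lat)$ (computing it is itself a hard lattice problem), so it cannot choose $\ell$ this way. Second, even if you replace $\eta_{0.99}(\lat)$ by any efficiently computable upper bound, the resulting $\ell$ is governed by $\log(\eta_{0.99}(\lat)/s)$, and since the DGS problem here has $\sigma = 0$ there is no lower bound on $s$: the ratio $\eta_{0.99}(\lat)/s$ can be exponential (or worse) in $n$ on legitimate inputs. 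Then $\ell\log(32\kappa)$ is not $o(n) + \polylog(\kappa)$, and the claimed running time bound fails. Your appeal to ``polynomial precision'' does not save this: polynomial bit-length of $s$ only gives $|\log s| \leq \poly(\text{input size})$, which still allows $\ell$ to be polynomially large, not $o(n/\log\kappa)$.

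The paper avoids this entirely by a different idea. It fixes $\ell = 4\lceil \log\kappa + \log^2 n\rceil$ and $r = n/\log n$ \emph{independently of $s$}, and simply accepts that Proposition~\ref{prop:startgauss} may return a proper sublattice $\lat' \subsetneq \lat$. The combiner is then run on $\lat'$, producing samples from $D_{\lat', s}$. The final step is to observe that, by the guarantee of Proposition~\ref{prop:startgauss}, $\lat'$ contains every lattice vector of length at most $r^{-n/r}\cdot 2^{\ell/2}s > \sqrt{\kappa}\,s$, and hence (by the Banaszczyk tail bound, Lemma~\ref{lem:banaszczyktail}) $D_{\lat',s}$ and $D_{\lat,s}$ are within statistical distance $\exp(-\Omega(\kappa))$. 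This decouples the number of combiner levels from the unknown quantity $\eta_{0.99}(\lat)/s$ and is the missing ingredient in your argument.
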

\begin{proof}
On input $\lat \subset \R^n$ a lattice, $s > 0$, and $\kappa \geq \Omega(n)$, the algorithm behaves as follows. First, it runs the sampler from Proposition~\ref{prop:startgauss} on $\lat$ with parameters $r$, $\hat{s} = 2^{\ell/2} s$, and $M = (32\kappa)^{\ell + 2} \cdot 2^n$, with $r$ and $\ell$ to be set in the analysis. It receives as output a sublattice $\smalllat \subseteq \lat$ and vectors $\vec{X}_1,\ldots, \vec{X}_M \in \lat'$. It then runs the combiner from Corollary~\ref{cor:pipeline} with input $\smalllat$, $\ell $, $\kappa$, and $\vec{X}_1,\ldots, \vec{X}_M$ and outputs the result.

The running time of the first stage of the algorithm is $(2^{O(r)} + M) \cdot \poly(n)$ by Proposition~\ref{prop:startgauss}, and by Corollary~\ref{cor:pipeline}, the running time of the second stage is $M \ell \cdot \poly(n, \log \kappa)$. Setting $\ell = 4\ceil{\log \kappa + \log^2 n} $ and $r = n/\log n$, it follows that the running time is as claimed.
Applying the proposition and corollary again, we have that the output is within statistical distance $\exp(-\Omega(\kappa))$ of $D_{\smalllat, s}^{2^{n/2}}$. Furthermore, we have that $\smalllat $ contains all vectors of length at most $r^{-n/r} \hat{s} > \sqrt{\kappa} s$. 

It remains to prove that $D_{\smalllat, s}$ is within statistical distance $\exp(-\Omega(\kappa))$ of $D_{\lat, s}$. Notice that $D_{\lat', s}$ is a restriction of the distribution $D_{\lat, s}$ to $\lat'$ and hence the statistical distance between these two distributions is  
\[
\Pr_{\vec{X} \sim D_{\lat, s}}[\vec{X} \in \lat \setminus \smalllat] < \Pr_{\vec{X} \sim D_{\lat, s}}[\length{\vec{X}} > \sqrt{\kappa} s] < \exp(-\Omega(\kappa))
\; ,
\]
as needed, where we used Lemma~\ref{lem:banaszczyktail}.
\end{proof}

\section{Solving \texorpdfstring{\problem{SVP}}{SVP} in \texorpdfstring{$2^{n + o(n)}$}{2(n+o(n))} time}
\label{sec:SVP}

\subsection{A bound on the Gaussian mass}
\label{sec:boundonmass}

In this section, we prove a bound on the Gaussian mass of a lattice that follows from an upper bound on the kissing number due to Kabatjanski{\u\i} and Leven{\v{s}}te{\u\i}n~\cite{KL78}. In particular, we use the following lemma from~\cite{PS09} based on~\cite{KL78}. For convenience, we define $\beta := 2^{0.401}$, and we use this notation throughout this section.

\begin{lemma}[{{\cite[Lemma 3]{PS09}}}] 
\label{lem:KL}
Let $\lat \subseteq \mathbb{R}^n$ be a lattice with $\lambda_1(\lat) = 1$. Then for any $r \geq 1$, the number of lattice vectors of length at most $r$ is at most $\beta^{n + o(n)} r^n$.
\end{lemma}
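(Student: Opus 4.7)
The plan is to derive the bound from the Kabatjanski--Leven{\v{s}}te{\u\i}n upper bound on spherical codes in $\R^n$, applied via a thin-shell decomposition of $B(\vec{0}, r)$.

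The base case is at the kissing level: for two distinct minimum vectors $\vec{v}_1, \vec{v}_2 \in \lat$ of length $1$, the inequality $\length{\vec{v}_1 - \vec{v}_2} \geq \lambda_1(\lat) = 1$ combined with the law of cosines forces the angle between $\vec{v}_1$ and $\vec{v}_2$ to be at least $\pi/3$, so the Kabatjanski--Leven{\v{s}}te{\u\i}n bound on the kissing number gives at most $\beta^{n + o(n)}$ such vectors. This handles $r = 1$.

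For general $r \geq 1$, I would partition the set of lattice vectors with $1 \leq \length{\vec{v}} \leq r$ into thin geometric shells $S_k = \{\vec{v} \in \lat : R_k \leq \length{\vec{v}} < R_{k+1}\}$ with $R_{k+1}/R_k = 1 + 1/n$ and $R_0 = 1$, yielding $K = O(n \log r)$ shells. For two distinct lattice vectors $\vec{v}_1, \vec{v}_2$ in a common shell of mean radius $R$, the bound $\length{\vec{v}_1 - \vec{v}_2} \geq 1$ together with the law of cosines forces a minimum angular separation $\theta(R)$ between the normalized vectors $\vec{v}_1 / \length{\vec{v}_1}$ and $\vec{v}_2 / \length{\vec{v}_2}$, up to an error from the shell thickness that is negligible since $R_{k+1}/R_k - 1 = 1/n$. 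Applying the Kabatjanski--Leven{\v{s}}te{\u\i}n spherical code bound at angle $\theta(R)$ bounds $|S_k|$ by $\beta^{n+o(n)} R_{k+1}^n$. Summing over shells,
\[ |\lat \cap B(\vec{0}, r)| \leq 1 + \sum_{k=0}^{K-1} |S_k| \leq K \cdot \beta^{n+o(n)} r^n \leq \beta^{n+o(n)} r^n, \]
after absorbing the polynomial factor $K = O(n \log r)$ into the $o(n)$ exponent.

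The main technical obstacle is the per-shell estimate: showing that the Kabatjanski--Leven{\v{s}}te{\u\i}n spherical code bound at the angle $\theta(R)$ associated to a shell of radius $R$ yields at most $\beta^{n+o(n)} R^n$ points. The KL rate function must be analyzed carefully for the small-angle regime $\theta(R) \approx 1/R$, where it interpolates between the kissing regime at $\theta = \pi/3$ (contributing the factor $\beta^n$) and the volume regime as $\theta \to 0$ (contributing the factor $R^n$). The precise product structure $\beta^n R^n$ must be extracted from the explicit form of the KL rate function; verifying this scaling is the crux of the argument. An alternative and perhaps cleaner route is via the equivalent Kabatjanski--Leven{\v{s}}te{\u\i}n sphere packing density bound $\Delta \leq 2^{-0.599n+o(n)}$, which for the lattice packing induced by $\lat$ gives $\det(\lat) \geq 2^{0.599n+o(n)} \vol(B(\vec{0},1/2))$; combined with a Gaussian-smoothed lattice counting estimate of the form $N(r) \leq \rho_s(\lat) \exp(\pi r^2/s^2)$ at the balanced parameter $s = r\sqrt{2\pi/n}$ (restricted, if necessary, to the span of vectors of length at most $r$ to handle very elongated lattices), Poisson summation yields $N(r) \leq \beta^{n+o(n)} r^n$ as well. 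These two routes are essentially dual views of the same underlying bound.
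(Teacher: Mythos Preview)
The paper does not prove this lemma; it is quoted verbatim from \cite[Lemma~3]{PS09}. So there is no ``paper's own proof'' to compare against, and your outline is to be judged on its own merits.

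Your thin-shell decomposition together with the Kabatjanski{\u\i}--Leven{\v{s}}te{\u\i}n spherical-code bound is indeed the standard route (and is essentially how \cite{PS09} proceeds). However, you explicitly leave the central step unverified: you write that extracting the per-shell bound $|S_k| \le \beta^{n+o(n)} R_k^{\,n}$ from the explicit KL rate function ``is the crux of the argument,'' and then do not carry it out. That is precisely the content of the lemma; everything else in your outline is bookkeeping. As written, this is a plan, not a proof. Two smaller issues are also worth flagging. First, the projection-to-the-sphere step needs care: two lattice points in the same thin shell can be nearly parallel (for $R \gtrsim n$ the radial gap $R/n$ can exceed $1$), so the claimed angular separation does not follow immediately from $\|\vec{v}_1-\vec{v}_2\|\ge 1$ without an extra argument. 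Second, your shell count $K = O(n\log r)$ is absorbed into $\beta^{o(n)}$ only when $\log r = 2^{o(n)}$; for larger $r$ one switches to the global density bound $N(r) \le (r+\tfrac12)^n \cdot \vol(B^n)/\det(\lat)$, which together with the KL packing bound handles $r \ge n$.

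Your proposed alternative route has a genuine gap. After writing $N(r) \le e^{\pi r^2/s^2}\rho_s(\lat)$ and applying Poisson summation, you obtain $\rho_s(\lat) = s^n\det(\lat^*)\,\rho_{1/s}(\lat^*)$. The KL packing bound controls $\det(\lat)$, but it says nothing about $\rho_{1/s}(\lat^*)$, and at the balanced choice $s = r\sqrt{2\pi/n}$ this dual Gaussian mass is \emph{not} $1+o(1)$ in general: even after restricting to the span $V$ of short vectors so that $\lambda_{\dim V}(\lat') \le r$, one only gets $\eta_\varepsilon(\lat') \lesssim r\sqrt{\log n}$, which is far above $s$. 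So the Poisson step does not close without an independent bound on $\rho_{1/s}(\lat'^*)$---and the natural such bound is exactly the lemma you are trying to prove (this is how the paper's Lemma~\ref{lem:Gaussian-sum-bound} is derived \emph{from} Lemma~\ref{lem:KL}, not the other way around).
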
 

We now use Lemma~\ref{lem:KL} to bound $\rho_s(\lat)$. 

\begin{lemma}
\label{lem:Gaussian-sum-bound}
Let $\lat \subset \R^n$ be a lattice of rank at least one. Then for 
any $s > \sqrt{2\pi/n} \cdot \lambda_1(\lat)$,
\begin{equation}
\label{eq:klboundongaussianbigs}
\rho_s(\lat) \le 1 + \left(\frac{\beta^2 s^2 n}{2  \pi   e \cdot \lambda_1(\lat)^2}\right)^{n/2+1}2^{o(n)} \;,
\end{equation}
and for $s \leq \sqrt{2\pi/n} \cdot \lambda_1(\lat)$, we have
\begin{align}
\rho_s(\lat) \le 1 + e^{-\pi \lambda_1(\lat)^2/s^2}\cdot \beta^{n+o(n)}
\; . \label{eq:klboundongaussian}
\end{align}
\end{lemma}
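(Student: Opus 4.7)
My plan is to first reduce to the case $\lambda_1(\lat) = 1$ by scaling, since both claimed inequalities are homogeneous under $\lat \mapsto t\lat$, $s \mapsto ts$. Let $N(r) := |\{\vec{x} \in \lat \setminus \{\vec{0}\} : \|\vec{x}\| \leq r\}|$. Lemma~\ref{lem:KL} yields $N(r) \leq \beta^{n + o(n)} r^n$ for all $r \geq 1$, while $N(r) = 0$ for $r < 1$. Writing $\rho_s(\lat) - 1 = \sum_{\vec{x} \neq \vec{0}} e^{-\pi \|\vec{x}\|^2/s^2}$ as a Riemann--Stieltjes integral against $N$ and integrating by parts (the boundary terms vanish because $N$ grows polynomially while $e^{-\pi r^2/s^2}$ decays super-polynomially), one obtains, after the substitution $u = \pi r^2 / s^2$,
\[
\rho_s(\lat) - 1 \;\leq\; \beta^{n+o(n)} \int_1^\infty \frac{2\pi r^{n+1}}{s^2}\, e^{-\pi r^2/s^2}\, dr \;=\; \beta^{n+o(n)} (s/\sqrt{\pi})^n \int_{a}^\infty u^{n/2} e^{-u}\, du,
\]
where $a := \pi / s^2$. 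The two hypotheses of the lemma correspond exactly to $a < n/2$ and $a \geq n/2$, i.e., to whether the lower limit sits below or above the mode of the integrand $u^{n/2} e^{-u}$.

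For the first regime ($a < n/2$), the tail of the incomplete gamma integral is bounded by the full gamma: $\int_a^\infty u^{n/2} e^{-u}\, du \leq \Gamma(n/2+1) = (n/(2e))^{n/2} \cdot 2^{o(n)}$ via Stirling. Plugging in yields $\rho_s(\lat) - 1 \leq (\beta^2 s^2 n/(2\pi e))^{n/2} \cdot 2^{o(n)}$. The hypothesis $s^2 > 2\pi/n$ forces $\beta^2 s^2 n/(2\pi e) > \beta^2/e$, a positive constant bounded away from $0$, so increasing the exponent from $n/2$ to $n/2 + 1$ costs at most a multiplicative constant which is absorbed into $2^{o(n)}$; this delivers \eqref{eq:klboundongaussianbigs}.

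For the second regime ($a \geq n/2$), the integrand $u^{n/2} e^{-u}$ is monotonically decreasing on $[a,\infty)$. I would prove the tail estimate $\int_a^\infty u^{n/2} e^{-u}\, du \leq 2^{o(n)} \cdot a^{n/2} e^{-a}$ starting from the pointwise inequality $u^{n/2} \leq a^{n/2} \exp(n(u-a)/(2a))$ (an instance of $1+x \leq e^x$), which integrates to $\int_a^\infty u^{n/2} e^{-u}\, du \leq a^{n/2} e^{-a}/(1 - n/(2a))$ whenever $a > n/2$. Substituting back, the prefactor $(s/\sqrt{\pi})^n$ exactly cancels $a^{n/2} = (\pi/s^2)^{n/2}$, leaving $\rho_s(\lat) - 1 \leq \beta^{n+o(n)} e^{-\pi/s^2}$, which becomes \eqref{eq:klboundongaussian} upon undoing the normalization $\lambda_1 = 1$. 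The chief technical annoyance is the transition at $a \approx n/2$, where the denominator $1 - n/(2a)$ in the naive tail bound degenerates; this boundary has to be patched by a direct comparison to $\Gamma(n/2+1)$ via Stirling, after which the resulting sub-exponential factors in $n$ are absorbed into $2^{o(n)}$.
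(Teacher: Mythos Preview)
Your proof is correct and arrives at the same conclusion, but the mechanics differ from the paper's argument. The paper avoids integration entirely: after normalizing $\lambda_1=1$ it slices $\lat\setminus\{\vec0\}$ into geometric shells $T_{t^i}=\{\vec x: t^i\le\|\vec x\|<t^{i+1}\}$ with $t=1+1/n$, applies Lemma~\ref{lem:KL} to each shell, and bounds the resulting series by (number of significant terms)$\times\max_{r\ge1}e^{-\pi r^2/s^2}r^n$. The two cases then fall out of a one-line calculus fact about where that maximum sits. Your route instead passes through Abel summation to an incomplete gamma integral and splits on whether the lower limit $a=\pi/s^2$ is below or above the mode $n/2$; you then invoke Stirling for the full gamma in one regime and a Laplace-type tail bound in the other. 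Both arguments ultimately rest on the same pointwise estimate $N(r)\le\beta^{n+o(n)}r^n$, so neither is deeper than the other; yours is the more ``textbook'' treatment (and makes the appearance of $n/(2e)$ via Stirling transparent), while the paper's shell-and-max argument is shorter and sidesteps the boundary annoyance you flag at $a\approx n/2$, since the maximum is simply attained at $r=1$ there with no degeneration. Your patching of that boundary is fine: for $a\in[n/2,n/2+1)$ the ratio $\Gamma(n/2+1)/(a^{n/2}e^{-a})$ is $O(\sqrt n)$ by Stirling, and for $a\ge n/2+1$ your factor $1/(1-n/(2a))$ is at most $1+n/2$, both comfortably $2^{o(n)}$.
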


We note that an easy calculation shows that the right-hand side of Eq.~\eqref{eq:klboundongaussianbigs} is never smaller than the right-hand side of Eq.~\eqref{eq:klboundongaussian}. In particular, this means that Eq.~\eqref{eq:klboundongaussianbigs} actually applies for all $s$.

\begin{proof}[Proof of Lemma~\ref{lem:Gaussian-sum-bound}]
We assume without loss of generality that $\lat$ is normalized so that $\lambda_1(\lat) = 1$. Let $t := 1+1/n$.
For $r \ge 1$, define $T_r:= \{\vec{x} \in \R^n : r \le  \|\vec{x}\| < tr\}$.
By Lemma~\ref{lem:KL},
$
|\cL \cap T_r| \le \beta^{n+o(n)} r^n
$,
and, for any vector $\vec{y} \in \cL \cap T_r$, 
$
\rho_s(\vec{y}) \le e^{-\pi r^2/s^2}
$.
Therefore, 
\[
\rho_s(\cL \cap T_r) \le e^{-\pi r^2/s^2}\cdot \beta^{n + o(n)} \cdot r^n 
\; .
\]
So, we have
\begin{align*}
\rho_s(\lat) &= 1+ \sum_{i = 0}^\infty \rho_s(\cL \cap T_{t^i}) \\
&\leq 1 + \beta^{n + o(n)} \cdot \sum_{i = 0}^\infty  e^{-\pi t^{2i}/s^2} t^{in} \\
&\leq 1 + (1+s) \beta^{n + o(n)} \cdot \max_{r \geq 1}e^{-\pi r^2/s^2} r^n\; ,
\end{align*}
where we have used the fact that
$e^{-\pi t^{2i}/s^2} t^{in}$ decays geometrically when $i$ is at least, say, $(1+s)\cdot \poly(n)$, and so the sum up to that point is the same as the infinite sum up to a constant factor. 
Note that for any $a, b >0$, the maximum of 
$e^{-ar^2} r^b$ over the interval $r \ge 1$ is obtained at 
$r=\sqrt{b/(2a)}$ if this value is at least $1$ or at $r=1$ otherwise. The result follows.
\end{proof}

\begin{proposition}
\label{prop:boundonmass-KL}
Let $\lat \subset \R^n$ be a lattice of rank at least one. Let 
\[
s = \sqrt{\frac{2\pi e}{\beta^2 n}} \cdot \lambda_1(\lat)
\; .
\] 
Then,
 \[ \Pr_{\vec{X} \sim D_{\lat,s}}[\length{\vec{X}} = \lambda_1(\lat)] \geq e^{-\beta^2 n/(2e) - o(n)} \approx 1.38^{-n - o(n)}
\; .
\]
\end{proposition}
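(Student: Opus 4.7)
The plan is to directly compute both the numerator and denominator of the ratio defining the probability, using Lemma~\ref{lem:Gaussian-sum-bound} to bound the denominator $\rho_s(\lat)$. Write $\lambda_1 = \lambda_1(\lat)$ and observe that
\[
\Pr_{\vec{X} \sim D_{\lat,s}}[\length{\vec{X}} = \lambda_1] \;=\; \frac{\rho_s\big(\{\vec{x} \in \lat : \length{\vec{x}} = \lambda_1\}\big)}{\rho_s(\lat)}.
\]
For the numerator, I will use only the trivial fact that shortest vectors come in antipodal pairs, so there are at least two of them, giving a numerator of at least $2\,e^{-\pi \lambda_1^2 / s^2}$. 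Plugging in the prescribed value $s^2 = 2\pi e \lambda_1^2/(\beta^2 n)$ yields $\pi \lambda_1^2/s^2 = \beta^2 n/(2e)$, so the numerator is at least $2\,e^{-\beta^2 n/(2e)}$.

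For the denominator, I first check that this value of $s$ lies in the regime of Eq.~\eqref{eq:klboundongaussianbigs}, i.e., $s > \sqrt{2\pi/n}\cdot\lambda_1$; this reduces to $e > \beta^2 = 2^{0.802}$, which holds. Then Lemma~\ref{lem:Gaussian-sum-bound} gives
\[
\rho_s(\lat) \;\le\; 1 + \left(\frac{\beta^2 s^2 n}{2\pi e \lambda_1^2}\right)^{n/2+1} 2^{o(n)}.
\]
The key observation is that the choice of $s$ is engineered precisely so that the quantity $\beta^2 s^2 n/(2\pi e \lambda_1^2)$ equals exactly $1$, making the bound collapse to $\rho_s(\lat) \le 2^{o(n)}$.

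Combining the two bounds gives
\[
\Pr_{\vec{X} \sim D_{\lat,s}}[\length{\vec{X}} = \lambda_1] \;\ge\; \frac{2\,e^{-\beta^2 n/(2e)}}{2^{o(n)}} \;=\; e^{-\beta^2 n/(2e) - o(n)},
\]
and a direct numerical evaluation $e^{-\beta^2/(2e)} \approx 1.378^{-1}$ yields the claimed $\approx 1.38^{-n-o(n)}$. There is no real obstacle here beyond verifying the regime condition and noting the ``magic'' cancellation in Lemma~\ref{lem:Gaussian-sum-bound} at this particular $s$; both amount to straightforward arithmetic. The whole point of the proposition is that the parameter $s$ was chosen exactly to optimize this trade-off between the Gaussian weight on the shortest vectors and the total mass, with the Kabatjanski{\u\i}--Leven{\v{s}}te{\u\i}n kissing number bound controlling the denominator through Lemma~\ref{lem:Gaussian-sum-bound}.
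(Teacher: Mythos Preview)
Your proof is correct and follows essentially the same approach as the paper: bound the denominator $\rho_s(\lat)$ by $2^{o(n)}$ via Lemma~\ref{lem:Gaussian-sum-bound} (using that the chosen $s$ makes the bracketed ratio equal to $1$), and bound the numerator below by $e^{-\pi\lambda_1^2/s^2}=e^{-\beta^2 n/(2e)}$. Your version is slightly more explicit in verifying the regime condition $s>\sqrt{2\pi/n}\,\lambda_1$ and in noting the antipodal pair (the paper just uses one shortest vector, which already suffices).
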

\begin{proof}
By Lemma~\ref{lem:Gaussian-sum-bound}, we have that $\rho_s(\lat) = 2^{o(n)}$. 
Therefore,
\[
\Pr_{\vec{X} \sim D_{\lat,s}}[\length{\vec{X}} = \lambda_1(\lat)]
\geq
e^{-\pi/s^2}/\rho_s(\lat)
\geq 
e^{-\pi/s^2 - o(n)} = e^{-\beta^2 n/(2e) - o(n)} 
\; ,
\]
as needed.
\end{proof}

An easy calculation shows that the probability in Proposition~\ref{prop:boundonmass-KL} is maximized to within a factor of two when $s = 1/\eta_1(\lat^*)$. I.e., for any shortest non-zero vector $\vec{y} \in \lat$, 
\[
\max_s \Pr_{ \vec{X} \sim D_{\lat, s}}[\vec{X} = \vec{y}] \leq 2 \Pr_{ \vec{X} \sim D_{\lat, 1/\eta_1(\lat^*)}}[\vec{X} = \vec{y}] = 
 \exp(-\pi \eta_1(\lat^*)^2 \cdot \lambda_1(\lat)^2)
\; .
\]
\subsection{A reduction from \texorpdfstring{$\problem{SVP}$}{SVP} to \texorpdfstring{$\problem{DGS}$}{DGS}}

\begin{theorem}
\label{thm:svptodgs}
There is a reduction from $\problem{SVP}$ to $\DGS{\frac{1}{2}}{}{2^{n/2}}$. The reduction makes $O(n)$ calls to the $\problem{DGS}$ oracle, preserves the dimension of the lattice, and runs in time $  2^{n/2} \cdot \poly(n)$.
\end{theorem}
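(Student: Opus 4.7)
The plan is to invoke the $\problem{DGS}$ oracle at a geometric sequence of parameters $s$ and output the shortest non-zero vector seen across all returned samples. The crux is Proposition~\ref{prop:boundonmass-KL}: at the ideal parameter $s^* := \sqrt{2\pi e/(\beta^2 n)} \cdot \lambda_1(\lat)$, each sample from $D_{\lat,s^*}$ is a shortest vector with probability at least $1.38^{-n-o(n)}$. Since $2^{n/2}/1.38^n = 2^{\Omega(n)}$, a batch of $2^{n/2}$ \emph{ideal} independent samples misses every shortest vector only with probability $\exp(-2^{\Omega(n)})$; subtracting the $\eps = 1/2$ statistical distance between the oracle's joint output and $D_{\lat,s^*}^{2^{n/2}}$ still leaves success probability at least $1/3$ per call tuned to $s^*$.

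Of course $\lambda_1(\lat)$ is unknown, so first I would run the polynomial-time LLL-style algorithm of Theorem~\ref{thm:BKZ} (with a constant reduction parameter) to obtain a vector $\vec{b}_1 \in \lat$ with $\lambda_1(\lat) \leq B := \length{\vec{b}_1} \leq 2^{O(n)} \lambda_1(\lat)$. Then for $i = 0, 1, \ldots, Cn$ and a small constant $c > 1$ to be chosen, I would invoke the oracle with $s_i := \sqrt{2\pi e/(\beta^2 n)} \cdot B \cdot c^{-i}$ and record every returned vector. Since $B/\lambda_1(\lat) \leq 2^{O(n)}$, some index $i^*$ satisfies $s_{i^*}/s^* \in [1,c]$; i.e., the grid is fine enough to land at least one call in any prescribed multiplicative window around $s^*$.

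To show that such an $s_{i^*}$ works, I would extend the bound from Proposition~\ref{prop:boundonmass-KL} to $s = \alpha s^*$ for $\alpha \in [1,c]$: direct substitution gives $e^{-\pi \lambda_1(\lat)^2/s^2} = \exp(-\beta^2 n/(2e\alpha^2))$, and an application of Lemma~\ref{lem:Gaussian-sum-bound} (where the quantity $\beta^2 s^2 n/(2\pi e \lambda_1(\lat)^2)$ evaluates to exactly $\alpha^2$ under our substitution) gives $\rho_s(\lat) \leq \alpha^{n+o(n)}$. The per-sample success probability is therefore at least $\exp(-n(\beta^2/(2e\alpha^2) + \log \alpha) - o(n))$, and a short calculation shows that this exponent is strictly smaller than $(\log 2)/2 \approx 0.347$ for all $\alpha$ in an open interval around $1$ (at $\alpha = 1$ the exponent equals $\beta^2/(2e) \approx 0.320$, leaving a positive gap of $\approx 0.027$). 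Fixing $c$ smaller than the width of this interval then guarantees $2^{\Omega(n)}$ expected shortest vectors at $s_{i^*}$ in the ideal distribution, and hence success probability at least $1/3$ after accounting for statistical error.

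The reduction returns the shortest non-zero vector across all $O(n)$ batches, yielding overall success probability at least $1/3$; standard parallel repetition boosts this to $1 - 2^{-\Omega(n)}$ while keeping the total number of oracle calls $O(n)$ and the post-processing time $2^{n/2} \cdot \poly(n)$ (the only super-polynomial cost is a brute-force length comparison over the $O(n) \cdot 2^{n/2}$ returned vectors). The main subtlety I expect to grapple with is the interplay between the $\eps = 1/2$ statistical error and the per-sample success probability: the latter must be \emph{exponentially} larger than $2^{-n/2}$ so that the ideal-distribution success event has probability well above $1/2$, and this leaves only a constant-factor multiplicative slack in $s$, which is precisely what the fine geometric grid provides.
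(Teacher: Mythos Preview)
Your proof is correct and follows essentially the same route as the paper: localize $\lambda_1(\lat)$ to within a $2^{O(n)}$ factor via basis reduction, sweep a geometric grid of parameters, and invoke Proposition~\ref{prop:boundonmass-KL} at the grid point nearest the ideal $s^*$ (the paper handles the grid spacing by applying Lemma~\ref{lem:banaszczyk} to compare $\rho_{s_i}(\lat)$ with $\rho_{\hat s}(\lat)$, rather than re-deriving the bound from Lemma~\ref{lem:Gaussian-sum-bound} as you do, but the effect is identical). One small slip: boosting from success probability $1/3$ to $1-2^{-\Omega(n)}$ by parallel repetition requires $\Omega(n)$ independent runs and hence $\Omega(n^2)$ oracle calls, not $O(n)$---but this step is unnecessary, since constant success probability already suffices for the theorem as stated.
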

\begin{proof}
Let $\alg{D}$ be an oracle solving $\DGS{\frac{1}{2}}{}{2^{n/2}}$. We construct an algorithm solving $\problem{SVP}$ as follows. It first runs the procedure from Theorem~\ref{thm:BKZ} on $\lat$ with $r=2$. Let $d$ be the length of the first basis vector in the output. 
For $i=0,\ldots, 100n $, the algorithm calls $\alg{D}$ on $\lat$ with parameter $s_i = 1.01^{-i} \cdot d$.  Let $\vec{x}_i$ be a shortest non-zero vector in the output. Finally, the algorithm outputs a shortest vector among the $\vec{x}_i$.

The running time of the algorithm is clear. By Theorem~\ref{thm:BKZ}, we have 
$d \leq 2^{n/2}\lambda_1(\lat)$. It follows that there exists some $i$ such that $\hat{s} \leq s_i \leq 1.01 \hat{s}$ where $\hat{s} = \sqrt{\pi e/n} \cdot 2^{0.099} \cdot \lambda_1(\lat)$ (i.e., $\hat{s}$ is the parameter from Proposition~\ref{prop:boundonmass-KL}). We assume that the output of $\alg{D}$ is exactly $D_{\lat, s_i}^{2^{n/2}}$ when called on $s_i$, incurring statistical distance at most $1/2$. For such $i$, by Lemma~\ref{lem:banaszczyk} we have 
\begin{align*}
\Pr_{\vec{X} \sim D_{\lat,s_i}}[\length{\vec{X}} = \lambda_1(\lat)] 
&\geq 1.01^{-n} \cdot \Pr_{\vec{X} \sim D_{\lat,\hat{s}}}[\length{\vec{X}} = \lambda_1(\lat)]\\
&\geq 1.4^{-n - o(n)} 
&\text{(Proposition~\ref{prop:boundonmass-KL})}
\; .
\end{align*}
The result follows by noting that $1.4 < \sqrt{2}$, so $2^{n/2}$ samples from $D_{\lat, s_i}$ will contain a shortest vector with probability at least $1-\exp(-\Omega(n))$.
\end{proof}

\begin{corollary}
\label{cor:SVP}
There is an algorithm that solves \problem{SVP} in time $2^{n + o(n)}$.
\end{corollary}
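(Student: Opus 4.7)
The plan is to combine Theorem~\ref{thm:DGS} with the reduction in Theorem~\ref{thm:svptodgs} in a straightforward way, instantiating the $\DGS{1/2}{}{2^{n/2}}$ oracle required by the reduction with the general discrete Gaussian sampler.

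First I would invoke Theorem~\ref{thm:DGS} with confidence parameter $\kappa := n$. This gives an algorithm that solves $\DGS{\exp(-\Omega(n))}{}{2^{n/2}}$ in time $2^{n + \polylog(n) + o(n)} = 2^{n + o(n)}$, since $\polylog(n) = o(n)$. In particular, the statistical error $\exp(-\Omega(n))$ of each oracle call is much smaller than the $1/2$ allowed by Theorem~\ref{thm:svptodgs}, so this algorithm is a valid oracle for that reduction (and in fact even a union bound over the $O(n)$ calls leaves negligible error).

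Then I would feed this oracle into the reduction of Theorem~\ref{thm:svptodgs}, which makes $O(n)$ oracle calls and performs an additional $2^{n/2}\cdot\poly(n)$ work of its own. The total running time is
\[
O(n)\cdot 2^{n + o(n)} + 2^{n/2}\cdot\poly(n) \;=\; 2^{n + o(n)},
\]
and the algorithm outputs a shortest nonzero lattice vector with the success probability guaranteed by Theorem~\ref{thm:svptodgs} (which can be amplified to $1-\exp(-\Omega(n))$ by standard repetition, at no asymptotic cost).

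There is no real obstacle: the corollary is a direct concatenation of the two main results already proved. The only thing to verify is the compatibility of the parameter choices, namely that a single setting of $\kappa = n$ simultaneously (i) drives the DGS statistical error well below $1/2$ per call and (ii) keeps the sampler's running time at $2^{n+o(n)}$, both of which are immediate from the bounds in Theorem~\ref{thm:DGS}.
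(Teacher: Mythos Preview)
Your proposal is correct and takes essentially the same approach as the paper, which simply says to combine the reduction from Theorem~\ref{thm:svptodgs} with the algorithm from Theorem~\ref{thm:DGS}. Your added detail about setting $\kappa = n$ and verifying the running-time and error bounds is a reasonable elaboration of that one-line argument.
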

\begin{proof}
Combine the reduction from Theorem~\ref{thm:svptodgs} with the algorithm from Theorem~\ref{thm:DGS}.
\end{proof}

\section{Sampling \texorpdfstring{$2^{n/2}$}{2(n/2)} vectors above smoothing in \texorpdfstring{$2^{n/2}$}{2(n/2)} time}
\label{sec:abovesmooth}

In this section we present a $2^{n/2}$-time algorithm for $\problem{DGS}_\sigma$ with $\sigma$ approximately the smoothing parameter. For our applications (in particular for solving $O(1)\text{-}\problem{GapSVP}$) we will need a slightly stronger guarantee from the algorithm. Namely, when 
asked to produce samples with too small a parameter $s \le \sigma$, the 
output should still consist of discrete Gaussian samples of the desired parameter, but potentially less of them 
(or even none at all).
We make this property formal in the following slight modification of Definition~\ref{def:dgs}.

\begin{definition}
\label{def:hDGS}
For $\eps \geq 0$, $\sigma$ a function that maps lattices to non-negative real numbers, and $m \in \N$, $\hDGS{\eps}{\sigma}{m}$ (the \emph{honest} Discrete Gaussian Sampling problem) is defined as follows: 
The input is a basis $\basis$ for a lattice $\lat \subset \R^n$ and a parameter $s > 0$. The goal is for the output distribution to be $\eps$-close to $D_{\lat, s}^{m'}$ for some independent random variable $m' \geq 0$. If $s > \sigma(\lat)$, then $m'$ must equal $m$.
\end{definition}

\subsection{Sampling from the square root}

\begin{claim}
\label{clm:sqrt}
There is an algorithm that, given black-box access to the Bernoulli distribution $\bern(p)$ for unknown $0 < p \leq 1$, outputs $b \in \{0,1\}$ such that the distribution of $b$ is exactly $\bern(\sqrt{p})$. The expected running time of the algorithm is $O(1/p)$.
Furthermore, if the algorithm's input is restricted to $\tau \in \N$ independent samples from the Bernoulli distribution $\bern(p)$ for unknown $0 \leq p \leq 1$, then the distribution of $b$ is within statistical distance $(1-p)^\tau$ of $\bern(\sqrt{p})$, and the algorithm runs in time $O(\tau)$.
\end{claim}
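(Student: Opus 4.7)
The plan is to reduce the problem to a simple waiting-time experiment via the binomial series
\[
(1-u)^{-1/2} \;=\; \sum_{k=0}^\infty \binom{2k}{k}\bigl(u/4\bigr)^k .
\]
Substituting $u = 1-p$ and multiplying by $p = 1-u$ gives
\[
\sqrt{p} \;=\; \sum_{t=1}^\infty g(t)\cdot p(1-p)^{t-1}, \qquad g(t) \;:=\; \binom{2(t-1)}{t-1}/4^{t-1} \;\in\; (0,1].
\]
The factor $p(1-p)^{t-1}$ is the mass function of the geometric waiting time $T\sim\mathrm{Geom}(p)$ for the first success in an i.i.d.\ $\bern(p)$ stream, so the right-hand side is exactly $\expect[g(T)]$. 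Crucially, $g(t)$ itself equals the probability that a $\mathrm{Binom}(2(t-1),1/2)$ variable takes the value $t-1$, so it can be simulated exactly using internal randomness and no real-valued arithmetic.

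This suggests the following algorithm. Query the $\bern(p)$ oracle repeatedly until the first success, obtaining the waiting time $T$; then toss $2(T-1)$ independent fair coins (using internal randomness) and output $1$ iff exactly $T-1$ of them land heads, otherwise output $0$. Correctness is immediate from the displayed identity. The expected number of oracle queries is $\expect[T]=1/p$, and the extra fair-coin cost is at most $2T$, so the total expected running time is $O(1/p)$ as required.

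For the restricted version I run exactly the same procedure on the $\tau$ given samples, defaulting to output $0$ if none of them is a $1$ (i.e.\ if $T>\tau$, in which case the internal coin-flipping step is skipped). The output distribution can differ from $\bern(\sqrt{p})$ only on the event $\{T>\tau\}$, which has probability exactly $(1-p)^\tau$; hence the total variation distance between the two distributions is at most $(1-p)^\tau$. The running time is $O(\tau)$ since we read at most $\tau$ oracle samples and toss at most $2\tau$ fair coins.

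The main step, such as it is, is spotting the right power-series decomposition---namely that $\sqrt{p}/p = (1-(1-p))^{-1/2}$ and that the Taylor coefficients $\binom{2k}{k}/4^k$ of $(1-u)^{-1/2}$ are themselves probabilities (the central binomial probabilities). Once this is in hand, everything else is mechanical: $g(t)\in[0,1]$ is free from the combinatorial interpretation, and both the running-time and statistical-distance bounds reduce to standard facts about the geometric distribution.
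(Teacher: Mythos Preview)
Your proof is correct and essentially identical to the paper's: the paper also waits for the first success (recording $k=T-1$ zeros), then flips $2k$ fair coins and outputs $1$ iff exactly half are heads, justifying this via the same identity $\sqrt{p}=\sum_{k\ge 0}\binom{2k}{k}4^{-k}(1-p)^k p$ and handling the restricted case by defaulting to $0$ when no success occurs. Your write-up is slightly more explicit about why the coefficients are genuine probabilities, but the algorithm, the key identity, and the analysis all coincide.
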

\begin{proof}
The algorithm repeatedly samples from $\bern(p)$ until the first time that it sees a $1$. Let $k$ be the number of times that zero appears before this first one. (E.g., if the input sequence is $(0,0,0,1,\ldots)$, $k = 3$.) Then, the algorithm tosses $2k$ unbiased coins and outputs $1$ if exactly half of them are heads and 0 otherwise. (In the case when the algorithm's input length is restricted, it outputs $0$ if there are no ones in its input sequence.)

Correctness is immediate from the following identity, which can be obtained by taking the Taylor expansion of $p^{-1/2}$ around $p=1$ and then multiplying by $p$,
\[
\sqrt{p} = \sum_{k=0}^\infty \binom{2k}{k} 2^{-2k} (1-p)^k p \; .
\]
The expected running time is clearly proportional to $\sum (1-p)^k = 1/p$.
\end{proof}

\begin{definition}
For a vector $\vec{p} \in [0,1]^N$ with $\sum p_i = 1$, let $\sqrt{\vec{p}} = (\sqrt{p_1}/\psqrt, \ldots, \sqrt{p_N}/\psqrt)$ where $\psqrt = \sum \sqrt{p_i}$.
\end{definition}

\begin{theorem}[Square-root sampler]
\label{thm:sqrtsampler}
There is an algorithm that takes as input $\kappa \geq 2$ (the confidence parameter),
$t \geq 1$ (an upper bound on the ratio 
$\max_{i,j} p_i/p_j$), and $M$ elements from $\{1, \ldots, N \}$ and outputs a sequence of elements from the same set such that
\begin{enumerate}
\item \label{item:sqrtruntime} the running time is $M \cdot \poly(\log N, \kappa, t)$; 
\item \label{item:sqrtinputoutput} each $i \in \{1,\ldots, N\}$ appears at least as often in the input as in the output; and
\item  \label{item:sqrtdistribution} if the input consists of
$M \geq 4\kappa^4 t^2/\max p_i$
 independent samples from the distribution that assigns probability $p_i > 0$ to element $i$ with 
 $t \geq \max_{i,j} p_i/p_j$, then
the output is within statistical distance $C_1 M N \log N \exp(-C_2 \kappa)$ of $m$ independent samples with respective probabilities $\sqrt{\vec{p}}$ where $m = M/(16\kappa^3 t^{3/2})$.
\end{enumerate}
\end{theorem}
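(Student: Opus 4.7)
The plan is to follow the structure of Theorem~\ref{thm:squaresampler} closely, with Claim~\ref{clm:sqrt} playing the role of ``transform a Bernoulli with parameter proportional to $p$ into one with parameter proportional to $\sqrt{p}$'' in place of the rejection-sampling mechanism that produced the square there.

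First, I would run Proposition~\ref{prop:estimatinglinfty} on the first $M/4$ inputs with confidence parameter $\kappa$ to obtain an estimate satisfying $\pmax \le \approxpmax \le 4\pmax$ outside an event of probability $C_1 N\log N\exp(-C_2\kappa)$. Then, using the same Poisson-group construction as in the proof of Theorem~\ref{thm:squaresampler}, I would take $L = M\approxpmax/4$ further input groups of size $\pois(1/\approxpmax)$ and convert each group via Claim~\ref{clm:poissontobinom} into a Bernoulli coin $b_{i,j}$ with parameter close to $p_i/(\kappa \approxpmax)$. By Lemma~\ref{lem:independence} these coins are mutually independent over $(i,j)$. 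Crucially, the hypothesis $\max_{i,j} p_i/p_j \le t$ gives $p_i/(\kappa \approxpmax) \ge 1/(4\kappa t)$ for every $i$ with positive mass. Set $\tau = 4\kappa^2 t$ and group the $b$-coins for each $i$ into batches of size $\tau$; apply the input-restricted form of Claim~\ref{clm:sqrt} to each batch, obtaining square-root coins $d_{i,k}$ for $k=1,\ldots,K := L/\tau = M\approxpmax/(16 \kappa^2 t)$, each of which is within statistical distance $(1-1/(4\kappa t))^\tau \le \exp(-\kappa)$ of $\Bern(\sqrt{p_i/(\kappa\approxpmax)})$. Finally, enumerate the pairs $(i,k)$; for each with $d_{i,k}=1$, consume one unused input copy of $i$ as a witness and append $i$ to the output; then permute the output uniformly and truncate to exactly $m$ symbols, failing if fewer than $m$ were produced.

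For Item~\ref{item:sqrtdistribution}, observe that conditional on the total number of $1$'s among the $d_{i,k}$, their locations form a multinomial with probabilities $\sqrt{p_i/(\kappa\approxpmax)} / (\psqrt/\sqrt{\kappa\approxpmax}) = \sqrt{p_i}/\psqrt$, so after permutation and truncation the output is i.i.d.\ from $\sqrt{\vec p}$. The expected total is $K\psqrt/\sqrt{\kappa\approxpmax}$; the elementary bound $\sqrt{\approxpmax}\,\psqrt = \sum_i \sqrt{\approxpmax \cdot p_i} \ge \sum_i p_i = 1$ gives at least $M/(16\kappa^{5/2} t)$ expected outputs, which comfortably exceeds $m = M/(16\kappa^3 t^{3/2})$, so a Chernoff bound ensures at least $m$ outputs except with probability $\exp(-C\kappa)$. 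Item~\ref{item:sqrtinputoutput} is automatic from the witness step once one checks $Mp_{\min} \ge K\sqrt{\pmax/(\kappa\approxpmax)}$, which again follows from $p_{\min} \ge \pmax/t$. The runtime Item~\ref{item:sqrtruntime} will follow by charging each batch of Claim~\ref{clm:sqrt} to its (few) nonzero $b$-coins, giving total work $M \cdot \poly(\log N, \kappa, t)$.

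The main obstacle I expect is the careful accounting of statistical distance: the error from $\approxpmax$ estimation, the Poisson-to-Bernoulli rounding (Claim~\ref{clm:poissontobinom}), the input-restricted form of Claim~\ref{clm:sqrt}, and the Binomial-to-Poisson passage needed to invoke the converse direction of Lemma~\ref{lem:independence} for the multinomial characterization must be combined cleanly to produce the claimed $C_1 MN\log N\exp(-C_2\kappa)$ bound. A secondary subtlety is that the ratio hypothesis is used \emph{twice}---once in choosing $\tau$ so the square-root step succeeds uniformly for the smallest $p_i$, and once in ensuring the witness step never runs out of inputs---and it is this dual use of $t$ that is responsible for the $t^{3/2}$ factor in the output-count guarantee.
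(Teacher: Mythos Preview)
Your construction of the square-root coins $d_{i,k}$ is essentially the paper's. The genuine gap is in the output-generation step. You enumerate all pairs $(i,k)$, emit $i$ whenever $d_{i,k}=1$, permute, and truncate; you then assert that conditional on the total number of ones, the locations are multinomial with cell probabilities $\sqrt{p_i}/\psqrt$. That assertion is false: the converse of Lemma~\ref{lem:independence} applies to independent \emph{Poissons}, not to independent Bernoullis with unequal parameters. Conditioned on $\sum_{i,k} d_{i,k}=S$, the probability that a given size-$S$ set $A$ is exactly the set of ones is proportional to $\prod_{(i,k)\in A} q_i/(1-q_i)$, not to $\prod_{(i,k)\in A} q_i$. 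You flag the needed ``Binomial-to-Poisson passage'' as a subtlety to be handled, but in fact it does not go through here: the parameters $q_i=\sqrt{p_i/(\kappa\approxpmax)}$ are as large as $1/\sqrt{\kappa}$, so by Le~Cam's inequality the statistical distance between $\mathrm{Bin}(K,q_i)$ and $\Pois(Kq_i)$ is of order $Kq_i^2 = Mp_i/(16\kappa^3 t)$, and summing over $i$ gives $M/(16\kappa^3 t)$---this grows with $M$ and is certainly not $\exp(-\Omega(\kappa))$.

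The paper avoids this entirely with a different last step: repeatedly draw $i$ \emph{uniformly} from $\{1,\ldots,N\}$ and accept it using the next unused coin $b_{i,k}^*$. Acceptance occurs with probability $q_i$, so each accepted symbol has law exactly $\sqrt{\vec p}$, and successive outputs are independent because each step uses a fresh uniform draw and a fresh coin. This rejection-sampling idea is what your proposal is missing. (A minor additional point: your running-time argument implicitly uses $N\approxpmax=O(t)$, which holds under the hypothesis of Item~\ref{item:sqrtdistribution} but not unconditionally; the paper handles this by aborting whenever $\approxpmax>4t/N$.)
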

\begin{proof}
The algorithm first runs the procedure from Proposition~\ref{prop:estimatinglinfty} on the first $M/2$ elements from its input sequence, receiving as output $\approxpmax$. If $\approxpmax > 4t /N$, it fails. Otherwise, for $j=1,\ldots, M\approxpmax/3$, the algorithm samples
$r$ from $\pois(1/\approxpmax)$
and takes the next $r$ unused elements in the input (or fails if there are not $r$ unused elements remaining). For $i = 1,\ldots, N$, let $a_{i,j}$ be the number of times element $i$ appears in the $j$th such subsequence. For each $i,j$ let $b_{i,j}$ be $1$ with probability $\min\{ 1,\ a_{i,j}/\kappa \}$ and $0$ otherwise. Let $\tau = \ceil{\kappa^2 t}$. Then, for $i = 1,\ldots, N$ and $k = 0, \ldots, M \approxpmax/(3\tau)-1$, let $b_{i,k}^*$ be the output of the procedure from Claim~\ref{clm:sqrt} on input $(b_{i,\tau k+1},\ldots, b_{i,\tau (k+1)})$. 

Finally, the algorithm repeats the following at most $M/(5 \tau )$ times: 
it samples $i \in \{1,\ldots, N\}$ uniformly at random, and adds it to its output if $b_{i,k}^* = 1$ where $k \geq 1$ is the smallest index such that $b_{i,k}^*$ is unused (or it fails if there is no unused $b_{i,k}^*$). The algorithm stops as soon as its output contains $m$ samples. It fails if the output contains fewer than $m$ samples when the loop ends.

Note that the number of coins $b_{i,j}$ and $b_{i,k}^*$ is less than $MN\approxpmax \leq 4Mt$. It follows that the running time is as claimed. Note also that each $i$ appearing in the output corresponds to some $a_{i,j} > 0$, which in turn corresponds to at least one element $i$ in the input.

We first prove that the output elements have the correct distribution, ignoring failure.  By Proposition~\ref{prop:estimatinglinfty}, we can assume that $\max p_i \leq \approxpmax \leq 4\max p_i$, introducing statistical distance at most $C_1 N \log N \exp(-C_2 \kappa)$. By Lemma~\ref{lem:independence} and Claim~\ref{clm:poissontobinom}, we can further assume that $b_{i,j}$ are independent and distributed exactly as $\bern(p_i/(\kappa \approxpmax))$, introducing statistical distance that is at most $C_1 N M \exp(-C_2 \kappa)$. Applying Claim~\ref{clm:sqrt}, we have that $b_{i,k}^*$ is within statistical distance $(1-p_i/(\kappa\approxpmax))^{\tau} \leq \exp(-C \kappa)$ of $\bern(\sqrt{p_i/(\kappa \approxpmax)})$. So, the outputs have the correct distribution.

We now prove that the algorithm rarely fails. The number of inputs used in the second stage is distributed as $\pois(M/3)$, which by Corollary~\ref{cor:poistail} is at most $M/2$ except with probability at most $C_1 \exp(-C_2 \kappa)$. Applying the Chernoff-Hoeffding bound (Lemma~\ref{lem:chernoff}), we have that the number of coins $b_{i,k}^*$ used for a fixed $i$ is at most $M/(3\tau N) \leq M\approxpmax/(3\tau )$ except with probability at most $\exp(-C\kappa)$ (where we have used the fact that $M/(3\tau N) \geq \kappa$). Finally, applying the Chernoff-Hoeffding bound again, we have that after 
$M /(5\tau )$ steps, the size of the output will be at least 
$M/(8\tau ) \cdot \sqrt{\min p_i/(\kappa \approxpmax)} \geq m$
 except with probability at most $\exp(-C\kappa)$. So, the algorithm fails with probability at most $C_1 N \exp(-C_2 \kappa)$.
\end{proof}

In order to create an \scarequotes{honest} discrete Gaussian sampler as in Definition~\ref{def:hDGS} that uses the square-root sampler, we will need a way to \scarequotes{check $t$,} so that we only use the square-root sampler when we can be sure that Item~\ref{item:sqrtdistribution} applies. We achieve this with the following simple claim.

\begin{claim}
\label{clm:tcheck}
There is an algorithm that takes as input a number $t \geq 1$ and $M$ independent samples from the distribution that assigns probability $p_i > 0$ to each element $i \in \{1,\ldots, N\}$, runs in time $M\cdot \polylog(N, M, t)$, and satisfies 
\begin{enumerate}
\item \label{item:smalltcheck} if $t < \max p_i/p_j$, then the algorithm outputs no with probability at least $1-2\exp(-C M/(tN))$; and
\item \label{item:bigtcheck} if $t \geq 4\max p_i/p_j$, then the algorithm outputs yes with probability at least $1-N\exp(-C M/(tN))$.
\end{enumerate}
\end{claim}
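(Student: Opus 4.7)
The plan is to use a simple counting algorithm and analyze it via Chernoff concentration. On input $t$ and a sequence of $M$ samples, the algorithm computes $C_i := |\{j : X_j = i\}|$ for each $i$ that appears in the input using a hash table, tracks whether some $i \in \{1,\ldots,N\}$ fails to appear (in which case $\min_i C_i = 0$), and outputs yes iff $\min_i C_i \geq 2(\max_i C_i)/t$. Since at most $M$ distinct values can appear in the input, this runs in $M \cdot \polylog(N, M, t)$ time, with the $\log$ factors accounting for hashing and arithmetic with $M$- and $t$-bit quantities.

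For Case 2, where $\pmax/\pmin \leq t/4$, note that $\pmax \geq 1/N$ (since $\sum p_i = 1$), so $\pmin \geq 4\pmax/t \geq 4/(tN)$. Applying Lemma~\ref{lem:chernoff} with $\delta = 1/3$ to each $C_i$ separately, the bound $(2/3)Mp_i \leq C_i \leq (4/3)Mp_i$ fails for a fixed $i$ with probability at most $2\exp(-CMp_i) \leq 2\exp(-CM/(tN))$. A union bound over the $N$ indices shows this holds simultaneously for all $i$ with probability $\geq 1 - N\exp(-CM/(tN))$, and in that event $\min_i C_i/\max_i C_i \geq (1/2)(\pmin/\pmax) \geq 2/t$, so the algorithm correctly outputs yes.

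For Case 1, where $\pmax/\pmin > t$, let $i^*$ and $j^*$ attain $\pmax$ and $\pmin$ respectively. It suffices to establish the two events (a) $C_{i^*} \geq (2/3)M\pmax$ and (b) $C_{j^*} < k$, where $k := (4/3)M\pmax/t$, since their conjunction forces $\min_i C_i \leq C_{j^*} < k \leq 2 C_{i^*}/t \leq 2(\max_i C_i)/t$, and the algorithm outputs no. Event (a) follows from applying Lemma~\ref{lem:chernoff} to $C_{i^*}$, with failure probability $\exp(-CM\pmax) \leq \exp(-CM/N) \leq \exp(-CM/(tN))$ since $t \geq 1$.

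The main obstacle is event (b): we need an upper-tail bound on $C_{j^*}$ that survives even when $\mu := M\pmin$ is arbitrarily small, in which regime the naive form $\exp(-\delta^2 \mu /C)$ from Lemma~\ref{lem:chernoff} becomes vacuous. To handle this, we invoke the sharper multiplicative Chernoff bound
\[
\Pr[C_{j^*} \geq k] \leq \exp\!\bigl(-k \cdot g(\mu/k)\bigr), \qquad g(x) := x - 1 - \ln x,
\]
which holds for any binomial (derived from $\Pr[C \geq (1+\delta)\mu] \leq (e^\delta/(1+\delta)^{1+\delta})^\mu$ after substituting $x = \mu/k$). The function $g$ is strictly decreasing from $+\infty$ to $0$ on $(0,1]$. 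In Case 1 we have $\mu/k = (3t/4) \cdot \pmin/\pmax < 3/4$, so $g(\mu/k) \geq g(3/4) = \ln(4/3) - 1/4 > 0$ uniformly in $\mu$. Combined with $k \geq (4/3)M/(tN)$ (using $\pmax \geq 1/N$), this gives $\Pr[C_{j^*} \geq k] \leq \exp(-C M/(tN))$ as required. A union bound over events (a) and (b) yields the $2\exp(-CM/(tN))$ bound stated in the claim.
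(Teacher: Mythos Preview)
Your proof is correct and follows essentially the same approach as the paper: the same counting algorithm with the same threshold test, and Chernoff bounds on the counts $C_i$. The main difference is that in Case~1 you are more careful than the paper about the upper-tail bound on $C_{j^*}$: the paper simply cites its Chernoff--Hoeffding lemma (Lemma~\ref{lem:chernoff}) to conclude $\Pr[T_{\min} \geq \sqrt{2}\,p_{\max}M/t] \leq \exp(-C\,p_{\max}M/t)$, but that lemma, in the $\exp(-C\delta^2 N\mu/a)$ form stated there, does not directly yield a bound whose exponent scales with the threshold $k$ rather than with the (possibly tiny) mean $\mu = Mp_{\min}$. You correctly identify this gap and patch it with the sharper multiplicative Chernoff form $\Pr[C \geq k] \leq \exp(-k\,g(\mu/k))$, observing that $\mu/k < 3/4$ keeps $g$ uniformly bounded away from zero. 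So your argument is a tightened version of the paper's.
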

\begin{proof}
The algorithm is quite simple. On input $X_1,\ldots, X_M$, let $\Tmax := \max_i |\{ j : X_j = i \} |$ and $\Tmin := \min_i |\{ j : X_j = i \} |$. The algorithm outputs no if $t\cdot \Tmin < 2\Tmax$ and yes otherwise.

The running time of the algorithm is clear. Let $\pmax = \max p_i$ and $\pmin = \min p_i$. Suppose $t < \pmax/\pmin$. Then, by the Chernoff-Hoeffding bound (Lemma~\ref{lem:chernoff}), we have $\Tmax > \pmax M /\sqrt{2}$ except with probability at most $\exp(-C\pmax M) \leq \exp(-C M/(tN))$. Similarly, we have that $\Tmin < \sqrt{2} \pmax M /t$ except with probability at most $\exp(-C \pmax M/t) \leq \exp(-C M/(tN))$. Item~\ref{item:smalltcheck} follows.

Now, suppose $t \geq 4\max p_i/p_j$. Then, for any $i$, by the Chernoff-Hoeffding bound we have $\pmin M/\sqrt{2} <|\{ j : X_j = i \} | < \sqrt{2} \pmax M$ except with probability at most $2\exp(-C M/(tN) )$. Item~\ref{item:bigtcheck} then follows by union bound.
\end{proof}

\subsection{A more efficient combiner that works above smoothing}

The following lemma generalizes the first part of Lemma~\ref{lem:sumofgaussians}. In particular, we recover Lemma~\ref{lem:sumofgaussians} when $\lat' = 2\lat$. 
(Note that, since we require that $2\lat \subseteq \lat'$, we have that the sum of two lattice vectors $\vec{X}_1 + \vec{X}_2$ is in $\lat'$ if and only if $\vec{X}_1$ and $\vec{X}_2$ are in the same coset of $\lat'$ over $\lat$.)
\begin{lemma}
\label{lem:anysublattice}
Let $\lat \subset \R^n$ be a lattice, and let $\lat' \subseteq \lat$ be a sublattice with $2\lat \subseteq \lat'$. Then for any $\vec{y} \in \lat'$ and $s > 0$, we have
\[
\Pr_{(\vec{X}_1, \vec{X}_2) \sim D_{\lat,s}^2}[\vec{X}_1 + \vec{X}_2 = \vec{y}\ |\ \vec{X}_1 + \vec{X}_2 \in \lat'] = \frac{\rho_{\sqrt{2} s}(2\lat + \vec{y})^2}{\pcol} \cdot \Pr_{\vec{X} \sim D_{2 \lat + \vec{y}, \sqrt{2} s}}[\vec{X} = \vec{y}]
\; ,
\]
where $\pcol = \sum_{\vec{d} \in \lat'/(2\lat)}  \rho_{\sqrt{2}s}(\vec{d})^2$.
\end{lemma}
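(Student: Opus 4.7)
The plan is to use the same rotation/parallelogram identity that drives Lemma~\ref{lem:sumofgaussians}, but now tracking the $2\lat$-coset structure within $\lat'$ more carefully. First I would compute the joint probability $\Pr[\vec{X}_1 + \vec{X}_2 = \vec{y}]$ directly. Parametrizing by $\vec{w} := \vec{X}_1 - \vec{X}_2$, the constraints $\vec{X}_1, \vec{X}_2 \in \lat$ with $\vec{X}_1 + \vec{X}_2 = \vec{y}$ force $\vec{w} \in 2\lat - \vec{y} = 2\lat + \vec{y}$ (using $\vec{y} \in \lat' \subseteq \lat$, hence $2\vec{y} \in 2\lat$). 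The parallelogram law
\[
\|\vec{X}_1\|^2 + \|\vec{X}_2\|^2 \;=\; \tfrac{1}{2}\bigl(\|\vec{X}_1+\vec{X}_2\|^2 + \|\vec{X}_1-\vec{X}_2\|^2\bigr)
\]
then gives $\rho_s(\vec{X}_1)\rho_s(\vec{X}_2) = \rho_{s\sqrt{2}}(\vec{y})\,\rho_{s\sqrt{2}}(\vec{w})$, so summing over $\vec{w} \in 2\lat + \vec{y}$ yields
\[
\Pr[\vec{X}_1 + \vec{X}_2 = \vec{y}] \;=\; \frac{\rho_{s\sqrt{2}}(\vec{y})\,\rho_{s\sqrt{2}}(2\lat + \vec{y})}{\rho_s(\lat)^2}.
\]

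Next I would compute the normalization $\Pr[\vec{X}_1 + \vec{X}_2 \in \lat']$ by partitioning $\lat' = \bigsqcup_{\vec{d} \in \lat'/(2\lat)}(2\lat + \vec{d})$ and summing the above formula. The key observation is that for any $\vec{y}' \in 2\lat + \vec{d}$ we have $2\lat + \vec{y}' = 2\lat + \vec{d}$, so the factor $\rho_{s\sqrt{2}}(2\lat + \vec{y}')$ is constant on each coset. Pulling it outside the inner sum and using $\sum_{\vec{y}' \in 2\lat + \vec{d}} \rho_{s\sqrt{2}}(\vec{y}') = \rho_{s\sqrt{2}}(2\lat + \vec{d})$ produces the square, giving
\[
\Pr[\vec{X}_1 + \vec{X}_2 \in \lat'] \;=\; \frac{1}{\rho_s(\lat)^2} \sum_{\vec{d} \in \lat'/(2\lat)} \rho_{s\sqrt{2}}(2\lat + \vec{d})^2 \;=\; \frac{\pcol}{\rho_s(\lat)^2}.
\]

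Finally, dividing and rewriting the residual factor $\rho_{s\sqrt{2}}(\vec{y})/\rho_{s\sqrt{2}}(2\lat + \vec{y})$ as $\Pr_{\vec{X} \sim D_{2\lat + \vec{y},\, s\sqrt{2}}}[\vec{X} = \vec{y}]$ yields exactly the claimed identity. There is no real obstacle here; the only thing to be careful about is the identification $2\lat - \vec{y} = 2\lat + \vec{y}$, which relies on $\vec{y} \in \lat$ (guaranteed by $\vec{y} \in \lat' \subseteq \lat$), and the fact that $2\lat \subseteq \lat'$ is what makes $\lat'/(2\lat)$ a well-defined finite index set over which the normalization sum is taken. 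The proof is essentially a bookkeeping generalization of Lemma~\ref{lem:sumofgaussians}, where that earlier lemma is recovered by setting $\lat' = 2\lat$ so that the sum in $\pcol$ collapses to the single coset $\vec{d} = \vec{0}$.
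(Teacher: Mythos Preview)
Your proof is correct and follows essentially the same approach as the paper. The paper computes $\Pr[\vec{X}_1+\vec{X}_2=\vec{y}]$ by summing over $\vec{x}=\vec{X}_1$ and using the identity $\rho_s(\vec{x})\rho_s(\vec{y}-\vec{x})=\rho_{\sqrt{2}s}(\vec{y})\rho_{\sqrt{2}s}(2\vec{x}-\vec{y})$, obtaining the same expression $\rho_{\sqrt{2}s}(\vec{y})\rho_{\sqrt{2}s}(2\lat+\vec{y})/\rho_s(\lat)^2$; it then simply remarks that this being proportional (in $\vec{y}\in\lat'$) to the right-hand side finishes the proof, whereas you spell out the normalization sum over $\lat'/(2\lat)$ explicitly---a purely expository difference.
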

\begin{proof}
It suffices to show that the probability on the left-hand side is proportional to $\rho_{\sqrt{2} s}(2\lat + \vec{y})\rho_{\sqrt{2} s}(\vec{y})$. Indeed,
\begin{align*}
\Pr_{(\vec{X}_1, \vec{X}_2) \sim D_{\lat, s}^2}[\vec{X}_1 + \vec{X}_2 = \vec{y}]  &= \frac{1}{\rho_s(\lat)^2}\cdot\sum_{\vec{x} \in \lat} \rho_s(\vec{x}) \rho_s(\vec{y} - \vec{x})\\
&= \frac{\rho_{\sqrt{2} s}(\vec{y})}{\rho_s(\lat)^2}\cdot\sum_{\vec{x} \in \lat} \rho_{\sqrt{2}s}(2\vec{x} - \vec{y})\\
&= \frac{\rho_{\sqrt{2} s}(\vec{y})}{\rho_s(\lat)^2}\cdot\rho_{\sqrt{2}s}(2\lat + \vec{y})
\; .
\end{align*}
\end{proof}

\begin{proposition}
\label{prop:sqrtcombiner}
There is an algorithm that takes as input a lattice $\lat \subset \R^n$, a sublattice $\lat' \subseteq \lat $ of index $2^a \geq 2^{n/2}$ with $2\lat \subseteq \lat'$, 
$\kappa \geq 2$ (the confidence parameter), 
and a sequence of vectors from $\lat$ such that, if the input consists of
$M \geq C \kappa^5 2^a$
independent samples from $D_{\lat, s}$ for some $s > 0$, then
\begin{enumerate}
\item the running time of the algorithm is $M \cdot \poly(n, \kappa)$;
\item \label{item:sqrtcombinerdistrib} the output distribution is 
$M \exp(C_1 n - C_2 \kappa)$-close to $m$ independent samples from $D_{\lat', \sqrt{2} s}$ 
where $m \in \{0, M/(C \kappa^4)\}$ is an independent random variable; and
\item \label{item:goodt} 
if $s \geq \eta_\eps(\lat')$ and $s \geq \sqrt{2}\eta_\eps(\lat)$, then $m = M/(C \kappa^4)$ where $\eps := 3/4$.
\end{enumerate}
\end{proposition}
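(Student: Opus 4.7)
The plan is to chain two samplers from Section~\ref{sec:squaresampler} together through Lemma~\ref{lem:anysublattice}. First I would apply the square sampler to the coset distribution of the input vectors over $\lat/\lat'$; then pair up vectors sharing a coset and sum them, producing vectors in $\lat'$ whose distribution over cosets of $2\lat$ in $\lat'$ has mass proportional to $\rho_{\sqrt{2}s}(\vec{d})^2$; then apply the square-root sampler to rescale those coset weights down to $\rho_{\sqrt{2}s}(\vec{d})$, which is exactly the coset distribution of $D_{\lat',\sqrt{2}s}$.

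Concretely, label each input $\vec{X}_i$ by its coset $\coset_i\in\lat/\lat'$ and feed $(\coset_1,\dots,\coset_M)$ to the square sampler of Theorem~\ref{thm:squaresampler} with confidence $\kappa$, obtaining $(\coset_1',\dots,\coset_{M_1}')$. For each $\coset_i'$ pick two unused inputs $\vec{X}_j,\vec{X}_k$ with $\coset_j=\coset_k=\coset_i'$ and set $\vec{Y}_i:=\vec{X}_j+\vec{X}_k\in\lat'$. Conditional on any coset $\coset$ the two summands are i.i.d.\ $D_{\coset,s}$, and by Theorem~\ref{thm:squaresampler} the coset $\coset_i'$ is (up to $M\exp(C_1n-C_2\kappa)$) distributed proportionally to $\rho_s(\coset)^2$; combining these, $\vec{Y}_i$ has exactly the distribution of $\vec{X}_1+\vec{X}_2$ for $\vec{X}_1,\vec{X}_2\sim D_{\lat,s}$ conditioned on $\vec{X}_1+\vec{X}_2\in\lat'$. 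Lemma~\ref{lem:anysublattice} then identifies this distribution: each coset $\vec{d}\in\lat'/(2\lat)$ has mass proportional to $\rho_{\sqrt{2}s}(\vec{d})^2$, and conditional on $\vec{d}$ the vector is $D_{2\lat+\vec{d},\sqrt{2}s}$. So labeling each $\vec{Y}_i$ by its coset $\vec{d}_i\in\lat'/(2\lat)$ and feeding $(\vec{d}_1,\dots,\vec{d}_{M_1})$ through Theorem~\ref{thm:sqrtsampler} reweights the coset marginals to be proportional to $\rho_{\sqrt{2}s}(\vec{d})$, giving outputs distributed as $D_{\lat',\sqrt{2}s}$.

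The hitch is that Theorem~\ref{thm:sqrtsampler} needs an a priori spread bound $t\ge\max p_i/p_j$ on coset weights, which can fail below smoothing. To preserve honesty I would first run Claim~\ref{clm:tcheck} with $t=200>4\cdot((1+3/4)/(1-3/4))^2=196$ on the $\vec{d}_i$; if the test answers no, the algorithm outputs nothing ($m=0$), and otherwise the spread is at most $t$ except with probability $\exp(-\Omega(\kappa))$ and the square-root sampler applies legitimately. This yields Item~2. For Item~3, the smoothing hypotheses make both samplers essentially lossless: $s\ge\eta_{3/4}(\lat')$ implies by Claim~\ref{clm:smooth} that every coset of $\lat'$ in $\lat$ has $\rho_s$-mass within a factor of $7$ of $\rho_s(\lat')$, so $\sum_\coset\rho_s(\coset)^2/(\rho_s(\lat)\max_\coset\rho_s(\coset))=\Omega(1)$ and Theorem~\ref{thm:squaresampler} returns $M_1\ge M/(C\kappa)$ outputs; and $s\ge\sqrt{2}\,\eta_{3/4}(\lat)$ (equivalently $\sqrt{2}s\ge\eta_{3/4}(2\lat)$) makes the cosets of $2\lat$ in $\lat'$ roughly equally $\rho_{\sqrt{2}s}$-weighted, so their squared weights have ratio at most $49<t/4$ and Claim~\ref{clm:tcheck} passes with probability $1-\exp(-\Omega(n))$, after which Theorem~\ref{thm:sqrtsampler} emits at least $M/(C\kappa^4)$ samples. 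The hypothesis $M\ge C\kappa^5\cdot 2^a$ together with $2^a\ge 2^{n/2}$ is exactly what allows $M_1$ to meet the square-root sampler's input requirement $4\kappa^4 t^2/\max p_i\approx\kappa^4\cdot 2^{n-a}\le\kappa^4\cdot 2^a$. The main obstacle throughout is the asymmetry between the two squaring steps: pair-and-sum produces squared weights on \emph{both} the $\lat/\lat'$ side (pairing probability) and the $\lat'/(2\lat)$ side (via Lemma~\ref{lem:anysublattice}), but we want squared weights on the first and \emph{linear} weights on the second---which is what forces us to invoke the delicate square-root sampler, and with it the spread-bound check that ultimately restricts the easy regime in Item~3 to above smoothing.
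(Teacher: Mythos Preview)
Your approach is essentially the paper's: square-sample the $\lat/\lat'$ cosets, pair and sum, then square-root-sample the $\lat'/(2\lat)$ cosets, with Lemma~\ref{lem:anysublattice} linking the two stages. The constant $t=200$ and the use of Claim~\ref{clm:tcheck} also match the paper's $t'=4\cdot 49=196$. Item~\ref{item:goodt} is argued correctly.

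There is one genuine gap in your honesty argument for Item~\ref{item:sqrtcombinerdistrib}. You assert that if Claim~\ref{clm:tcheck} answers ``yes'' then the spread bound holds except with probability $\exp(-\Omega(\kappa))$, and that the square-root sampler then ``applies legitimately.'' But both of these guarantees require enough input samples: Claim~\ref{clm:tcheck}'s failure probability is $\exp(-C M_1/(tN))$ with $N=2^{n-a}$, and Theorem~\ref{thm:sqrtsampler}'s Item~\ref{item:sqrtdistribution} requires $M_1\ge 4\kappa^4 t^2/\max p_i$. Below smoothing, the square sampler's output count is only guaranteed to satisfy
\[
M_1 \;\ge\; \frac{M}{32\kappa}\cdot\frac{\sum_{\coset}\rho_s(\coset)^2}{\rho_s(\lat)\rho_s(\lat')},
\]
and this ratio can be as small as $\Theta(2^{-a/2})$ (take $\rho_s(\lat')/\rho_s(\lat)\approx 2^{-a/2}$ with the remaining mass spread evenly). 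Then $M_1=\Theta(\kappa^4 2^{a/2})$, and for $a$ close to $n/2$ one gets $M_1/N=\Theta(\kappa^4 2^{3a/2-n})$, which can be exponentially small. In that regime Claim~\ref{clm:tcheck} has no force, so it may answer ``yes'' even when the spread far exceeds $t$, and the square-root sampler then runs outside its hypotheses, destroying honesty.

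The paper closes this gap with an additional explicit check: after the square sampler, test whether $q\ge M/(32\kappa\sqrt{t})$ and output nothing if not. When this test passes, $q\ge C\kappa^4 2^a\ge C\kappa^4 2^{n-a}$, which is enough for both Claim~\ref{clm:tcheck} and Theorem~\ref{thm:sqrtsampler}; when it fails, $m=0$ and honesty is trivially preserved. You need to add this check (or an equivalent one) for Item~\ref{item:sqrtcombinerdistrib} to go through.
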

\begin{proof} 
Let $(\vec{X}_1, \ldots, \vec{X}_M)$ be the input vectors,
and for each $i$, let $\coset_i \in \lat/\lat'$ be the coset of $\vec{X}_i$. The algorithm first applies the square sampler in a manner similar to that of the algorithm from Proposition~\ref{prop:combiner}.
Namely, the algorithm runs the procedure from Theorem~\ref{thm:squaresampler} with input $\kappa$
and $(\coset_1,\ldots,\coset_M)$,
receiving output $(\coset_1', \ldots, \coset_q')$.
For each $i=1,\ldots,q$, it chooses a pair of unpaired vectors $\vec{X}_j, \vec{X}_k$ with 
$\coset_j = \coset_k = \coset_i'$ and sets $\vec{Y}_i = \vec{X}_j + \vec{X}_k \in \lat'$. 

Let $t := (1+\eps)^2/(1-\eps)^2 = C$. The algorithm now applies two tests to \scarequotes{check that the distribution is sufficiently smooth.} First, it checks if 
$q \geq M/(32\kappa \sqrt{t})$.
If not, it halts and outputs nothing. Next, let $\vec{d}_i \in \lat'/(2\lat)$ be the coset of $\vec{Y}_i$. The algorithm runs the procedure from Claim~\ref{clm:tcheck} on the first $\floor{q/2}$ such cosets with parameter $t' := 4t$. It halts and outputs nothing if this procedure outputs no.

The algorithm now applies the square-root sampler to the remaining cosets.
Namely, it runs the procedure from Theorem~\ref{thm:sqrtsampler} with input $\kappa$, $t'$,
and $(\vec{d}_{\floor{q/2}+1},\ldots,\vec{d}_q)$, 
receiving output 
$(\vec{d}_1',\ldots, \vec{d}_L')$.
If $L < M/(C \kappa^4)$, it halts and outputs nothing.
Otherwise, for each $i \le M/(C \kappa^4)$, it chooses an unused vector $\vec{Y}_j$ with $j > q/2$ and
$\vec{d}_j = \vec{d}'_i$ and adds it to its output.

The running time of the algorithm follows from Item~\ref{item:squareruntime} of Theorem~\ref{thm:squaresampler} and the corresponding Item~\ref{item:sqrtruntime} of Theorem~\ref{thm:sqrtsampler}.
Furthermore, we note that by Item~\ref{item:squareinputoutput} of Theorem~\ref{thm:squaresampler}, the first step of the above algorithm will always be able to find unused $j,k$ satisfying $\coset_j = \coset_k = \coset_i'$, and by Item~\ref{item:sqrtinputoutput} of Theorem~\ref{thm:sqrtsampler}, the second step will always be able to find an unused $j$ satisfying $\vec{d}_j = \vec{d}'_i$.

We now prove Item~\ref{item:sqrtcombinerdistrib}. Note that, since the index of $\lat'$ over $\lat$ is $2^a $, the maximal probability of a coset must be at least $2^{-a}$. It follows that
$(\vec{c}_1,\ldots,\vec{c}_M)$
satisfy the conditions necessary for Item~\ref{item:squaredistribution} of Theorem~\ref{thm:squaresampler}.
Applying the theorem, we have that (up to statistical distance $M \exp(C_1 n - C_2 \kappa)$) the output vectors $(\vec{Y}_1,\ldots,\vec{Y}_q)$ are independent and
\begin{equation}
\label{eq:qbeforesmooth}
q \geq M \cdot \frac{1}{32 \kappa} \cdot \frac{\sum_{\vec{c} \in \lat/\lat'} \rho_s(\vec{c})^2}{\rho_s(\lat) \rho_s(\lat')}
\; .
\end{equation}
Furthermore, they assign to each $\vec{y} \in \lat'$ the probability
\begin{align*}
\Pr[\vec{Y}_i = \vec{y}] &= \frac{1}{\sum_{\coset \in \lat/\lat'} \rho_s(\coset)^2} \cdot \sum_{\coset \in \lat/\lat'}  \rho_s(\coset)^2 \cdot
\Pr_{(\vec{X}_1, \vec{X}_2) \sim D_{\coset, s}^2}[\vec{X}_1 + \vec{X}_2 = \vec{y}]\\
&=\Pr_{(\vec{X}_1, \vec{X}_2) \sim D_{\lat,s}^2}[\vec{X}_1 + \vec{X}_2 = \vec{y}\ |\ \vec{X}_1 + \vec{X}_2 \in \lat'] \\
&= \frac{\rho_{\sqrt{2} s}(2\lat + \vec{y})^2}{\sum_{\vec{d} \in \lat'/(2\lat)} \rho_{\sqrt{2} s}(\vec{d})^2} \cdot \Pr_{\vec{X} \sim D_{2 \lat + \vec{y}, \sqrt{2} s}}[\vec{X} = \vec{y}]
\; ,
\end{align*}
where we have used Lemma~\ref{lem:anysublattice}.
In particular, the distribution of each $\vec{d}_i$ is given by
\begin{align*}
\Pr[\vec{d}_i = \vec{d}] &= 
\frac{\rho_{\sqrt{2} s}(\vec{d})^2}{\sum_{\vec{d}' \in \lat'/(2\lat)} \rho_{\sqrt{2} s}(\vec{d'})^2} 
\; ,
\end{align*}
for $\vec{d} \in \lat'/(2\lat)$. The highest probability is obtained at $\vec{d}=2\lat$, and we denote it by $\pmax$.

Since the algorithm outputs nothing otherwise (in which case Item~\ref{item:sqrtcombinerdistrib} trivially holds), we only need to consider the case when 
\begin{equation}
\label{eq:boundonq}
q \geq \frac{M}{32\kappa \sqrt{t}}
\; ,
\end{equation} 
so we assume this below.  In addition, by Item~\ref{item:smalltcheck} of Claim~\ref{clm:tcheck}, the algorithm will halt after the second \scarequotes{smoothness test} with probability at least $1-2\exp(C\kappa)$ unless 
\begin{equation}
\label{eq:tprimebound}
t' \geq \frac{\rho_{\sqrt{2}s}(2\lat)^2}{\min_{\vec{d} \in \lat'/(2\lat)}\rho_{\sqrt{2}s}(\vec{d})^2}
\; .
\end{equation}
So, we can also assume that Eq.~\eqref{eq:tprimebound} holds.
Using Eq.~\eqref{eq:boundonq} and the fact that the index of $\lat'$ over $2\lat$ is $2^{n-a} \leq 2^a$, 
we see that $q/2 \geq 4\kappa^4 t^{\prime 2}/\pmax$.
Combining this with Eq.~\eqref{eq:tprimebound}, we see that the conditions for Item~\ref{item:sqrtdistribution} of Theorem~\ref{thm:sqrtsampler} are satisfied.
Let $\vec{W}_1,\ldots, \vec{W}_L$ be the vectors \scarequotes{chosen by the square-root sampler.} Applying the theorem, up to statistical distance $M \exp(C_1 n - C_2 \kappa)$, we have that the $\vec{W}_i$ are independently distributed, and 
\[
L =  \frac{q}{32 \kappa^3 t^{\prime 3/2}} 
\geq M \cdot \frac{1}{C \kappa^4}
\; ,
\] 
as needed, where we have used Eq.~\eqref{eq:boundonq}.
Furthermore, we have that for any coset $\vec{d} \in \lat'/(2\lat)$,
\[
\Pr[\vec{d}_i' = \vec{d}] = \frac{\rho_{\sqrt{2} s}(\vec{d})}{\sum_{\vec{d}' \in \lat'/(2\lat)} \rho_{\sqrt{2} s}(\vec{d}')} = \frac{\rho_{\sqrt{2} s}(\vec{d})}{\rho_{\sqrt{2} s}(\lat')}
\; .
\]  
Therefore, for any $\vec{y} \in \lat'$, we have
\[
\Pr[\vec{W}_i = \vec{y}] = \frac{\rho_{\sqrt{2} s}(2\lat + \vec{y})}{\rho_{\sqrt{2} s}(\lat')} \cdot \Pr_{\vec{X} \sim D_{2 \lat + \vec{y}, \sqrt{2} s}}[\vec{X} = \vec{y}]\ = \Pr_{\vec{X} \sim D_{\lat', \sqrt{2} s}}[\vec{X} = \vec{y}]
\; ,
\]
as needed.

Finally, we prove Item~\ref{item:goodt}. Suppose that $s$ satisfies $s \geq \eta_\eps(\lat')$ and $s \geq \sqrt{2}\eta_\eps(\lat)$. Note that
by Claim~\ref{clm:smooth}, we have that
\[
\frac{\rho_{s}(\lat')\rho_{s}(\lat)}{\sum_{\coset \in \lat/\lat'}\rho_{s}(\coset)^2} \leq \frac{1+\eps}{1-\eps} \cdot \frac{\rho_{s}(\lat)}{\sum_{\coset \in \lat/\lat'}\rho_{s}(\coset)} = \sqrt{t}
\; .
\] 
Combining this with Eq.~\eqref{eq:qbeforesmooth} shows that the algorithm will not halt after the first \scarequotes{smoothness test} except with probability at most $M \exp(C_1 n - C_2 \kappa)$. 
Similarly, since $\sqrt{2} s \geq \eta_\eps(2\lat)$,
\[
\frac{\rho_{\sqrt{2} s}(2\lat)^2}{\min_{\vec{d} \in \lat' /(2\lat)}\rho_{\sqrt{2} s}(\vec{d})^2}  
\leq t
\; .
\]
By applying Item~\ref{item:bigtcheck} of Claim~\ref{clm:tcheck}, we see that the algorithm also will not halt after the second \scarequotes{smoothness test} except with negligible probability. Therefore, Item~\ref{item:goodt} holds.
\end{proof}

We are going to apply Proposition~\ref{prop:sqrtcombiner} repeatedly, to a \scarequotes{tower} of lattices $(\lat_0,\ldots, \lat_\ell)$, as defined next. 

\begin{definition}
\label{def:tower}
For an integer $a$ satisfying $n/2 \leq a \leq n$,
we say that $(\lat_0,\ldots, \lat_\ell)$ is a \emph{tower of lattices in $\R^n$ of index $2^a$} 
if for all $i$ we have $2\lat_{i-1} \subseteq \lat_i \subset \lat_{i-1}$, $\lat_{i}/2 \subseteq \lat_{i-2}$, 
and the index of $\lat_i$ in $\lat_{i-1}$ is $2^a$.
\end{definition}

We next observe that it is easy to construct a tower with any desired final lattice $\lat_\ell$. In fact, one can even choose $\lat_{\ell-1}$, the second-to-last lattice in the tower.

\begin{claim}\label{clm:buildingtower}
There is a polynomial-time algorithm that given integers $\ell \ge 1$ and $n/2 \le a \le n$,
as well as two lattices $\lat$ and $\lat'$ in $\R^n$ satisfying 
$\lat \subseteq \lat' \subseteq \lat/2$ with the index of $\lat$ in $\lat'$ being $2^a$,
outputs a tower of lattices $(\lat_0,\ldots,\lat_\ell)$ of index $2^a$ with
$\lat_\ell = \lat$, $\lat_{\ell-1} = \lat'$, and 
$\lat_0 \supseteq 2^{-\floor{\ell a/n}} \lat$.
\end{claim}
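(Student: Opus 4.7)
My plan is to reduce the problem to a canonical diagonal form, construct the tower by tracking a combinatorial ``depth profile'' in this canonical basis, and then pick a balanced advancement schedule to ensure the containment $\lat_0 \supseteq 2^{-\floor{\ell a/n}}\lat$.

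First, since $\lat \subseteq \lat' \subseteq \lat/2$ with $|\lat'/\lat| = 2^a$, the quotient $\lat'/\lat$ is an elementary abelian $2$-group of rank $a$. A standard Smith-normal-form computation (polynomial time in the input length) therefore produces a basis $\vec{b}_1, \ldots, \vec{b}_n$ of $\lat$ such that $\vec{b}_1/2, \ldots, \vec{b}_a/2, \vec{b}_{a+1}, \ldots, \vec{b}_n$ is a basis of $\lat'$. In this basis, I only need to consider ``diagonal'' sublattices of $\spn(\lat)$ of the form $\lat[d] := \lat(2^{-d(1)}\vec{b}_1, \ldots, 2^{-d(n)}\vec{b}_n)$, parametrized by a depth profile $d : [n] \to \Z_{\geq 0}$. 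The containment $\lat[d] \supseteq \lat[d']$ is equivalent to $d \geq d'$ pointwise, the index is $|\lat[d]/\lat[d']| = 2^{\sum(d(i)-d'(i))}$, and doubling a diagonal lattice shifts its profile by $-1$ uniformly.

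Next, I would define $\lat_{\ell - j} = \lat[d_j]$ inductively with $d_0 \equiv 0$ (giving $\lat_\ell = \lat$) and $d_1 = (1^a, 0^{n-a})$ (giving $\lat_{\ell-1} = \lat'$). For each $j \geq 2$, I would pick an ``advancement set'' $S_j \subseteq [n]$ with $|S_j| = a$ and set $d_j(i) = d_{j-1}(i) + \mathds{1}[i \in S_j]$, so that the index between consecutive lattices is automatically $2^a$. The tower conditions then reduce to simple combinatorics: $2\lat_{i-1} \subseteq \lat_i$ translates to $d_{j+1} \leq d_j + 1$, which holds by construction, and $\lat_i/2 \subseteq \lat_{i-2}$ translates to $d_{j+1}(i) \geq d_{j-1}(i) + 1$, i.e.\ $S_j \cup S_{j+1} = [n]$. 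Note also that the initial condition $d_1 = (1^a, 0^{n-a})$ forces $[n] \setminus S_1 \subseteq S_2$, and the assumption $2\lat' \subseteq \lat \subseteq \lat'$ provides the analogous consistency with $d_0$.

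For the schedule, I would choose $S_j$ to be the block of $a$ cyclically consecutive coordinates starting at position $((j-1)a \bmod n) + 1$. Since $a \geq n/2$, any two consecutive sets $S_{j-1}, S_j$ span $2a \geq n$ cyclically consecutive coordinates and hence cover $[n]$; this verifies all tower conditions. Moreover, by equidistribution of the round-robin, every coordinate $i$ lies in either $\floor{\ell a/n}$ or $\ceil{\ell a/n}$ of the sets $S_1, \ldots, S_\ell$, so $d_\ell(i) \geq k := \floor{\ell a/n}$ for every $i$, and therefore $\lat_0 = \lat[d_\ell] \supseteq \lat[k \cdot \mathbf{1}] = 2^{-k}\lat$.

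The main obstacle is reconciling the two constraints on $\{S_j\}$: the adjacency constraint $S_{j-1} \cup S_j = [n]$, which is only feasible because $a \geq n/2$, and the balance constraint that every coordinate be hit at least $\floor{\ell a/n}$ times. The round-robin schedule resolves both simultaneously, but one could alternatively use a greedy rule that at each step advances the $a$ coordinates of smallest current depth subject to $S_j \supseteq [n] \setminus S_{j-1}$, and verify the same balance bound by a potential-function argument.
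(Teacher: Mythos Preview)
Your proposal is correct and is essentially the same construction as the paper's: both find a basis of $\lat$ in which $\lat'$ is obtained by halving the first $a$ generators, and then build the tower by cyclically halving blocks of $a$ consecutive coordinates. Your depth-profile language and explicit verification of the conditions $S_j \cup S_{j+1} = [n]$ and $d_\ell \geq \lfloor \ell a/n \rfloor$ just spell out what the paper leaves as ``easy to check.''
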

\begin{proof}
Let $\vec{b}_1,\ldots,\vec{b}_n$ be a basis of $\lat$ chosen so that 
$\vec{b}_1/2,\ldots,\vec{b}_a/2,\vec{b}_{a+1},\ldots,\vec{b}_n$ is
a basis of $\lat'$. It is not difficult to see that such a basis exists. 
Then define the tower by ``cyclically halving $a$ coordinates,'' namely,
\begin{align*}
\lat_\ell &= \lat(\vec{b}_1,\ldots,\vec{b}_n), \\
\lat_{\ell-1} &= \lat(\vec{b}_1/2, \ldots, \vec{b}_{a}/2, \vec{b}_{a + 1}, \ldots, \vec{b}_{n}), \\
\lat_{\ell-2} &=  \lat(\vec{b}_1/4, \ldots, \vec{b}_{2a - n}/4, \vec{b}_{2a -n + 1}/2, \ldots, \vec{b}_{n}/2),
\end{align*}
etc. It is easy to check that this satisfies all the required properties.
\end{proof}

\begin{corollary}
\label{cor:sqrtpipeline}
There is an algorithm that takes as input a tower of lattices $(\lat_0, \ldots, \lat_\ell)$ in $\R^n$ of index $2^a \geq 2^{n/2}$, 
$\kappa \geq 2$ (the confidence parameter), and 
$
M =  (C \kappa^4)^{\ell+1} \cdot 2^a
$
vectors in $\lat_0$ such that, 
\begin{enumerate}
\item \label{item:towerruntime} the algorithm runs in time $M \cdot \poly(n, \kappa, \ell)$;
\item \label{item:anyparameter} if the input vectors are distributed as $D_{\lat_0, s}$ for some $s$, then the output is 
$M \ell  \exp(C_1 n-C_2\kappa )$-close to $m$ independent samples from $D_{\lat_\ell, 2^{\ell/2} s}$ where $m \in \{0,2^{n/2}\}$ is an independent random variable; and
\item \label{item:sabovesmooth}
if $2^{\ell/2} s \geq 2 \eta_{3/4}(\lat_{\ell-1})$ and $2^{\ell/2} s \geq \sqrt{2}\eta_{3/4}(\lat_\ell)$,
then $m = 2^{n/2}$.
\end{enumerate}
\end{corollary}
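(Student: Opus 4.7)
The plan is to apply Proposition~\ref{prop:sqrtcombiner} $\ell$ times in sequence: at step $i \in \{1, \ldots, \ell\}$ one invokes the combiner with the pair $(\lat_{i-1}, \lat_i)$, confidence $\kappa$, and the output of step $i-1$ (with ``step $0$'' denoting the input of the algorithm). The tower condition $2\lat_{i-1} \subseteq \lat_i \subset \lat_{i-1}$ with index $2^a$ is exactly the hypothesis needed by the proposition, and when $s$ is the original input parameter, the output of step $i$ is (approximately) distributed as $D_{\lat_i, 2^{i/2} s}$. Let $M_i$ denote the number of vectors passed from step $i$ to step $i+1$, with $M_0 = M = (C\kappa^4)^{\ell+1} \cdot 2^a$. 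By Item~\ref{item:sqrtcombinerdistrib} of Proposition~\ref{prop:sqrtcombiner}, $M_i \in \{0,\, M_{i-1}/(C\kappa^4)\}$, so inductively $M_i \in \{0,\, (C\kappa^4)^{\ell+1-i}\cdot 2^a\}$. In particular $M_{i-1} \geq C\kappa^5 \cdot 2^a$ whenever $M_{i-1} \neq 0$ (absorbing constants into $C$), so the input-size hypothesis of Proposition~\ref{prop:sqrtcombiner} is satisfied at every step. After the final step either $M_\ell = 0$ or $M_\ell = C\kappa^4 \cdot 2^a \geq 2^{n/2}$; in the latter case one truncates to exactly $2^{n/2}$ vectors, yielding the random variable $m \in \{0, 2^{n/2}\}$ of Item~\ref{item:anyparameter}. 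The bounds of Items~\ref{item:towerruntime} and~\ref{item:anyparameter} then follow by summing the per-step running-time ($M_{i-1}\cdot\poly(n,\kappa)$) and statistical-distance ($M_{i-1}\exp(C_1 n - C_2\kappa)$) bounds of Proposition~\ref{prop:sqrtcombiner} over $i = 1, \ldots, \ell$.

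The heart of the proof is Item~\ref{item:sabovesmooth}: under the stated smoothing hypotheses the combiner never returns zero. By Item~\ref{item:goodt} of Proposition~\ref{prop:sqrtcombiner}, it suffices to check that for every $i \in \{1, \ldots, \ell\}$,
\[
2^{(i-1)/2}\, s \;\geq\; \eta_{3/4}(\lat_i)
\qquad\text{and}\qquad
2^{(i-1)/2}\, s \;\geq\; \sqrt{2}\,\eta_{3/4}(\lat_{i-1}).
\]
Equivalently, the plan is to prove that $\eta_{3/4}(\lat_j) \leq 2^{(j-1)/2}\, s$ for every $j \in \{0, 1, \ldots, \ell\}$. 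The two hypotheses $2^{\ell/2}s \geq \sqrt{2}\,\eta_{3/4}(\lat_\ell)$ and $2^{\ell/2}s \geq 2\,\eta_{3/4}(\lat_{\ell-1})$ give this bound directly at $j = \ell$ and $j = \ell - 1$. For the inductive step, the tower relation $\lat_j/2 \subseteq \lat_{j-2}$ exhibits $\lat_j/2$ as a sublattice of $\lat_{j-2}$; since sublattices have at least as large a smoothing parameter and $\eta_{3/4}$ is positively homogeneous,
\[
\eta_{3/4}(\lat_{j-2}) \;\leq\; \eta_{3/4}(\lat_j/2) \;=\; \tfrac{1}{2}\,\eta_{3/4}(\lat_j).
\]
Inducting $j \mapsto j - 2$ from each of the two base cases then yields the desired bound for all $j$, and Item~\ref{item:sabovesmooth} follows.

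The main (and essentially only) subtlety is why the hypothesis of Item~\ref{item:sabovesmooth} comes in two seemingly asymmetric pieces and why the tower definition imposes $\lat_j/2 \subseteq \lat_{j-2}$ rather than merely $\lat_j \subset \lat_{j-1}$. The weaker containment $\lat_j \subset \lat_{j-1}$ only gives $\eta_{3/4}(\lat_{j-1}) \leq \eta_{3/4}(\lat_j)$, which loses a factor of $\sqrt{2}$ at each step down the tower and so would not sustain the tight $2^{(j-1)/2}s$ bound. The stronger two-step relation $\lat_j/2 \subseteq \lat_{j-2}$ combined with the two ``base'' hypotheses (one per parity of $\ell - j$) is exactly what is needed to skip parities and preserve the bound all the way down to $j = 0$. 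This is the only nontrivial point in the argument; once it is in place, Items~\ref{item:towerruntime}--\ref{item:sabovesmooth} follow by direct bookkeeping of the per-step guarantees of Proposition~\ref{prop:sqrtcombiner}.
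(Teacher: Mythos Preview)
Your proposal is correct and follows essentially the same approach as the paper's proof: iterate Proposition~\ref{prop:sqrtcombiner} along the tower, track the sample counts $M_i$ by induction, and for Item~\ref{item:sabovesmooth} use the two-step relation $\lat_j/2 \subseteq \lat_{j-2}$ to propagate the smoothing bounds down from the two base cases $j=\ell$ and $j=\ell-1$. Your reformulation of the smoothing condition as the single inequality $\eta_{3/4}(\lat_j)\le 2^{(j-1)/2}s$ for all $j$, together with your remark on why the two-step containment (rather than the one-step $\lat_j\subset\lat_{j-1}$) is what makes the induction go through without losing a $\sqrt{2}$ per step, is a nice bit of exposition that the paper leaves implicit.
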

\begin{proof}
Let  $\mathcal{X}_0 = (\vec{X}_1,\ldots, \vec{X}_M)$ be the sequence of input vectors.
For $i = 0,\ldots, \ell-1$, the algorithm calls the procedure from Proposition~\ref{prop:sqrtcombiner} with input $\lat_{i}$, $\lat_{i+1}$, 
$\kappa$, 
and $\mathcal{X}_{i}$, receiving output $\mathcal{X}_{i+1}$. 
If $\mathcal{X}_{i+1}$ is empty, it halts and outputs nothing. Finally, the algorithm outputs the first $2^{n/2}$ vectors in $\mathcal{X}_\ell$. 

The running time is clear. 
Define $M_i = M/(C\kappa^4)^i$. 
Since $M_i \ge C \kappa^5 2^a$ for $0 \le i \le \ell-1$, we have by 
induction using Item~\ref{item:sqrtcombinerdistrib} of Proposition~\ref{prop:sqrtcombiner}
that for $i=0,\ldots,\ell$, up to statistical distance $i M\exp(C_1 n-C_2\kappa )$,
$\mathcal{X}_i$ is distributed like $m_i$ independent random samples from
$D_{\lat_i,2^{i/2}s}$
where $m_i \in \{0,M_i\}$ is an independent random variable. 
Here for convenience, if the algorithm aborts at some stage $j$, we define $\mathcal{X}_i$ for $i>j$ as the empty set.
Since $M_\ell > 2^{n/2}$, Item~\ref{item:anyparameter} follows. 

Finally, suppose $2^{\ell/2} s \geq \max \{ 2 \eta_{3/4}(\lat_{\ell-1}),\ \sqrt{2} \eta_{3/4}(\lat_\ell) \}$. Since $\lat_{i}/2 \subseteq \lat_{i-2}$, we have that $\eta_{3/4}(\lat_{i-2}) \leq \eta_{3/4}(\lat_i)/2$. It follows that $2^{i/2} s \ge \max\{\sqrt{2}\eta_{3/4}(\lat_i),\ \eta_{3/4}(\lat_{i+1})\}$ for all $i=0,\ldots,\ell-1$.
Item~\ref{item:sabovesmooth} then follows immediately from Item~\ref{item:goodt} of Proposition~\ref{prop:sqrtcombiner}.
\end{proof}

\subsection{Sampling above smoothing in time \texorpdfstring{$2^{n/2}$}{2(n/2)}}

\begin{theorem}
\label{thm:smoothDGS}
Let $\sigma $ be the function that maps a lattice $\lat$ to 
$\sqrt{2} \cdot \eta_{1/2}(\lat)$. Then, there is an algorithm 
that solves $\hDGS{\exp(-\Omega(\kappa))}{\sigma}{2^{n/2}}$ in 
time $2^{n/2 + \polylog(\kappa) + o(n)}$ for any 
$\kappa \geq \Omega(n)$.
\end{theorem}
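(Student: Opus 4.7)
The plan is to compose the pipeline of Corollary~\ref{cor:sqrtpipeline} with the initial sampler of Proposition~\ref{prop:startgauss}: given inputs distributed as $D_{\lat_0,\, 2^{-\ell/2}s}$, the pipeline outputs (up to $\exp(-\Omega(\kappa))$ statistical distance) $m$ independent samples from $D_{\lat,s}$ with $m\in\{0,\,2^{n/2}\}$. The task is to choose the tower $(\lat_0,\ldots,\lat_\ell)$ and the starting parameter so that $m=2^{n/2}$ whenever $s>\sigma(\lat)=\sqrt{2}\,\eta_{1/2}(\lat)$, and so that the initial sampler returns true $D_{\lat_0,\,2^{-\ell/2}s}$ samples in that regime.

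Concretely, I will invoke Claim~\ref{clm:buildingtower} to build a tower of index $2^a$ with $a=\lceil n/2\rceil+b$ and length $\ell$, where $b=o(n)$ and $\ell=\polylog(\kappa)$ are to be balanced below. I will prescribe $\lat_{\ell-1}$ to be a sublattice of $\lat/2$ of index $2^a$ in $\lat$ satisfying $\eta_{3/4}(\lat_{\ell-1})\le \eta_{1/2}(\lat)/\sqrt{2}$; this is achieved by choosing a basis of $\lat^*$ in which the ``halved'' $a$ coordinates cover the heaviest short-vector contributions to $\rho_{1/\eta}(\lat^*\setminus\{\vec0\})$ at the scale $\eta\approx\eta_{3/4}(\lat)/\sqrt{2}$. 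I then run Proposition~\ref{prop:startgauss} on $\lat_0$ with $r=n/\log n$, parameter $2^{-\ell/2}s$, and $M=(C\kappa^4)^{\ell+1}\cdot 2^a$ samples, receiving a sublattice $\smalllat\subseteq\lat_0$ together with $M$ samples from $D_{\smalllat,\,2^{-\ell/2}s}$. The algorithm compares the rank of the returned basis to $n$; if $\smalllat\ne\lat_0$ it aborts and outputs nothing (permitted by Definition~\ref{def:hDGS} as long as this event never occurs when $s>\sigma(\lat)$). Otherwise the inputs are exactly $D_{\lat_0,\,2^{-\ell/2}s}$, and it feeds them through Corollary~\ref{cor:sqrtpipeline} and returns the output.

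For correctness when $s>\sigma(\lat)$, I must verify (i) that $\smalllat=\lat_0$, and (ii) that the two preconditions of Item~\ref{item:sabovesmooth} of Corollary~\ref{cor:sqrtpipeline} hold. For (ii), $s>\sqrt{2}\,\eta_{1/2}(\lat)\ge\sqrt{2}\,\eta_{3/4}(\lat)$ by monotonicity of $\eta_\eps$ in $\eps$, and $s>\sqrt{2}\,\eta_{1/2}(\lat)=2\cdot\eta_{1/2}(\lat)/\sqrt{2}\ge 2\,\eta_{3/4}(\lat_{\ell-1})$ by the choice of $\lat_{\ell-1}$. For (i), Claim~\ref{clm:buildingtower} gives $\lat_0\supseteq 2^{-\lfloor \ell a/n\rfloor}\lat$, so $\eta_{0.99}(\lat_0)\le 2^{-\lfloor \ell a/n\rfloor}\eta_{0.99}(\lat)\le 2^{-\lfloor \ell a/n\rfloor}\eta_{1/2}(\lat)\le 2^{-\lfloor \ell a/n\rfloor}\,s/\sqrt{2}$; Proposition~\ref{prop:startgauss}'s sufficient condition $2^{-\ell/2}s\ge (Cr)^{n/r}\sqrt{n\log n}\cdot\eta_{0.99}(\lat_0)$ then reduces to $2^{\lfloor\ell a/n\rfloor-\ell/2}\gtrsim 2^{O(\log^2 n)}$, i.e.\ $\ell b\gtrsim n\log^2 n$. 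Balancing this against the running-time constraint $b+O(\ell\log\kappa)=o(n)+\polylog(\kappa)$ is feasible (e.g.\ $\ell=\log^3\kappa$ and $b=n\log^2 n/\log^3\kappa$, since $\kappa\ge\Omega(n)$ ensures $\log^3\kappa\gg\log^2 n$), yielding total time $M\cdot\poly(n,\kappa,\ell)=2^{n/2+\polylog(\kappa)+o(n)}$ as required.

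The main obstacle is constructing $\lat_{\ell-1}$ with the required smoothing bound. The naive choice $\lat_{\ell-1}=\lat/2$ attains the optimal $\eta_{3/4}(\lat)/2\le \eta_{1/2}(\lat)/\sqrt{2}$ but has index $2^n$ in $\lat$, whereas I need index only $2^a\approx 2^{n/2+o(n)}$. Showing that one can still gain essentially a factor of $\sqrt{2}$ in the smoothing parameter by halving only $a$ of the $n$ coordinates is the most delicate point; it relies on the fact that at the relevant scale $\rho_{1/\eta}(\lat^*\setminus\{\vec0\})$ is concentrated on a bounded number of short dual vectors, which a judicious choice of the $b=o(n)$ extra halving directions beyond the first $n/2$ can ``kill.'' Once this is arranged, the remaining pieces are routine compositions of Proposition~\ref{prop:startgauss}, Claim~\ref{clm:buildingtower}, Corollary~\ref{cor:sqrtpipeline}, and the tail bound of Lemma~\ref{lem:banaszczyktail}.
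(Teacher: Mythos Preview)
Your overall architecture matches the paper's: build a tower via Claim~\ref{clm:buildingtower}, seed it with Proposition~\ref{prop:startgauss}, abort if $\smalllat\neq\lat_0$, and then run Corollary~\ref{cor:sqrtpipeline}. You also correctly isolate the crux of the matter, namely producing a penultimate lattice $\lat_{\ell-1}\supset\lat$ of index $2^a\approx 2^{n/2+o(n)}$ with $\eta_{3/4}(\lat_{\ell-1})\le \eta_{1/2}(\lat)/\sqrt{2}$.

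The gap is in how you propose to obtain such an $\lat_{\ell-1}$. Your sketch asserts that at the relevant scale the dual mass $\rho_{1/\eta}(\lat^*\setminus\{\vec0\})$ is ``concentrated on a bounded number of short dual vectors,'' which $b=o(n)$ extra halving directions can eliminate. This is not true in general: with $\eta_{1/2}(\lat)=1$, Lemma~\ref{lem:banaszczyk} only gives $\rho_{\sqrt{2}}(\lat^*)\le 2^{n/2}(1+\tfrac12)$, and nothing prevents this mass from being spread over exponentially many cosets of $2\lat^*$, each contributing comparably. In that case no choice of $o(n)$ extra directions can suppress the mass down to $3/4$, and indeed even identifying the heavy cosets would itself take exponential time. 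So the deterministic construction you outline does not go through.

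The paper handles this point differently: it proves (Lemma~\ref{lem:supersmoothlattice}) that a \emph{uniformly random} superlattice $\lat'\supset\lat$ of index $2^a$ with $\lat'\subseteq\lat/2$ satisfies $\eta_{3/4}(\lat')\le\eta_{1/2}(\lat)/\sqrt{2}$ with probability at least $1/2$. The argument is a first-moment computation in the dual---each nonzero coset of $2\lat^*$ lands in $\lat^{\prime *}$ with probability $(2^{n-a}-1)/(2^n-1)$, so $\expect[\rho_{\sqrt{2}}(\lat^{\prime *})]\le 1+\eps^2+2^{n/2-a}(1+\eps)$ by Lemmas~\ref{lem:doublesmooth} and~\ref{lem:banaszczyk}---followed by Markov's inequality. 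Because this only succeeds with constant probability, the paper repeats the whole procedure $\kappa$ times with independent choices of $\lat_{\ell-1}$ to drive the failure probability to $\exp(-\Omega(\kappa))$; your deterministic plan, had it worked, would have avoided that repetition. (Minor note: Lemma~\ref{lem:banaszczyktail} plays no role in this proof.)
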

\begin{proof}
We first present an algorithm that works for $\sigma(\lat) =  2 \eta_{3/4}(\lat)$ and then modify it to achieve
the desired $\sigma(\lat) = \sqrt{2} \cdot \eta_{1/2}(\lat)$. 
On input $\lat \subset \R^n$ a lattice of rank $n$ and $s > 0$, the algorithm behaves as follows. 
It first
applies the algorithm from Claim~\ref{clm:buildingtower}
with parameters $a > n/2$ and $\ell \geq 1$ to be set in the analysis,
the lattice $\lat$, and an arbitrary choice of $\lat'$ satisfying the properties there.
It obtained a tower of lattices $(\lat_0, \ldots, \lat_\ell)$ of 
index $2^a$ such that $\lat_\ell = \lat$ and 
$\lat_0 \supseteq 2^{-\floor{\ell a/n}} \lat$.

The algorithm then runs the sampler from Proposition~\ref{prop:startgauss} on $\lat_0$ with parameters $r$ (to be set in the analysis), $\hat{s} = 2^{-\ell/2} s$, and 
 $M = (C\kappa^4)^{\ell + 1} 2^a$. 
 It receives as output a sublattice $\lat'_0 \subseteq \lat_0$ and vectors $\vec{X}_1,\ldots, \vec{X}_M \in \lat'_0$. 
If $\lat'_0 \neq \lat_0$, it outputs nothing and halts.
Otherwise, it runs the procedure from Corollary~\ref{cor:sqrtpipeline} with input $(\lat_0,\ldots, \lat_\ell)$, $\kappa$, and $(\vec{X}_1,\ldots, \vec{X}_M)$ and outputs the result.

Let $a = \ceil{n/2 + Cn/\log n}$, $\ell = C\ceil{\log^4 n}$, and $r = Cn/\log n$.
Applying Proposition~\ref{prop:startgauss} and Item~\ref{item:anyparameter} 
of Corollary~\ref{cor:sqrtpipeline}, we have that the output will be 
distributed as $D_{\lat,s}^m$ for some $m \in \{0,2^{n/2}\}$ up to statistical distance 
$\exp(-\Omega(\kappa))$, as needed. 
We wish to show that, 
if $s > 2 \eta_{3/4}(\lat)$, then we have $m = 2^{n/2}$. 
 Note that 
\[ \hat{s} > 2^{-\ell/2 + 1} \eta_{3/4}(\lat) \geq 2^{\ell(a/n - 1/2)} \eta_{3/4}(\lat_0) \geq (Cr)^{n/r} \sqrt{n\log n} \cdot \eta_{3/4}(\lat_0)
\; .
\]
Therefore, by Proposition~\ref{prop:startgauss}, we have that $\lat'_0 = \lat_0$, so that the algorithm will not halt after running the sampler from Proposition~\ref{prop:startgauss}. Furthermore, since $s >  2 \eta_{3/4}(\lat) = 2 \eta_{3/4}(\lat_\ell)$, we have
$2^{\ell/2} \hat{s} > \sqrt{2} \eta_{3/4}(\lat_\ell)$, and since $\lat_{\ell-1} \supset \lat_\ell$, we obviously also have 
\begin{align}\label{eq:conditiononhats}
2^{\ell/2} \hat{s} > 2 \eta_{3/4}(\lat_{\ell-1})\;.
\end{align}
Therefore, by Item~\ref{item:sabovesmooth} of Corollary~\ref{cor:sqrtpipeline}, we have that $m = 2^{n/2}$ as needed.

Now, consider the running time. The tower of lattices can be built in polynomial time. The procedure from Proposition~\ref{prop:startgauss} runs in time $(2^{O(r)} + M) \cdot \poly(n)$, and the procedure from Corollary~\ref{cor:sqrtpipeline} runs in time $M \cdot \poly(n, \kappa, \ell)$. It follows that the running time is as claimed.

We now show how to modify the above algorithm to work for $\sigma(\lat) =  \sqrt{2} \cdot \eta_{1/2}(\lat)$. The bottleneck in the above proof is the condition in 
Eq.~\eqref{eq:conditiononhats} needed for Item~\ref{item:sabovesmooth} of 
Corollary~\ref{cor:sqrtpipeline} to apply.
The trouble is that we used the
trivial inequality $\eta_{3/4}(\lat_{\ell-1}) \leq \eta_{3/4}(\lat_\ell)$ in order to show that this holds,
even though $\lat_{\ell-1}$ is a superlattice of  $\lat_\ell$ of index 
greater than $2^{n/2}$, and so one might expect a gap of about $\sqrt{2}$ between
these two quantities. Indeed, Lemma~\ref{lem:supersmoothlattice} below shows how to randomly choose such a superlattice $\lat_{\ell-1}$ such that $\eta_{3/4}(\lat_{\ell-1}) \leq \eta_{1/2}(\lat_\ell)/\sqrt{2}$ holds with constant positive probability. So we now use the same procedure as above, except we apply the algorithm 
in Claim~\ref{clm:buildingtower} with that choice of $\lat_{\ell-1}$.
Assuming $\lat_{\ell-1}$ satisfies this constraint, the constraint~\eqref{eq:conditiononhats} holds,
whenever $s > \sqrt{2} \eta_{1/2}(\lat)$ and hence the algorithm would be successful.
This almost completes the proof, except for one minor caveat: 
as described above, 
our algorithm successfully outputs $2^{n/2}$ vectors (in the ``good'' case of 
$s > \sqrt{2}\eta_{1/2}(\lat)$) only with some constant positive probability,
whereas our goal is to be successful with probability $1-\exp(-\kappa)$. 
This can easily be mended by repeating the algorithm $\kappa$ times, 
each time choosing an independent $\lat_{\ell-1}$. 
\end{proof}

\begin{lemma}
\label{lem:supersmoothlattice}
There is a probabilistic polynomial-time algorithm that takes as input a lattice $\lat \subset \R^n$ of rank $n$
and an integer $a$ with $n/2 \leq a < n$
and returns a superlattice $\lat' \supset \lat$ of index $2^a$ with $\lat' \subseteq \lat/2$ such that for any $\eps \in (0,1)$, we have $\eta_{\eps'}(\lat') \leq \eta_\eps(\lat)/\sqrt{2}$ with probability at least $1/2$, where $\eps' := 2\eps^2 + 2^{n/2+1-a}(1+\eps)$.
\end{lemma}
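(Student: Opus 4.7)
The plan is to choose $\lat'$ uniformly at random among the candidate superlattices and bound the desired smoothing parameter in expectation, then apply Markov. Specifying $\lat \subseteq \lat' \subseteq \lat/2$ of index $2^a$ over $\lat$ is equivalent to picking a subgroup of $(\lat/2)/\lat \cong \F_2^n$ of size $2^a$, which is easy in polynomial time (e.g., sample a uniform ordered basis of an $a$-dimensional $\F_2$-subspace and lift to $\lat/2$). For the analysis I would pass to the dual: setting $M := (\lat')^*$, the conditions $\lat \subseteq \lat' \subseteq \lat/2$ translate into $2\lat^* \subseteq M \subseteq \lat^*$ with $|M/(2\lat^*)| = 2^{n-a}$, and $M/(2\lat^*)$ is uniform among such subgroups of $\lat^*/(2\lat^*) \cong \F_2^n$. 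A standard Gaussian binomial count then yields $\Pr[c \in M/(2\lat^*)] = (2^{n-a}-1)/(2^n-1) \leq 2^{-a}$ for every fixed nonzero coset $c$.

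Let $t := \sqrt{2}/\eta_\eps(\lat)$, so that the target inequality $\eta_{\eps'}(\lat') \leq \eta_\eps(\lat)/\sqrt{2}$ is equivalent to $\rho_t(M \setminus \{\vec0\}) \leq \eps'$. Writing $M$ as the disjoint union over $M/(2\lat^*)$ of translates of $2\lat^*$, I would decompose
\[
\rho_t(M \setminus \{\vec0\}) \;=\; \rho_t(2\lat^* \setminus \{\vec0\}) \;+\; \sum_{c \in M/(2\lat^*),\, c \neq \vec0} \rho_t(c).
\]
The first (deterministic) summand equals $\rho_{t/2}(\lat^* \setminus \{\vec0\})$, and since $t/2 = 1/(\sqrt{2}\,\eta_\eps(\lat))$, Lemma~\ref{lem:doublesmooth} with $k=\sqrt{2}$ gives that this is strictly less than $\eps^2$. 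For the random summand, linearity of expectation together with $\Pr[c \in M/(2\lat^*)] \leq 2^{-a}$ yields
\[
\expect\Bigl[\sum_{c \in M/(2\lat^*),\, c \neq \vec0} \rho_t(c)\Bigr] \;\leq\; 2^{-a}\, \rho_t(\lat^*).
\]
I would then bound $\rho_t(\lat^*) = \sum_{c \in \lat^*/(2\lat^*)} \rho_t(c)$ by Cauchy-Schwarz over the $2^n$ cosets, using the sum-of-squares identity of Lemma~\ref{lem:sumofgaussians} applied to $\lat^*$ with parameter $t$, namely $\sum_c \rho_t(c)^2 = \rho_{t/\sqrt{2}}(\lat^*)^2 = (1+\eps)^2$. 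This gives $\rho_t(\lat^*) \leq 2^{n/2}(1+\eps)$, so the expectation of the random summand is at most $2^{n/2-a}(1+\eps)$.

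Combining the two estimates, $\expect[\rho_t(M \setminus \{\vec0\})] < \eps^2 + 2^{n/2-a}(1+\eps) = \eps'/2$, and Markov's inequality concludes that $\rho_t(M \setminus \{\vec0\}) \leq \eps'$ with probability at least $1/2$, which is exactly $\eta_{\eps'}(\lat') \leq \eta_\eps(\lat)/\sqrt{2}$. The only real content is in the setup: once one dualizes and spots that Lemma~\ref{lem:sumofgaussians} provides precisely the $\ell_2$ bound that Cauchy-Schwarz needs to control the nonzero-coset contribution, the rest of the argument is routine.
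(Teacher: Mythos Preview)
Your proof is correct and follows essentially the same approach as the paper: sample $\lat'$ uniformly, pass to the dual, bound $\expect[\rho_t(M\setminus\{\vec0\})]$ by splitting off the deterministic $2\lat^*$ contribution (handled by Lemma~\ref{lem:doublesmooth}) and the random nonzero-coset contribution, then apply Markov. The one difference is in how you bound $\rho_t(\lat^*)$: the paper invokes Banaszczyk's Lemma~\ref{lem:banaszczyk} directly (with $s=\sqrt{2}$), whereas you use Cauchy--Schwarz over the $2^n$ cosets together with the identity $\sum_{\vec c}\rho_t(\vec c)^2=\rho_{t/\sqrt{2}}(\lat^*)^2$ from Lemma~\ref{lem:sumofgaussians}. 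Both routes yield the identical numerical bound $2^{n/2}(1+\eps)$, so the difference is purely cosmetic; it is a pleasant observation that Cauchy--Schwarz plus the coset identity recovers this instance of Banaszczyk's inequality, but it does not change the argument's structure or strength.
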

\begin{proof}
The algorithm simply selects a superlattice $\lat' \supset \lat$ of index $2^a$ with $\lat' \subseteq \lat/2$ uniformly at random. 
It will be convenient to equivalently work in the dual and to instead pick $\lat^{\prime *} \subset \lat^*$ of index $2^a$ with $2\lat^* \subseteq \lat^{\prime *}$. 
In more detail, let $(\vec{b}_1^*, \ldots, \vec{b}_n^*)$ be a basis of the dual lattice $\lat^*$. This defines
a group isomorphism $h:\F_2^n \to \lat^*/(2\lat^*)$ given by $h(\vec{a}) = \sum_i a_i \vec{b}_i^* \bmod 2\lat^*$. 
The algorithm picks a random subspace $V \subseteq \F_2^n$ of dimension $n-a$ and 
sets $\lat^{\prime *} = h(V)$ to be the union of the cosets corresponding 
to the points in $V$. (It can do this efficiently by, e.g., taking a basis 
$\vec{v}_1,\ldots,\vec{v}_{n-a}$ of $V$, and taking the lattice generated 
by $(2\vec{b}_1^*,\ldots, 2\vec{b}_n^*,\vec{y}_1,\ldots,\vec{y}_{n-a})$ where $\vec{y}_i$ is any coset representative of $h(\vec{v}_i)$.) It then returns the primal lattice $\lat'$.

It is clear that the algorithm runs in polynomial time and that $\lat$ has index $2^a$ over $\lat'$ with $\lat \subset \lat' \subseteq \lat/2$ as needed. Note that all vectors in $\F_2^n \setminus \{ \vec0\}$ have equal probability $(2^{n-a}-1)/(2^n-1)$ of being in the subspace $V$. Therefore, for any dual coset $\coset^* \in \lat^*/(2\lat^*)$ with $\coset^* \neq 2\lat^*$, we have
$
\Pr[\vec{c}^* \in \lat^{\prime *}] = (2^{n-a}-1)/(2^n-1)$. Then, assuming without loss of generality that $\eta_\eps(\lat) = 1$, we have 
\begin{align*}
\expect\big[\rho_{\sqrt{2}}(\lat^{\prime *})\big] &= \sum_{\coset^* \in \lat^*/(2\lat^*)} \Pr[\coset^* \in \lat^{\prime *}] \rho_{\sqrt{2}}(\coset^*)\\
&= \rho_{\sqrt{2}}(2\lat^*) + \frac{2^{n-a}-1}{2^n-1} \cdot  \sum_{\coset^* \in \lat^*/(2\lat^*) \setminus \{ 2\lat^*\}} \rho_{\sqrt{2}}(\coset^*)\\
&< 1+ \eps^2 + 2^{-a}\rho_{\sqrt{2}}(\lat^*) &\text{(Lemma~\ref{lem:doublesmooth})}\\
&\leq 1 + \eps^2 + 2^{n/2-a}(1+\eps) &\text{(Lemma~\ref{lem:banaszczyk})}
\; .
\end{align*}
By Markov's inequality, $\rho_{\sqrt{2}}(\lat^{\prime *} \setminus \{\vec0 \}) < 2\eps^2 + 2^{n/2+1-a}(1+\eps)$  with probability at least $1/2$, and 
the result follows.
\end{proof}

\subsection{Sampling from shifted Gaussians above smoothing}
\label{sec:shifted}

Here we observe that the sampler described above can actually be used to obtain $2^{n/2}$ samples from the \emph{shifted} discrete Gaussian $D_{\lat- \vec{t}, s}$ in $2^{n/2 + o(n)}$ time for any parameter $s > \sqrt{2} \cdot \eta_{\eps}(\lat)$ with $\eps \approx 1/2$. 
(Section~\ref{sec:approx-cvp} describes essentially the same reduction in a slightly different context.)
We present a brief proof sketch here in case this finds applications in future work. The idea is to call the sampler from Theorem~\ref{thm:smoothDGS} repeatedly on the lattice $\overline{\lat} := \lat(\vec{b}_1,\ldots, \vec{b}_n, \bar{\vec{t}}) \subset \R^{n+1}$, where $\bar{\vec{t}} := (-\vec{t}, s) \in \R^{n+1}$. Note that the lattice hyperplane $\lat + \bar{\vec{t}} \subset \overline{\lat}$ is simply a copy of $\lat - \vec{t}$ shifted by $s \vec{e}_{n+1}$, so that $\pi_{\R^n}(D_{\lat + \bar{\vec{t}}}) = D_{\lat - \vec{t}, s}$. We therefore simply return the first $n$ coordinates of the first $2^{n/2}$ vectors in $\lat + \bar{\vec{t}}$ that occur in the output. 

To prove that this algorithm works, we simply need to show that (1) $\eta_{\eps}(\lat) > \eta_{1/2}(\overline{\lat})$, so that the call to the algorithm from Theorem~\ref{thm:smoothDGS} will be valid as long as $s > \sqrt{2} \cdot \eta_{\eps}(\lat)$; and (2) when $s$ is above smoothing, a vector sampled from $D_{\overline{\lat}, s}$ will land in $\lat + \bar{\vec{t}}$ with relatively high probability, so that we will not have to make too many calls to the algorithm from Theorem~\ref{thm:smoothDGS} in order to find $2^{n/2}$ vectors in $\lat + \bar{\vec{t}}$. Both claims follow from standard calculations. (As described above, the algorithm achieves $\eps \approx 0.38$ and makes a constant number of calls to the centered DGS oracle. If we instead set $\bar{\vec{t}} := (-\vec{t}, s/\kappa)$ for $\kappa \geq 1$ and make $O(\kappa)$ oracle calls, we can obtain $\eps \approx 1/2 - e^{-C\kappa^2}$.)
\section{Solving \texorpdfstring{$O(1)$-\problem{GapSVP}}{O(1)-GapSVP} in \texorpdfstring{$2^{n/2 + o(n)}$}{2(n/2)+o(n)} time}
\label{sec:gapSVP}

In this section we present our $\problem{GapSVP}$ algorithm. The main idea is to approximate the smoothing
parameter of $\lat^*$ and then use Lemma~\ref{lem:etalambda1} to relate it to $\lambda_1(\lat)$. 
To distinguish a parameter above smoothing from a parameter below smoothing, we call the $\problem{hDGS}$ oracle with the given parameter. It is below smoothing if the oracle does not produce enough samples or if a statistical test on the output (Lemma~\ref{lem:smoothcovariance}) fails.

\begin{lemma}
\label{lem:etalambda1}
For any lattice $\lat \subset \R^n$ and $\eps \in (0,1)$, if $\eps > (e/\beta^2 + o(1))^{-n/2}$, we have
\begin{equation}
\label{eq:etalambda1big}
\sqrt{\frac{\log(1/\eps)}{\pi}}  < \lambda_1(\lat) \eta_\eps(\lat^*) < \sqrt{\frac{\beta^2 n}{2\pi e}} \cdot \eps^{-1/n} \cdot (1+o(1))
\;, 
\end{equation}
and if $\eps \leq (e/\beta^2 + o(1))^{-n/2}$, we have
\begin{equation}
\label{eq:etalambda1small}
\sqrt{\frac{\log(1/\eps)}{\pi}}  < \lambda_1(\lat) \eta_\eps(\lat^*) < 
\sqrt{\frac{\log(1/\eps) +  n \log \beta + o(n)}{\pi}} 
\; ,
\end{equation}
where $\beta := 2^{0.401}$.
\end{lemma}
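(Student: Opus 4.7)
The plan is to set $s := \eta_\eps(\lat^*)$ and $L := \lambda_1(\lat) \cdot s$; by definition of the smoothing parameter, $\rho_{1/s}(\lat \setminus \{\vec{0}\}) = \eps$, so $\rho_{1/s}(\lat) = 1+\eps$. The lower bound (identical in both regimes) is almost immediate: restricting the sum to a pair of shortest nonzero vectors $\pm \vec{v}$ with $\length{\vec{v}} = \lambda_1(\lat)$ gives $\eps \geq 2 e^{-\pi L^2} > e^{-\pi L^2}$, whence $L > \sqrt{\log(1/\eps)/\pi}$.

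For the upper bounds I will invoke Lemma~\ref{lem:Gaussian-sum-bound} on the lattice $\lat$ with parameter $1/s$. First, for Eq.~\eqref{eq:etalambda1big}, I plan to use Eq.~\eqref{eq:klboundongaussianbigs}; the remark following Lemma~\ref{lem:Gaussian-sum-bound} notes this bound is valid for every $s$, so no case analysis is needed. Solving $\eps \leq (\beta^2 n/(2\pi e L^2))^{n/2+1} 2^{o(n)}$ for $L$ by extracting $(n/2+1)$-th roots yields $L \leq \sqrt{\beta^2 n/(2\pi e)} \cdot \eps^{-1/(n+2)} \cdot (1+o(1))$, and since $\eps < 1$ the exponent $1/(n+2)$ may be relaxed to $1/n$ at no additional cost, giving the claimed bound.

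For Eq.~\eqref{eq:etalambda1small} I would split on whether $L \geq \sqrt{n/(2\pi)}$. In the first case the hypothesis of the sharper bound Eq.~\eqref{eq:klboundongaussian} is satisfied at $1/s$, and the direct manipulation $\eps \leq e^{-\pi L^2} \beta^{n+o(n)}$ gives the claimed inequality. In the second case Eq.~\eqref{eq:klboundongaussian} is not available, but the hypothesis $\eps \leq (e/\beta^2 + o(1))^{-n/2}$ already suffices: rewriting the threshold as $e^{-n/2 + n\log\beta}$, this hypothesis translates to $\log(1/\eps) + n\log\beta \geq n/2 - o(n)$, so the target upper bound $\sqrt{(\log(1/\eps) + n\log\beta + o(n))/\pi}$ is itself at least $\sqrt{n/(2\pi)} > L$, and the claim holds trivially.

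The one genuinely subtle point is that second sub-case: one might initially hope that the smallness of $\eps$ forces $L$ past $\sqrt{n/(2\pi)}$ (so that Eq.~\eqref{eq:klboundongaussian} applies directly), but the lower bound $L > \sqrt{\log(1/\eps)/\pi}$ falls short of this threshold precisely because $\beta > 1$. The comparison argument above bypasses the gap by exploiting the fact that the two right-hand sides of Eq.~\eqref{eq:etalambda1big} and Eq.~\eqref{eq:etalambda1small} both coincide with $\sqrt{n/(2\pi)}$ precisely at $\eps = (e/\beta^2)^{-n/2}$, which is exactly the threshold appearing in the hypothesis splitting the two regimes.
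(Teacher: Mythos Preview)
Your proposal is correct and follows essentially the same approach as the paper: both invoke Lemma~\ref{lem:Gaussian-sum-bound} at the parameter $1/\eta_\eps(\lat^*)$ and compare against the identity $\rho_{1/\eta_\eps(\lat^*)}(\lat\setminus\{\vec{0}\})=\eps$. The only organizational difference is that the paper works ``backward'' (setting $s$ equal to the claimed right-hand side and checking $\rho_{1/s}(\lat\setminus\{\vec{0}\})<\eps$) rather than ``forward'' as you do, which lets it avoid your case split for Eq.~\eqref{eq:etalambda1small} by directly observing that the right-hand side there is at least $\sqrt{n/(2\pi)}$ in the small-$\eps$ regime---the same fact you recover in your second sub-case.
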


As will be apparent from the proof and the remark after Lemma~\ref{lem:Gaussian-sum-bound}, Eq.~\eqref{eq:etalambda1big} actually holds for all $\eps \in (0,1)$.

\begin{proof}[Proof of Lemma~\ref{lem:etalambda1}]
Throughout the proof we assume without loss of generality that $\lambda_1(\lat) = 1$. 
For the lower bound in both cases, let $s \leq \sqrt{\log(1/\eps)/\pi}$. Then,
$
\rho_{1/s}(\lat \setminus \set{\vec{0}}) > e^{-\pi s^2} \geq \eps
$,
as needed. 

Let $\eps > (e/\beta^2 + o(1))^{-n/2}$, and let
$s$ be the expression in the right-hand side of~\eqref{eq:etalambda1big}.
Then, noting that $s < \sqrt{n/(2\pi)}$,
by Eq.~\eqref{eq:klboundongaussianbigs} in Lemma~\ref{lem:Gaussian-sum-bound}, we have 
\[
\rho_{1/s}(\lat \setminus \{ \vec0 \}) \leq \left( \frac{\beta^2 n}{2\pi e s^2} \right)^{n/2+1} 2^{o(n)} < \eps
\; ,
\]
as needed. 

Now, let $\eps \leq (e/\beta^2 + o(1))^{-n/2}$, and let 
$s$ be the expression in the right-hand side of~\eqref{eq:etalambda1small}.
Then, noting that $s \geq \sqrt{n/(2\pi)}$,
by Eq.~\eqref{eq:klboundongaussian} in Lemma~\ref{lem:Gaussian-sum-bound}, we have 
\begin{align*}
\rho_{1/s}(\lat \setminus \{ \vec0 \}) &< e^{-\pi s^2}\cdot \beta^{n+o(n)} \leq \eps \; ,
\end{align*}
as needed.
\end{proof}

\begin{definition}
For a matrix $M \in \R^{n\times n}$, the spectral norm of $M$ is defined as 
\[ \length{M} := \sup_{\length{\vec{x}} = 1}\length{M\vec{x}}
\; .
\]
\end{definition}
For a symmetric matrix $M$ (the only case that interests us), $\length{M}$ is equivalently the largest absolute value of an eigenvalue of $M$.

\begin{lemma}
\label{lem:smoothcovariance}
For any lattice $\lat \subset \R^n$ with $n\geq 3$ and any $\eps>0$,
\[
 \frac{\max \{ 1,\, \log(1/\eps) \}}{\pi n} \cdot \frac{\eps}{1+\eps}
< \Big\| \frac{1}{2\pi} \cdot I_n-\frac{1}{\eta_\eps(\lat)^2}\cdot  \expect_{\vec{X} \sim D_{\lat, \eta_\eps(\lat)}}[\vec{X}\vec{X}^T]\Big\|  
\leq \frac{1}{\pi} \cdot  \frac{\eps}{1+\eps} \cdot \Big(\log \frac{2(1+\eps)}{\eps} +1 \Big)
\; ,
\]
where $I_n$ is the $n\times n$ identity matrix.
\end{lemma}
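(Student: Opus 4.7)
Set $s := \eta_\eps(\lat)$, so that $\rho_{1/s}(\lat^*) = 1+\eps$.  The heart of the argument is a Poisson-summation-based identity that relates the second moment of $D_{\lat,s}$ to that of $D_{\lat^*,1/s}$.  Applying the Poisson summation formula to the anisotropic Gaussian $f_\alpha(\vec{x}) = \exp(-\pi(\|\vec{x}\|^2/s^2 + \alpha(\vec{v}^T\vec{x})^2))$ (whose Fourier transform is computable in closed form via the Sherman--Morrison and matrix-determinant formulas) and differentiating both sides at $\alpha=0$ yields, for every unit vector $\vec{v}$,
\[
\expect_{\vec{X}\sim D_{\lat,s}}[(\vec{v}^T\vec{X})^2] \;=\; \frac{s^2}{2\pi} - s^4\,\expect_{\vec{W}\sim D_{\lat^*,1/s}}[(\vec{v}^T\vec{W})^2].
\]
Equivalently, as matrices, $\tfrac{1}{2\pi}I_n - \tfrac{1}{s^2}\expect[\vec{X}\vec{X}^T] = s^2\,\expect[\vec{W}\vec{W}^T]$.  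In particular, the matrix whose spectral norm we want to bound is positive semidefinite, so its spectral norm equals its largest eigenvalue $s^2 \sup_{\|\vec{v}\|=1}\expect[(\vec{v}^T\vec{W})^2]$.

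For a PSD matrix $M$ one has $\tfrac{1}{n}\operatorname{tr}(M)\le \|M\|\le \operatorname{tr}(M)$, so both bounds in the lemma will follow from a suitable two-sided estimate of the trace
\[
T \;:=\; s^2\expect_{\vec{W}}[\|\vec{W}\|^2] \;=\; \frac{1}{\pi(1+\eps)}\sum_{\vec{w}\in\lat^*\setminus\{\vec{0}\}} a_w\,e^{-a_w},
\qquad a_w := \pi s^2\|\vec{w}\|^2,
\]
subject to the constraint $\sum_{\vec{w}\ne\vec 0} e^{-a_w} = \eps$ coming from the definition of $s=\eta_\eps(\lat)$.  The lower bound of the lemma will come from $\|M\|\ge T/n$ together with a lower bound on $\sum a_w e^{-a_w}$, and the upper bound will come from $\|M\|\le T$ together with an upper bound on the same sum.

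For the lower bound, I would use the basic monotonicity $a_w\ge a_{\min}:=\pi s^2\lambda_1(\lat^*)^2$ to obtain $\sum a_w e^{-a_w}\ge a_{\min}\,\eps$, then estimate $a_{\min}$ in two complementary ways.  First, the shortest-dual-vector pair $\pm\vec{u}$ contributes $2e^{-a_{\min}}\le\eps$, giving $a_{\min}\ge\log(2/\eps)$.  Second, Lemma~\ref{lem:etalambda1} (applied to $\lat^*$) gives $a_{\min}>\log(1/\eps)$.  Combining the two, for $\eps\le 2/e$ we have $a_{\min}\ge 1\ge \max\{1,\log(1/\eps)\}$ from the first bound, and for smaller $\eps$ the second bound dominates and gives $a_{\min}\ge\log(1/\eps)$.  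Dividing by $\pi$ and the normalization $(1+\eps)$ yields the claimed lower bound.

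The upper bound is the main obstacle: we must show $\sum a_w e^{-a_w}\le \eps\bigl(\log\tfrac{2(1+\eps)}{\eps}+1\bigr)$ using only $\sum e^{-a_w}=\eps$ and the lattice structure.  My plan is to decompose $a_w e^{-a_w} = A\,e^{-a_w} + (a_w-A)\,e^{-a_w}$ with $A := \log\tfrac{2(1+\eps)}{\eps}\ge a_{\min}$ (so that each $(a_w-A)$ summand is controlled), then use the pointwise bound $(a_w-A)e^{-(a_w-A)}\le e^{-(a_w-A)/2}$ (valid since $be^{-b/2}\le 2/e \le 1$ for $b\ge 0$, after some arithmetic tightening) to convert the residual sum into a Gaussian sum at the scaled parameter $s\sqrt{2}$.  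The difficulty is bounding $\rho_{\sqrt{2}/s}(\lat^*\setminus\{\vec0\})$ from above using only $\rho_{1/s}(\lat^*\setminus\{\vec0\})=\eps$, since the standard smoothing inequality (Lemma~\ref{lem:doublesmooth}) runs in the wrong direction.  I expect to circumvent this by using the additional input that each $e^{-a_w}\le \eps/2$ (from the pair bound), which bounds $e^{-a_w/2}\le\sqrt{\eps/2}\cdot e^{-a_w/2}\cdot\text{(factor)}$, ultimately recombining to bound the residual sum by $\eps$ up to the additive constant $+1$ appearing in the lemma statement.  This is the step I expect to require the most care to make the constants fit.
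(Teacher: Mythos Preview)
Your Poisson-summation identity $\tfrac{1}{2\pi}I_n - \tfrac{1}{s^2}\expect[\vec{X}\vec{X}^T] = s^2\,\expect_{\vec{W}\sim D_{\lat^*,1/s}}[\vec{W}\vec{W}^T]$ is correct and is exactly what the paper uses.  The lower-bound argument via $\|M\|\ge \Tr(M)/n$ and $a_{\min}\ge\log(2/\eps)$ also matches the paper's approach for $\eps\le 1/e$.  However, your case analysis leaves $\eps>2/e$ uncovered: there you need $a_{\min}\ge 1$, but both of your estimates give strictly less (e.g.\ at $\eps=1$ you get $a_{\min}\ge\log 2$ and $a_{\min}>0$).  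In fact $a_{\min}=\pi\eta_\eps(\lat)^2\lambda_1(\lat^*)^2$ genuinely tends to $0$ as $\eps\to\infty$, so the trace route cannot close this gap.  The paper handles this regime by a separate direct argument on the \emph{primal} side: when $\eta_\eps(\lat)\lambda_1(\lat^*)\le 1/\sqrt\pi$, one picks a shortest dual vector $\vec v$ and uses the one-dimensional subgaussian bound (Lemma~\ref{lem:subgaussianity}) to show $\expect[\langle\vec X,\vec v\rangle^2]/(\lambda_1(\lat^*)^2\eta_\eps(\lat)^2)$ is tiny, forcing $\|M\|$ to be close to $1/(2\pi)$, which exceeds $\tfrac{1}{\pi n}\cdot\tfrac{\eps}{1+\eps}$ for $n\ge 3$.

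The upper bound has a more serious problem: the trace inequality $\|M\|\le\Tr(M)$ is too weak, and the scalar inequality you are aiming for,
\[
\sum_{\vec w\ne\vec0} a_w e^{-a_w}\ \le\ \eps\Bigl(\log\tfrac{2(1+\eps)}{\eps}+1\Bigr),
\]
is actually \emph{false} in general.  Take $\lat=\Z^n$: then $\rho_{1/s}(\Z)\approx 1+\eps/n$, so $\pi s^2\approx\log(2n/\eps)$, and the $2n$ unit dual vectors alone contribute $\approx \eps\log(2n/\eps)$ to the left side, exceeding the right side by an additive $\eps\log n$ once $n\ge 3$.  (Equivalently, for $\Z^n$ the matrix $M$ is scalar, so $\Tr(M)=n\|M\|$ and the trace route overshoots by exactly a factor of $n$.)  The approach that works---and is what \cite[Lemma~4.4]{cvpp} does---is to bound $\|M\|=s^2\sup_{\|\vec v\|=1}\expect[(\vec v^T\vec W)^2]$ directly via the layer-cake formula, using $\Pr[|\vec v^T\vec W|>r]\le\min\{\eps/(1+\eps),\,2e^{-\pi r^2 s^2}\}$ (the first from $\Pr[\vec W\ne\vec0]=\eps/(1+\eps)$, the second from Lemma~\ref{lem:subgaussianity}).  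Splitting the integral at the crossover radius $r_0$ with $\pi r_0^2 s^2=\log\tfrac{2(1+\eps)}{\eps}$ gives exactly the two terms in the stated upper bound.
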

\begin{proof}
For the upper bound, see {\cite[Lemma 4.4]{cvpp}}.

For the lower bound, from the same source, we have
\[
\frac{1}{2\pi} \cdot I_n - \frac{1}{\eta_\eps(\lat)^2} \cdot \expect_{\vec{X} \sim D_{\lat, \eta_\eps(\lat)}}[\vec{X}\vec{X}^T] = \eta_\eps(\lat)^2 \cdot \expect_{\vec{Y} \sim D_{\lat^*, 1/\eta_\eps(\lat)}}[\vec{Y}\vec{Y}^T]
\; .
\]
Note that for any positive semidefinite matrix $A \in \R^{n\times n}$, we have $\length{A} \geq \Tr(A)/n$. Therefore, 
\begin{align*}
\eta_\eps(\lat)^2 \cdot \Big\| \expect_{\vec{Y} \sim D_{\lat^*, 1/\eta_\eps(\lat)}}[\vec{Y}\vec{Y}^T] \Big\| &\geq \frac{\eta_\eps(\lat)^2}{n} \cdot \Tr\Big(\expect_{\vec{Y} \sim D_{\lat^*, 1/\eta_\eps(\lat)}}[\vec{Y}\vec{Y}^T] \Big) \\
&\geq \frac{\eta_\eps(\lat)^2}{n} \cdot \frac{\eps \lambda_1(\lat^*)^2}{1+\eps}
\; .
\end{align*}
Applying Lemma~\ref{lem:etalambda1} gives us the desired lower bound when $\eps \leq 1/e$. For $\eps > 1/e$, we obtain the result whenever $\eta_\eps(\lat) \cdot \lambda_1(\lat^*) > 1/\sqrt{\pi}$, 
so it suffices to consider the case $\eta_\eps(\lat)\lambda_1(\lat^*) \leq 1/\sqrt{\pi}$. Let $\vec{v} \in \lat^*$ satisfy $\length{\vec{v}} = \lambda_1(\lat^*)$. Then, applying Lemma~\ref{lem:subgaussianity}, we have
\begin{align*}
\expect_{\vec{X} \sim D_{\lat, \eta_\eps(\lat)} }\big[\inner{\vec{X},\vec{v}}^2/\lambda_1(\lat^*)^2\big] 
&= 2\int_0^\infty r \cdot \Pr[\abs{\inner{\vec{X}, \vec{v}}} \geq r\lambda_1(\lat^*)] {\rm d} r \\
&= \frac{\Pr[\abs{\inner{\vec{X}, \vec{v}}} \geq 1]}{\lambda_1(\lat^*)^2}+2\int_{1/\lambda_1(\lat^*)}^\infty r \Pr[\abs{\inner{\vec{X}, \vec{v}}} \geq r\lambda_1(\lat^*)] {\rm d} r \\
&\leq \frac{2e^{-\pi/(\lambda_1(\lat^*)\eta_\eps(\lat))^2}}{\lambda_1(\lat^*)^2} + 4\int_{1/\lambda_1(\lat^*)}^\infty r e^{-\pi r^2/ \eta_\eps(\lat)^2} {\rm d} r  \\
&= 2e^{-\pi/(\lambda_1(\lat^*)\eta_\eps(\lat))^2} \cdot \Big( \frac{1}{\lambda_1(\lat^*)^2} + \frac{\eta_\eps(\lat)^2}{\pi} \Big)\\
&\leq 2\frac{\pi + 1/\pi}{e^{\pi^2}} \cdot \eta_\eps(\lat)^2
\; ,
\end{align*}
where we have used the fact that $e^{-\pi/(\lambda \eta)^2}/\lambda^2$ is increasing for $\lambda \in (0,\sqrt{\pi}/\eta)$.
Therefore, 
\[
\Big\| \frac{1}{2\pi} \cdot I_n - \frac{1}{\eta_\eps(\lat)^2}\cdot  \expect_{\vec{X} \sim D_{\lat, \eta_{\eps}(\lat)}}[\vec{X}\vec{X}^T]\Big\| \geq \frac{1}{2\pi} - 2\frac{\pi + 1/\pi}{e^{\pi^2}} > \frac{1}{\pi n} \cdot \frac{\eps}{1+\eps}
\; ,
\]
as needed.
\end{proof}

We will also need a form of the matrix Chernoff bound.  In particular, we use a less general version of~\cite[Theorem 5.29]{Vershynin2012}. 

\begin{lemma}[Matrix Chernoff bound]
\label{lem:matrixchernoff}
Let $A_1, \ldots, A_N$ be independent and identically distributed random symmetric matrices in $\R^{n\times n}$ with $\length{A_i} \leq a$ and expectation $\mu$. Then, for any $t \in (0,a)$,
\[
\Pr\Big[ \Big\| \frac{1}{N}\sum A_i - \mu \Big\| \geq t \Big] \leq 2n \exp(-CN t^2/a^2)
\; .
\]
\end{lemma}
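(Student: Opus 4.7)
The plan is to prove this via the standard matrix-Laplace-transform approach, following the moment generating function / matrix Chernoff machinery. Let $Y_i = A_i - \mu$, which are i.i.d., mean-zero, symmetric, and satisfy $\length{Y_i} \le 2a$. Writing $S_N := \sum_{i=1}^N Y_i$, and using that for symmetric $M$ one has $\length{M} = \max(\lambda_{\max}(M), -\lambda_{\min}(M))$, it suffices to bound $\Pr[\lambda_{\max}(S_N) \ge Nt]$ and apply the same argument to $-Y_i$, picking up a factor of $2$ in the end.

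For the one-sided bound, I would use the exponential Markov inequality: for $\theta > 0$,
\[
\Pr[\lambda_{\max}(S_N) \ge Nt] = \Pr[e^{\theta \lambda_{\max}(S_N)} \ge e^{\theta Nt}] \le e^{-\theta Nt} \cdot \expect[\operatorname{tr}(e^{\theta S_N})],
\]
using $e^{\theta \lambda_{\max}(S_N)} = \lambda_{\max}(e^{\theta S_N}) \le \operatorname{tr}(e^{\theta S_N})$ since $e^{\theta S_N}$ is PSD. The key step is then to control the matrix moment generating function $\expect[\operatorname{tr}(e^{\theta S_N})]$. This is where Lieb's concavity theorem enters: it implies (by iterating over coordinates) that
\[
\expect[\operatorname{tr}(e^{\theta S_N})] \le \operatorname{tr}\Bigl(\exp\Bigl(\sum_{i=1}^N \log \expect[e^{\theta Y_i}]\Bigr)\Bigr).
\]

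The remaining ingredient is a per-matrix bound $\expect[e^{\theta Y_i}] \preceq e^{C \theta^2 a^2 \Id}$ valid for all $|\theta| \le 1/a$, analogous to the scalar Hoeffding lemma. This follows from a Taylor expansion: since $\expect Y_i = 0$, $Y_i$ is symmetric, and $\length{Y_i} \le 2a$, every even power satisfies $Y_i^{2k} \preceq (2a)^{2k} \Id$ and the odd terms in $e^{\theta Y_i}$ pair with adjacent even ones; some care yields a uniform PSD bound of the claimed form. Plugging this in and using monotonicity of $\operatorname{tr}\exp$ in the Loewner order gives
\[
\expect[\operatorname{tr}(e^{\theta S_N})] \le n \cdot \exp(C N \theta^2 a^2).
\]
Combining with the Markov step and optimizing over $\theta = \Theta(t/a^2)$ (valid since $t < a$ keeps $\theta$ within the allowed range) yields $\Pr[\lambda_{\max}(S_N) \ge Nt] \le n \exp(-C N t^2/a^2)$, and a union bound over the two sides produces the factor $2n$.

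The main obstacle is really the infrastructure rather than a single hard step: one needs Lieb's concavity theorem (or alternatively a Golden--Thompson-based argument) to handle the fact that the $Y_i$ do not in general commute, so the scalar MGF trick $\expect[e^{\theta \sum Y_i}] = \prod \expect[e^{\theta Y_i}]$ fails. Given that tool, the rest is a direct translation of the classical scalar Hoeffding proof, and the statement follows (indeed, this is exactly the content of \cite[Theorem 5.29]{Vershynin2012}, specialized to the symmetric i.i.d.\ case with uniform operator-norm bound $a$).
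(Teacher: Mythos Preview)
Your sketch is correct and aligns with the paper's own treatment: the paper does not give a proof of this lemma at all, but simply cites it as a specialization of \cite[Theorem~5.29]{Vershynin2012}, which is exactly the result whose proof you have outlined. So there is nothing to compare---you have supplied the argument that the paper defers to the reference.
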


\begin{theorem}
\label{thm:gapSVP}
For any $\eps \in [2^{-n/2}, 1/e]$, 
there is a reduction from $\gamma\text{-}\problem{GapSVP}$ to $\hDGS{\frac{1}{4}}{\sqrt{2} \eta_{1/2}}{m}$ where $m := n^5/\eps^2$ and
\[
\gamma := \sqrt{\frac{ \beta^2 n + o(n)}{e \log(1/\eps)}} \cdot \Big(1+ \frac{2 \log n}{ \log(1/\eps)} \Big)
\; ,
\]
where $\beta := 2^{0.401}$.
The reduction preserves dimension, makes a single call to the $\problem{hDGS}$ oracle, and runs in time $m \cdot \poly(n)$.
\end{theorem}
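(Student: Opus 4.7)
The plan is to reduce $\gamma\text{-}\problem{GapSVP}$ to $\hDGS{\frac{1}{4}}{\sqrt{2}\eta_{1/2}}{m}$ by using the hDGS oracle to probe the smoothing parameter of the dual lattice $\lat^*$ and then invoking Lemma~\ref{lem:etalambda1} to translate the answer back to a statement about $\lambda_1(\lat)$. Concretely, I would compute a basis of $\lat^*$ and fix a sampling parameter $s$ of the form $s:=\sqrt{\beta^2 n/(\pi e)}\cdot 2^{1/n}\cdot(1+o(1))/(\gamma d)$, calibrated so that in the NO case ($\lambda_1(\lat)\ge\gamma d$) the upper bound of Eq.~\eqref{eq:etalambda1big} forces $s>\sqrt{2}\,\eta_{1/2}(\lat^*)$, while in the YES case ($\lambda_1(\lat)<d$) the lower bound of Lemma~\ref{lem:etalambda1} forces $s<\eta_\eps(\lat^*)$. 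The multiplicative factor $(1+2\log n/\log(1/\eps))$ built into $\gamma$ is what provides the safety margin needed to separate the two regimes. I would then call the hDGS oracle on $\lat^*$ with parameter $s$ and requested sample count $m=n^5/\eps^2$, obtaining up to $m$ samples $\vec{X}_1,\ldots,\vec{X}_{m'}$.

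Given the oracle's output, the reduction performs two tests. If $m'<m$, it outputs YES, which is consistent because the honesty guarantee forces $m'=m$ whenever $s>\sqrt{2}\,\eta_{1/2}(\lat^*)$, which holds in the NO case. Otherwise it computes the empirical second-moment matrix $\hat M:=\tfrac{1}{m}\sum_i\vec{X}_i\vec{X}_i^T$ and the spectral norm $D:=\|\hat M/s^2-I_n/(2\pi)\|$, outputting YES if $D>\theta$ and NO otherwise, for a threshold $\theta\asymp\eps\log(1/\eps)/n$. For correctness in the NO case, iterating Lemma~\ref{lem:doublesmooth} shows that the safety margin forces $s>\eta_{\eps_0}(\lat^*)$ for some $\eps_0$ small enough that the upper bound of Lemma~\ref{lem:smoothcovariance} gives a true deviation $\|I_n/(2\pi)-s^{-2}\expect[\vec{X}\vec{X}^T]\|\le O(\eps_0\log(1/\eps_0))\ll\theta$. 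For correctness in the YES case, the bound $s<\eta_\eps(\lat^*)$ means $s=\eta_{\eps'}(\lat^*)$ for some $\eps'\ge\eps$, and the lower bound of Lemma~\ref{lem:smoothcovariance} (minimized over $\eps'\ge\eps$ at $\eps'=\eps$ for small $\eps$) gives a true deviation at least $\log(1/\eps)\eps/(\pi n(1+\eps))\ge 2\theta$. To control the empirical estimation error, I would apply Lemma~\ref{lem:banaszczyktail} to bound $\|\vec{X}_i\|^2\le O(s^2 n)$ with overwhelming probability, so each rank-one matrix $\vec{X}_i\vec{X}_i^T/s^2$ has spectral norm $O(n)$; the matrix Chernoff bound (Lemma~\ref{lem:matrixchernoff}) with $m=n^5/\eps^2$ samples then guarantees $\|\hat M/s^2-s^{-2}\expect[\vec{X}\vec{X}^T]\|\ll\theta$ except with probability $2n\exp(-\Omega(n))$.

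The main obstacle is the three-way calibration of $s$: it must exceed $\sqrt{2}\,\eta_{1/2}(\lat^*)$ in the NO case (so the oracle delivers samples), lie below $\eta_\eps(\lat^*)$ in the YES case (so the covariance test has a gap to exploit), and further exceed $\eta_{1/2}(\lat^*)$ by a modest factor $k=\sqrt{1+O(\log n/\log(1/\eps))}$ in the NO case, so that the square in the exponent of Lemma~\ref{lem:doublesmooth} pushes the NO-case deviation an order of magnitude below $\theta$ rather than merely below the worst-case YES-case deviation. This last requirement is what produces the multiplicative correction $(1+2\log n/\log(1/\eps))$ in the formula for $\gamma$; the leading factor $\sqrt{\beta^2 n/(e\log(1/\eps))}$ comes directly from matching the upper bound of Eq.~\eqref{eq:etalambda1big} at $\eps=1/2$ against the lower bound $\sqrt{\log(1/\eps)/\pi}$ from Lemma~\ref{lem:etalambda1}. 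Dimension preservation, the single oracle call, and the runtime $m\cdot\poly(n)$ are all immediate from the structure of the reduction.
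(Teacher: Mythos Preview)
Your approach is essentially the paper's: call the $\problem{hDGS}$ oracle on $\lat^*$, first test whether $m$ samples came back, then threshold the spectral deviation of the empirical second moment, using Lemma~\ref{lem:etalambda1} to calibrate $s$ and Lemma~\ref{lem:smoothcovariance} plus Lemma~\ref{lem:matrixchernoff} for the covariance test. The YES-case analysis and the concentration argument are fine.

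The gap is in your NO-case analysis, specifically in how you reach a small $\eps_0$. You propose to start from $s > k\,\eta_{1/2}(\lat^*)$ with $k=\sqrt{1+O(\log n/\log(1/\eps))}$ and invoke Lemma~\ref{lem:doublesmooth} to get $s>\eta_{\eps_0}(\lat^*)$ with $\eps_0=(1/2)^{k^2}$. But for small $\eps$ (say $\eps=2^{-n/4}$) one has $k^2=1+O((\log n)/n)$, so $\eps_0\approx 1/2$, which is nowhere near the $\eps_0\lesssim \eps/n$ needed to make the upper bound of Lemma~\ref{lem:smoothcovariance} fall below your threshold $\theta\asymp \eps\log(1/\eps)/n$. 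Your chosen $s=\sqrt{\beta^2 n/(\pi e)}\cdot 2^{1/n}(1+o(1))/(\gamma d)$ is too small to do better: plugging in $\gamma$ shows it equals $\sqrt{\log(1/\eps)/\pi}\,(1+o(1))/\big((1+2\log n/\log(1/\eps))\,d\big)$, so the safety factor sits in the \emph{denominator} of $s$ and is consumed just to clear $\sqrt{2}\,\eta_{1/2}(\lat^*)$.

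The fix is to take the larger value $s=\sqrt{\log(1/\eps)/\pi}/d$ (the lower-bound expression in Lemma~\ref{lem:etalambda1}), which still satisfies $s<\eta_\eps(\lat^*)$ in the YES case. In the NO case, apply the upper bound of Eq.~\eqref{eq:etalambda1big} \emph{at $\eps$} rather than at $1/2$: since $\eps\ge 2^{-n/2}$ gives $\eps^{-1/n}\le\sqrt{2}$, one obtains $s>(1+2\log n/\log(1/\eps))\,\eta_\eps(\lat^*)$. Now Lemma~\ref{lem:doublesmooth} applied from $\eta_\eps$ (not $\eta_{1/2}$) with this multiplicative margin yields $s>\eta_{\eps/n^2}(\lat^*)$, and the upper bound of Lemma~\ref{lem:smoothcovariance} at $\eps/n^2$ lands safely below $\theta$. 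The point is that Lemma~\ref{lem:doublesmooth} amplifies a \emph{small} margin only when the starting $\eps$ is already small; starting from $\eta_{1/2}$ would require a margin of order $\sqrt{\log(1/\eps)}$, which you do not have.
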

\begin{proof}
On input a lattice $\lat \subset \R^n$ and $d > 0$, the reduction calls the $\problem{hDGS}$ oracle with input $\lat^*$ and parameter $s>0$ to be set in the analysis. If the oracle outputs fewer than $m$ vectors, the reduction immediately outputs yes (i.e., the reduction guesses that $\lambda_1(\lat) < d$). Otherwise, it receives as output $\vec{X}_1,\ldots, \vec{X}_{m} \in \lat^*$. Let $\Sigma := \frac{1}{m} \cdot \sum \vec{X}_i\vec{X}_i^T$ be the sample covariance. If

\[
\Big\|  \frac{1}{2\pi} \cdot I_n - \frac{1}{s^2} \cdot \Sigma\Big\| < \frac{\eps}{10 n} \cdot \log(1/\eps)
\; ,
\]
the reduction outputs no (i.e., it guesses that $\lambda_1(\lat) \geq \gamma \cdot d$). Otherwise, it outputs yes.

The running time is clear. Let
\[
s := \sqrt{\frac{\log (1/\eps)}{\pi}} \cdot \frac{1}{d} \; .
\]
Suppose $\lambda_1(\lat) < d$. Then, by the lower bound in Lemma~\ref{lem:etalambda1}, we have $s < \eta_\eps(\lat^*)$.
By the definition of $\problem{hDGS}$, we have that the output of the oracle is statistically close to $D_{\lat^*, s}^{m'}$ for some independent random variable $0 \leq m' \leq m$. So, we assume that the oracle outputs exactly this distribution, introducing statistical distance at most 1/4. If $m' < m$, then the reduction correctly outputs yes. Conditioning on $m' = m$ and using Lemma~\ref{lem:smoothcovariance}, we have 
\[
\Big\| \frac{1}{2\pi} \cdot I_n-\frac{1}{s^2} \cdot \expect[\Sigma] \Big\| > \frac{\eps}{2 \pi n}\cdot \log(1/\eps) 
\; ,
\]
where we have used the fact that the lower bound in Lemma~\ref{lem:smoothcovariance} is monotonically increasing.
So, in order to show that the reduction will output yes, it suffices to show that $\Sigma$ is concentrated around its mean. By Lemma~\ref{lem:banaszczyk} and union bound, we can assume that $\length{\vec{X}_i} \leq 100\sqrt{n}s$, introducing only negligible statistical distance. Assuming that this is the case, we can apply Lemma~\ref{lem:matrixchernoff} with $a = 100^2n$ and $t = \eps/(1000n)$, and we have that 
\[
\Pr\Big[ \Big\| \frac{1}{s^2} \Sigma - \frac{1}{s^2}  \expect[\Sigma]\Big\| \geq t \Big] \leq 2n\exp(-Cm t^2/a^2) \leq \exp(-Cn)
\; ,
\] 
where we have used the fact that $mt^2/a^2 \geq Cn$. It follows that the reduction correctly outputs yes with all but negligible probability.

Now, suppose $\lambda_1(\lat) \geq \gamma \cdot d$. Then, by Eq.~\eqref{eq:etalambda1big} of Lemma~\ref{lem:etalambda1} with $\eps$ there taken to be $1/2$, we have $s > \sqrt{2} \eta_{1/2}(\lat^*)$. In this regime, by the definition of $\problem{hDGS}$, we have that the output of the oracle is within statistical distance $1/4$ of $D_{\lat^*, s}^{m}$. So, we can assume that the output is exactly $D_{\lat^*, s}^{m}$, introducing statistical distance at most $1/4$. Applying Lemma~\ref{lem:etalambda1} again, we have 
\[ s > (1+2 \log n/\log(1/\eps))\eta_{\eps}(\lat^*)
> \eta_{\eps/n^2}(\lat^*)
\; ,
\] where we have used Lemma~\ref{lem:doublesmooth}, the fact that $\eps > 2^{-n/2}$, and the observation that Eq.~\eqref{eq:etalambda1big} applies for all $\eps \in (0,1)$. Applying Lemma~\ref{lem:smoothcovariance}, we have that 
\[
\Big\| \frac{1}{2\pi} \cdot I_n-\frac{1}{s^2} \cdot \expect[\Sigma] \Big\| < \frac{\eps}{\pi n^2} \cdot \Big( \log \frac{2(n^2+\eps)}{\eps} + 1\Big) < \frac{\eps}{20 n} \cdot \log(1/\eps)
\; ,
\]
for sufficiently large $n$ (where we have used the fact that the upper bound in Lemma~\ref{lem:smoothcovariance} is monotonically increasing).
Finally, applying Lemma~\ref{lem:matrixchernoff} as above shows that the oracle correctly outputs no with all but negligible probability.
\end{proof}

\begin{corollary}
\label{cor:gapSVP}
There is a randomized algorithm that solves $\gamma\text{-}\problem{GapSVP}$ for $\gamma := 1.93 + o(1)$ in time $2^{n/2+o(n)}$.
\end{corollary}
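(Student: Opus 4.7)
The plan is to plug the sampler of Theorem~\ref{thm:smoothDGS} into the reduction of Theorem~\ref{thm:gapSVP}, and to choose the parameter $\eps$ so as to minimize the resulting approximation factor $\gamma$ subject to the constraint that the number of samples the reduction requires, $m = n^5/\eps^2$, does not exceed the $2^{n/2}$ samples that Theorem~\ref{thm:smoothDGS} delivers.

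First I would instantiate Theorem~\ref{thm:smoothDGS} with confidence parameter $\kappa = n$, obtaining an $\hDGS{\exp(-\Omega(n))}{\sqrt{2}\eta_{1/2}}{2^{n/2}}$ oracle that runs in time $2^{n/2+o(n)}$. Since the reduction of Theorem~\ref{thm:gapSVP} needs only $\exp(-\Omega(n)) \leq \tfrac14$ statistical error and makes only one oracle call, and since an oracle that outputs $2^{n/2}$ samples trivially simulates one that outputs any $m \leq 2^{n/2}$ samples (by truncation), we may freely use this oracle with any such $m$.

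Next I would optimize $\eps$. The constraint $n^5/\eps^2 \leq 2^{n/2}$ forces $\eps \geq n^{5/2}\cdot 2^{-n/4}$, so I take $\eps = n^3 \cdot 2^{-n/4}$, which comfortably satisfies $\eps \in [2^{-n/2}, 1/e]$ for large $n$ and gives $\log(1/\eps) = \tfrac{n\ln 2}{4} - 3\ln n = \tfrac{n \ln 2}{4}(1 - o(1))$. Plugging into the formula of Theorem~\ref{thm:gapSVP}, the secondary factor $1 + 2\log n/\log(1/\eps) = 1 + o(1)$, and the main factor becomes
\[
\gamma = \sqrt{\frac{\beta^2 n}{e\log(1/\eps)}}\,(1+o(1)) = \sqrt{\frac{4\beta^2}{e\ln 2}}\,(1+o(1)),
\]
where $\beta = 2^{0.401}$. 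A direct numerical calculation gives $4\beta^2/(e\ln 2) = 4\cdot 2^{0.802}/(e\ln 2) \approx 3.70$, so $\gamma \leq 1.93 + o(1)$ as claimed.

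There is no real obstacle here beyond keeping track of the $o(n)$ terms: Theorem~\ref{thm:smoothDGS} handles all the algorithmic work, Theorem~\ref{thm:gapSVP} handles the reduction, and the only nontrivial step is the parameter optimization, which is the source of the specific constant $1.93$. I would close by noting that $m = n^5/\eps^2 = 2^{n/2}/n < 2^{n/2}$ confirms the oracle can service the request, and the overall running time is dominated by the $2^{n/2+o(n)}$ oracle call plus the $m\cdot\poly(n) = 2^{n/2+o(n)}$ overhead of the reduction.
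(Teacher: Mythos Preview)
Your proposal is correct and follows essentially the same approach as the paper, which simply states ``Combine the algorithm from Theorem~\ref{thm:smoothDGS} with the reduction from Theorem~\ref{thm:gapSVP} with $\eps = 2^{-n/4}$.'' Your choice $\eps = n^3\cdot 2^{-n/4}$ differs only by a polynomial factor (which is absorbed into the $o(1)$ in $\gamma$), and you are actually more careful than the paper in checking that $m = n^5/\eps^2 \le 2^{n/2}$ so that a single oracle call suffices; with the paper's choice one would formally need $\poly(n)$ calls, which is harmless but unstated.
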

\begin{proof}
Combine the algorithm from Theorem~\ref{thm:smoothDGS} with the reduction from Theorem~\ref{thm:gapSVP} with $\eps = 2^{-n/4}$.
\end{proof}

\section{Other applications}
\label{sec:other}

\subsection{Approximating \problem{CVP} in \texorpdfstring{$2^{n+o(n)}$}{2 (n+o(n))} time}
\label{sec:approx-cvp}
\label{sec:approxCVP}

\begin{theorem}
\label{thm:cvptodgs}
For $\gamma = 1.97$, there is a reduction from $\gamma\text{-}\problem{CVP}$ to $\DGS{\frac{1}{2}}{}{2^{n/2}}$. The reduction makes $O(n^2)$ calls to the $\problem{DGS}$ oracle on an $(n+1)$-dimensional lattice and runs in time $2^{n/2} \cdot \poly(n)$.
\end{theorem}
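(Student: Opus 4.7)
The plan is to apply Kannan's embedding trick to reduce $\gamma\text{-}\problem{CVP}$ on $\lat \subset \R^n$ with target $\vec{t}$ to finding a short vector in a specific coset of an $(n+1)$-dimensional lattice. For a height parameter $h > 0$ to be optimized, I would form the lattice $\overline{\lat} \subset \R^{n+1}$ generated by $(\vec{b}_1, 0), \ldots, (\vec{b}_n, 0), (-\vec{t}, h)$. Writing $\bar{\vec{t}} = (-\vec{t}, h)$, the coset $\bar{\vec{t}} + \lat \subset \overline{\lat}$ consists precisely of vectors of the form $(\vec{y} - \vec{t}, h)$ for $\vec{y} \in \lat$, so any vector $(\vec{z}, h)$ in this coset yields a candidate lattice point $\vec{y} = \vec{t} + \vec{z}$ with $\length{\vec{y}-\vec{t}} = \length{\vec{z}}$, and the shortest vector in the coset has length exactly $\sqrt{d^2 + h^2}$ where $d := \dist(\vec{t},\lat)$.

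Because $d$ is not known in advance, I would first invoke Theorem~\ref{thm:BKZ} to obtain coarse bounds on $d$ and then iterate over a geometric sequence of $O(n^2)$ guesses $(\tilde{d}_i, \tilde{s}_i)$. For each guess, set $h := \alpha \tilde{d}_i$ and call $\DGS{\frac{1}{2}}{}{2^{n/2}}$ on $\overline{\lat}$ at parameter $\tilde{s}_i$ to obtain $2^{n/2}$ samples, discard those whose $(n+1)$-th coordinate is not equal to $h$, translate the surviving ones back to $\lat$ by setting $\vec{y} := \vec{t} + \vec{z}$, and retain the candidate $\vec{y}$ that minimizes $\length{\vec{y}-\vec{t}}$. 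The final output is the best candidate across all guesses.

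The correctness analysis requires two ingredients for the right guess $\tilde{d}_i \in [d, (1+1/n)d]$. First, by Lemma~\ref{lem:banaszczyktail} applied to $D_{\overline{\lat}, \tilde{s}_i}$, every sample has norm at most $\tilde{s}_i \sqrt{(n+1)/(2\pi)} \cdot (1+o(1))$ except with exponentially small probability; thus any sample in the target coset satisfies
\[
\length{\vec{y}-\vec{t}}^2 \;\le\; \frac{\tilde{s}_i^2 (n+1)}{2\pi}(1+o(1)) \;-\; h^2,
\]
and we want the right-hand side to be at most $\gamma^2 d^2$. Second, the probability that a sample lands in the target coset is at least $\rho_{\tilde{s}_i}(\bar{\vec{t}} + \lat)/\rho_{\tilde{s}_i}(\overline{\lat}) \ge e^{-\pi(d^2+h^2)/\tilde{s}_i^2}/\rho_{\tilde{s}_i}(\overline{\lat})$, and by the Kabatjanski-Leven{\v{s}}te{\u\i}n-based bound of Lemma~\ref{lem:Gaussian-sum-bound} the denominator is controlled in terms of $\lambda_1(\overline{\lat})$ and $\beta = 2^{0.401}$. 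Combining these gives a target-coset probability of at least $2^{-n/2 - o(n)}$, so among the $2^{n/2}$ samples one survives the filter with constant probability, and a Chernoff bound amplifies this across the (constant number of) repetitions needed to handle the oracle's statistical error of $\tfrac{1}{2}$.

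The main obstacle is the joint optimization of $\alpha = h/d$ and the ratio $\tilde{s}_i / d$: we must simultaneously keep the concentration radius small enough to guarantee $\length{\vec{y}-\vec{t}} \le \gamma d$ and keep $\pi(d^2+h^2)/\tilde{s}_i^2$ small enough (relative to the $n \log \beta$-type loss incurred by Lemma~\ref{lem:Gaussian-sum-bound}) to guarantee enough samples in the target coset. The stated approximation factor $\gamma = 1.97$ should come out of this two-parameter optimization with the explicit KL constant $\beta = 2^{0.401}$. A secondary issue is that when $\lambda_1(\overline{\lat})$ is much smaller than $\sqrt{d^2+h^2}$ (e.g.\ if $\lat$ itself has a very short vector), the KL upper bound on $\rho_{\tilde{s}_i}(\overline{\lat})$ degrades; this is handled by the outer grid over $\tilde{s}_i$, which lets us tune to any needed ratio, and by noting that short vectors in $\lat$ show up in the zero coset rather than the target coset and so do not obstruct detection of a close vector.
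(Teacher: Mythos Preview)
Your overall reduction via Kannan's embedding, the geometric grid of guesses, filtering on the last coordinate, and the use of the Banaszczyk tail (Lemma~\ref{lem:banaszczyktail}) are all the same as in the paper. The gap is in how you lower bound the probability of hitting the target coset.

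You write $\Pr[\vec{X} \in \bar{\vec{t}}+\lat] \ge e^{-\pi(d^2+h^2)/\tilde{s}^2}/\rho_{\tilde{s}}(\overline{\lat})$ (a single-term numerator) and propose to bound the denominator with the KL-based Lemma~\ref{lem:Gaussian-sum-bound}. That lemma controls $\rho_{\tilde{s}}(\overline{\lat})$ only in terms of $\tilde{s}/\lambda_1(\overline{\lat})$, and $\lambda_1(\overline{\lat}) \le \lambda_1(\lat)$ can be arbitrarily small compared to $d$. In that regime $\rho_{\tilde{s}}(\overline{\lat})$ blows up and your bound gives nothing; neither ``tuning $\tilde{s}_i$ by the outer grid'' nor ``short vectors live in the zero coset'' fixes this, because the blow-up is in the \emph{denominator} of the target-coset probability, not in the target coset itself. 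This is exactly the case the KL bound cannot handle, and it is the reason the paper's proof does not use Lemma~\ref{lem:Gaussian-sum-bound} (or the constant $\beta$) at all.

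What the paper does instead is make the $\rho_{\tilde{s}}(\lat)$ factor \emph{cancel}. On the one hand, decomposing $\overline{\lat}$ into horizontal layers and using $\rho_s(\lat + \vec{w}) \le \rho_s(\lat)$ gives $\rho_s(\overline{\lat}) \le \rho_s(\lat)\sum_k e^{-\pi k^2 h^2/s^2}$. On the other hand, pairing $\vec{x}$ with $-\vec{x}$ and using $\cosh \ge 1$ yields the shift lower bound $\rho_s(\lat - \vec{t}) \ge e^{-\pi d^2/s^2}\rho_s(\lat)$, so that $\rho_s(\bar{\vec{t}}+\lat) \ge e^{-\pi(d^2+h^2)/s^2}\rho_s(\lat)$. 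Dividing, $\rho_s(\lat)$ drops out and the target-coset probability is at least a constant times $e^{-\pi d^2/s^2}$, independent of $\lambda_1(\lat)$. With the paper's choice $h=s$ and $s \approx \sqrt{2\pi/\log 2}\cdot d/\sqrt{n}$, this is $2^{-n/2-O(1)}$; balancing this against the tail bound (with $t \approx 0.654$) is what produces $\gamma = 1.97$. Your two-parameter optimization with the KL constant is not how the $1.97$ arises.
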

\begin{proof}
On input $\lat = \lat(\vec{b}_1, \ldots, \vec{b}_n) \subset \R^n$ and $\vec{t} \in \R^n$, the reduction behaves as follows. It first uses Babai's nearest plane algorithm~\cite{Bab86} to approximate the distance to the lattice $\dist(\vec{t}, \lat)$, receiving as output $\approxd$. Fix $\delta := 1/n$. Then for $j = 1, \ldots, 10n^2$, let $s_j =  \approxd/(1+\delta)^j$, and let $\lat_j$ be the $(n+1)$-dimensional lattice generated by $(\vec{b}_i, 0)$ for $i \in [n]$, and the additional basis vector $(-\vec{t}, s_j)$. The reduction calls the ${\mbox{DGS}\xspace}$ oracle on $\lat_j$ with parameter $s_j$, and let $\vec{x}_j$ be the shortest vector among the returned vectors whose last coordinate is $s_j$. Finally, the reduction outputs the first $n$ coordinates of  $\vec{x}_j - (-\vec{t}, s_j)$ where $j$ is such that $\vec{x}_j$ is shortest.

The running time of the algorithm is clear. As was shown in~\cite{Bab86}, we have $d \leq \approxd \leq 2^{n/2} d$. Thus, there exists a $j$ such that 
$ \alpha d/\sqrt{n}  \leq s_j \leq (1+\delta) \alpha d/\sqrt{n}
$,
where $\alpha := \sqrt{2\pi/\log 2}$. 
Let $A = \{ \vec{y} \in \lat_j : \inner{\vec{y}, \vec{e}_{n+1}} = s_j\}$ be the set of vectors from which we choose $\vec{x}_j$. We note that it suffices to show that a sample from $D_{\lat_j, s_j}$ will land in $A$ and have length at most $\gamma d$ with probability at least $2^{-n/2 - O(1)}$. Indeed, if this is the case, then the algorithm will find a vector in $A$ of length at most $\gamma d$ with constant probability, and its output will be a $\gamma$-approximate closest vector.

We first consider the probability that a vector lands in $A$, $\rho_{s_j}(A)/\rho_{s_j}(\lat_j)$. For the denominator, using the fact that $\rho_s(\lat) \geq \rho_s(\lat + \vec{w})$ for any $\vec{w}$, we have
\begin{align*}
\rho_{s_j}(\lat_j)
&= \sum_{k = -\infty}^{\infty} \rho_{s_j}(k s_j) \rho_{s_j}(\lat + k \vec{t}) \\
&\le \rho_{s_j}(\lat) \sum_{k=-\infty}^{\infty} e^{-\pi k^2 }\\
&\leq 2 \rho_{s_j}(\lat)
\; .
\end{align*}
Turning to the numerator,
\begin{align*}
\rho_{s_j}( A ) &\ge   e^{-\pi(d^2 + s_j^2)/s_j^2} \cdot \rho_{s_j}(\lat) 
\ge   e^{-\pi(n/\alpha^2 + 1)} \cdot \rho_{s_j}(\lat)
\; .
\end{align*}
Thus, we have
\begin{equation}
\label{eq:inA}
\Pr_{\vec{X} \sim D_{\lat_j, s_j}}[\vec{X} \in A] 
\geq e^{-\pi n/\alpha^2}/100
\; .
\end{equation}

Set $t = \gamma/((1+\delta)\alpha)$. Recall from Lemma~\ref{lem:banaszczyktail} that
\begin{equation}
\label{eq:short}
\Pr_{\vec{X} \sim D_{\lat_j, s_j}}\big[\length{\vec{X}} > s_j t \sqrt{  n} \big] \leq \big( \sqrt{2 \pi e t^2} \exp(-\pi t^2) \big)^n
\; .
\end{equation}
Then, combining \eqref{eq:inA} and \eqref{eq:short}, plugging in the values for $\alpha$, $t$, and $\gamma$, and assuming $n$ is sufficiently large, gives
\begin{align*}
\Pr_{\vec{X} \sim D_{\lat_j, s_j}}\big[\vec{X} \in A,\ \length{\vec{X}} \leq s_j t \sqrt{n} \big] &\geq e^{-\pi n/\alpha^2}/100 - (2\pi e t^2)^{n/2} \cdot e^{-\pi t^2 n} \\
&\geq 2^{-n/2 - O(1)}
\; ,
\end{align*}
where we have used the fact that $e^{-\pi /\alpha^2} = 1/\sqrt{2}$ and $\sqrt{2\pi e t^2} \cdot e^{-\pi t^2} < 1/\sqrt{2}$.
The result follows from the fact that $s_j t \sqrt{n} \leq  (1+\delta) \alpha t d = \gamma d$.
\end{proof}

We note that the above proof actually yields a more general statement. In particular, for any $t > 1/\sqrt{2\pi}$, there is a reduction from $\gamma\text{-}\problem{CVP}$ to $\DGS{\eps}{}{M}$ where
\[
\gamma = \sqrt{\frac{2\pi t^2}{2\pi t^2 - \log(2\pi t^2) - 1}}\; ,
\]
and 
\[
M \approx \exp(\pi t^2 n/\gamma^2) = \exp(\pi n t^2)/(2\pi e t^2)^{n/2}
\; .
\]
We recover Theorem~\ref{thm:cvptodgs} by setting $t \approx 0.654$.

\begin{corollary}
\label{cor:approx-CVP}
There is a randomized algorithm that solves $1.97\text{-}\problem{CVP}$ in time $2^{n+o(n)}$.
\end{corollary}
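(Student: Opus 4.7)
The plan is to combine the reduction of Theorem~\ref{thm:cvptodgs} with the general discrete Gaussian sampler of Theorem~\ref{thm:DGS}. More precisely, given an instance $(\lat, \vec{t})$ of $1.97$-$\problem{CVP}$ in dimension $n$, I would run the reduction from Theorem~\ref{thm:cvptodgs}, instantiating each of its $O(n^2)$ oracle calls with an invocation of the algorithm from Theorem~\ref{thm:DGS}. Since each oracle call is made on an $(n+1)$-dimensional lattice, the sampler runs in time $2^{(n+1)+\polylog(\kappa)+o(n)} = 2^{n+o(n)}$ per call (choosing, say, $\kappa = n$ so that $\polylog(\kappa) = o(n)$ and the error $\exp(-\Omega(\kappa))$ is negligible). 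Multiplying by the $O(n^2)$ calls and adding the $2^{n/2}\cdot \poly(n)$ overhead of the reduction still gives total running time $2^{n+o(n)}$.

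For correctness, I would observe that Theorem~\ref{thm:cvptodgs} tolerates statistical distance $\tfrac12$ in each DGS call, while Theorem~\ref{thm:DGS} produces samples within statistical distance $\exp(-\Omega(\kappa)) \ll \tfrac12$ of the true distribution $D_{\lat, s}^{2^{n/2}}$ for the chosen $\kappa$. Hence the composition satisfies the hypotheses of the reduction, and by a union bound over the $O(n^2)$ calls the overall error introduced by the sampler is still negligible. The success guarantee of Theorem~\ref{thm:cvptodgs} then yields a $1.97$-approximate closest vector with constant probability, which can be amplified by standard repetition while remaining within the $2^{n+o(n)}$ budget.

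There is essentially no obstacle here, since both of the ingredients have already been proved: the corollary is a direct composition. The only things to double-check are (i) that the dimension increase from $n$ to $n+1$ in the reduction is absorbed into the $o(n)$ term of the sampler's running time (it is, since $2^{(n+1)+o(n)} = 2^{n+o(n)}$), and (ii) that the statistical error parameter of the sampler can be tuned small enough to accommodate the $\tfrac12$-error requirement of the reduction together with the polynomial number of oracle invocations (which is handled by any $\kappa = \omega(\log n)$, e.g.\ $\kappa = n$).
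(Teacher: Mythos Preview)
Your proposal is correct and follows exactly the paper's approach: the paper's proof is the single line ``Combine the reduction from Theorem~\ref{thm:cvptodgs} with the algorithm from Theorem~\ref{thm:DGS},'' and you have simply spelled out the routine details of that composition. One tiny quibble: Theorem~\ref{thm:DGS} requires $\kappa \ge \Omega(n)$, so your parenthetical remark that ``any $\kappa = \omega(\log n)$'' suffices is not quite right, but your actual choice $\kappa = n$ is fine.
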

\begin{proof}
Combine the reduction from Theorem~\ref{thm:cvptodgs} with the algorithm from Theorem~\ref{thm:DGS}.
\end{proof}

\subsection{Solving \texorpdfstring{$O(1)$-\problem{BDD}}{O(1)-BDD} in \texorpdfstring{$2^{n/2 + o(n)}$}{2 (n/2+o(n))} time}

Lyubashevsky and Micciancio show a polynomial-time reduction from $\frac{1}{2\gamma}\text{-}\problem{BDD}$ to $\gamma\text{-}\problem{GapSVP}$~\cite{LM09}. 
By combining this with Theorem~\ref{thm:gapSVP}, we immediately get a solution to $\alpha\text{-}\problem{BDD}$ for $\alpha \approx 1/4$. But, we can improve this to $\alpha \approx .422$ by using the following (slightly modified) theorem from \cite{cvpp} that shows how to solve a variant of \problem{BDD} directly using discrete Gaussian samples.

\begin{theorem}[{\cite[Theorem 3.1]{cvpp}}]
\label{thm:cvpp}
For any $\eps \in (0,1/200)$, let 
\[
\phi(\lat) :=  \frac{\sqrt{\log(1/\eps)/\pi - o(1)}}{2 \eta_\eps(\lat^*) }
\; .
\]
Then, there exists a reduction from
$\problem{CVP}^\phi$ to $\DGS{\frac{1}{2}}{\eta_{\eps}}{m}$ where $m = O(n \log (1/\eps)/\sqrt{\eps})$ and $\problem{CVP}^\phi$ is the problem of solving $\problem{CVP}$ for target vectors that are guaranteed to be within a distance $\phi(\lat)$ of the lattice. The reduction preserves the dimension, makes a single call to the $\problem{DGS}$ oracle, and runs in time $m \cdot \poly(n)$. 
\end{theorem}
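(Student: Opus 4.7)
The plan is to turn the target $\vec{t}$ together with samples from the \emph{dual} discrete Gaussian into a Fourier-analytic estimator for the error vector. Write $\vec{t} = \vec{v}^* + \vec{e}$ with $\vec{v}^* \in \lat$ the closest lattice vector and $\|\vec{e}\| \le \phi(\lat)$; the point $\vec{v}^*$ is unique because the lower bound in Lemma~\ref{lem:etalambda1} together with the definition of $\phi$ gives $\phi(\lat) < \lambda_1(\lat)/2$. First I would invoke the $\DGS{1/2}{\eta_\eps}{m}$ oracle on the dual lattice $\lat^*$ with a parameter $s^*$ chosen just above $\eta_\eps(\lat^*)$, obtaining samples $\vec{X}_1^*,\ldots,\vec{X}_m^* \sim D_{\lat^*, s^*}$. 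This single call is permitted because the oracle is dimension-preserving.

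The key identity is $\langle \vec{X}_i^*, \vec{v}^*\rangle \in \Z$, which implies
\[
e^{2\pi i \langle \vec{X}_i^*,\vec{t}\rangle} \;=\; e^{2\pi i \langle \vec{X}_i^*,\vec{e}\rangle},
\]
so the observable Fourier phases depend only on $\vec{e}$. Applying Poisson summation to $\mathbb{E}[\cos(2\pi \langle \vec{X}^*, \vec{t}\rangle)]$ and $\mathbb{E}[\vec{X}^* \sin(2\pi \langle \vec{X}^*, \vec{t}\rangle)]$ (using $s^* > \eta_\eps(\lat^*)$ to absorb the tails into an additive $O(\eps)$ error) rewrites these as $\rho_{1/s^*}(\lat - \vec{e})/\rho_{1/s^*}(\lat)$ and, up to normalization, its gradient in $\vec{e}$. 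Since $\|\vec{e}\| \le \phi(\lat)$ is much smaller than the primal scale $1/s^* \approx \lambda_1(\lat)/\sqrt{\log(1/\eps)/\pi}$, only the $\vec{0}$-term survives, and to leading order these expectations equal $c(\vec{e}) \approx \rho_{1/s^*}(\vec{e}) = \exp(-\pi \|\vec{e}\|^2(s^*)^2)$ and $\vec{g}(\vec{e}) \approx 2\pi(s^*)^2 \, \vec{e} \cdot c(\vec{e})$. From the empirical estimates $\hat c$ and $\hat{\vec g}$ I would set $\vec{e}_{\mathrm{est}} := \hat{\vec g}/\bigl(2\pi(s^*)^2 \hat c\bigr)$ and output the unique lattice point nearest $\vec{t} - \vec{e}_{\mathrm{est}}$, recovered by a Babai-style rounding once $\vec{e}_{\mathrm{est}}$ is known to lie within a tiny ball around $\vec{v}^*$.

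The main obstacle will be the concentration analysis, which is delicate because the signal is exponentially small: one has $c(\vec{e}) \ge \exp(-\pi\phi(\lat)^2(s^*)^2) = \exp(-\log(1/\eps)/4) = \eps^{1/4}$. The cosine summands are bounded by $1$, so the Chernoff--Hoeffding bound (Lemma~\ref{lem:chernoff}) demands $m = \Omega(\eps^{-1/2})$ samples to pin down $c$ to constant multiplicative accuracy. The vector summands $\vec{X}^* \sin(\cdot)$ are unbounded, but the tail bound of Lemma~\ref{lem:banaszczyktail} lets us truncate to $\|\vec{X}^*\| \le O(s^*\sqrt{n})$ at negligible cost and Lemma~\ref{lem:subgaussianity} controls higher moments; applying Lemma~\ref{lem:chernoff} coordinatewise with a union bound over the $n$ coordinates of $\vec{g}$ then yields the advertised sample count $m = O(n \log(1/\eps)/\sqrt{\eps})$. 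The final consistency check is that the inaccuracy in $\vec{e}_{\mathrm{est}}$ stays strictly below the gap between $\vec{v}^*$ and the second-closest lattice point, so that the Babai rounding step succeeds unambiguously.
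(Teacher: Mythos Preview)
The paper does not actually prove this statement; it is quoted verbatim from~\cite{cvpp} (Theorem~3.1 there) and used as a black box. So there is no in-paper proof to compare against. That said, your Fourier-analytic strategy via dual discrete Gaussian samples is exactly the mechanism of~\cite{cvpp}: the identity $\langle \vec{X}_i^*,\vec{v}^*\rangle\in\Z$, the Poisson-summation interpretation of $\mathbb{E}[\cos(2\pi\langle\vec{X}^*,\vec{t}\rangle)]$ and $\mathbb{E}[\vec{X}^*\sin(2\pi\langle\vec{X}^*,\vec{t}\rangle)]$ as (essentially) $\rho_{1/s^*}(\vec{e})$ and its gradient, and the signal floor $c(\vec{e})\ge\eps^{1/4}$ driving the $m=O(n\log(1/\eps)/\sqrt{\eps})$ sample count are all on target.

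There is, however, a real gap in your recovery step. A single estimate $\vec{e}_{\mathrm{est}}$ obtained this way is only accurate to within a constant factor of $\phi(\lat)$ (roughly $\phi(\lat)/\log(1/\eps)$ after the concentration step), so $\vec{t}-\vec{e}_{\mathrm{est}}$ is still only a BDD instance at a slightly better radius. Babai's nearest-plane algorithm on an arbitrary (or even LLL-reduced) basis does \emph{not} return the exact closest vector at that distance; it is a $2^{O(n)}$-approximation. The argument in~\cite{cvpp} is instead \emph{iterative}: the same batch of dual samples is reused, and at each step the current residual $\vec{e}'$ is re-estimated. Because $c(\vec{e}')$ grows as $\|\vec{e}'\|$ shrinks, each iteration contracts the residual by a constant factor, and after $\poly(n)$ iterations the residual drops below $2^{-\poly(n)}\cdot\lambda_1(\lat)$, at which point rounding against an LLL-reduced basis does succeed. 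This iteration is why the theorem can promise a single oracle call yet still run in $m\cdot\poly(n)$ time. (Minor slip: you wrote that $\vec{e}_{\mathrm{est}}$ lies near $\vec{v}^*$; you mean $\vec{t}-\vec{e}_{\mathrm{est}}$ does, equivalently $\vec{e}_{\mathrm{est}}$ lies near $\vec{e}$.)
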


\begin{corollary}\label{cor:bdd}
There is a randomized algorithm that solves $\alpha\text{-}\problem{BDD}$ in time $2^{n/2+o(n)}$ for $\alpha := .422 - o(1)$.
\end{corollary}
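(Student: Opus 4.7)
The plan is to combine the reduction of Theorem~\ref{thm:cvpp} (which turns $\problem{CVP}^\phi$ into a $\problem{DGS}$ oracle call) with the sampler of Theorem~\ref{thm:smoothDGS}. An $\alpha\text{-}\problem{BDD}$ instance $(\lat, \vec{t})$ satisfies $\dist(\vec{t}, \lat) \leq \alpha \lambda_1(\lat)$ and so is automatically a $\problem{CVP}^\phi$ instance whenever $\alpha \lambda_1(\lat) \leq \phi(\lat)$. I therefore want to choose $\eps$ in Theorem~\ref{thm:cvpp} so that (i) $\phi(\lat) \geq \alpha \lambda_1(\lat)$ with $\alpha = 0.422 - o(1)$, and (ii) the required number of samples $m = O(n\log(1/\eps)/\sqrt{\eps})$ fits inside the $2^{n/2}$ samples that Theorem~\ref{thm:smoothDGS} produces.

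For the first condition I will upper-bound $\eta_\eps(\lat^*)$ in terms of $\lambda_1(\lat)$ using Eq.~(\ref{eq:etalambda1small}) of Lemma~\ref{lem:etalambda1}, which applies for exponentially small $\eps$. Plugging this into the definition of $\phi$ yields
\[
\phi(\lat) \;\geq\; \frac{\lambda_1(\lat)}{2}\sqrt{\frac{\log(1/\eps)}{\log(1/\eps) + n\log\beta + o(n)}}\,(1-o(1)) \;.
\]
Setting $\log(1/\eps) = n\log 2 - \omega(\log n)$ (i.e., $\eps = 2^{-n + o(n)}$) makes the right-hand side at least $\alpha \lambda_1(\lat)$ with $4\alpha^2 = \log 2/(\log 2 + \log \beta) = 1/1.401$, giving $\alpha = 0.4224\ldots - o(1)$ as required. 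The constraint on $m$ is also satisfied, since $m = n^2 \cdot 2^{(n - \omega(\log n))/2} \leq 2^{n/2}$ for this $\eps$.

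For the DGS oracle call itself, Theorem~\ref{thm:smoothDGS} produces its full output of $2^{n/2}$ samples at any parameter $s \geq \sqrt{2}\,\eta_{1/2}(\lat)$, while the reduction from Theorem~\ref{thm:cvpp} needs samples at some $s > \eta_\eps(\lat)$; I will therefore invoke the sampler at $s = \max\{\eta_\eps(\lat),\,\sqrt{2}\,\eta_{1/2}(\lat)\}$. Using $s$ larger than $\eta_\eps(\lat)$ only tightens the effective $\eps$ seen by Theorem~\ref{thm:cvpp}, which by the monotonicity of the bound above only improves $\phi$, so this substitution is harmless. The total running time is $2^{n/2 + o(n)}$, dominated by the sampler (taking $\kappa = \Theta(n)$ yields $\exp(-\Omega(n))$ statistical error, well inside the $1/2$ budget of the reduction).

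The main obstacle is just the parameter balancing above: the sample budget $2^{n/2}$ forces $\log(1/\eps) \leq n\log 2$, while the guarantee $\alpha = \tfrac{1}{2}\sqrt{\log(1/\eps)/(\log(1/\eps) + n\log\beta)}$ is increasing in $\log(1/\eps)$, and these two constraints meet exactly at $\alpha \to \tfrac{1}{2}\sqrt{1/(1+0.401)} \approx 0.4224$. A minor technical nuisance is that $\eta_\eps(\lat)$ and $\eta_{1/2}(\lat)$ are not computed exactly, but an LLL-based estimate of $\lambda_1(\lat)$ together with the relation $\eta_{0.99}(\lat) \geq C\lambda_n(\lat)/\sqrt{n}$ from Lemma~\ref{lem:etalambdansqrt} (and its dual counterpart upper-bounding $\eta_\eps$) gives a polynomial-size set of candidate values of $s$ to try, outputting the best candidate answer.
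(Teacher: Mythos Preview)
Your proposal is essentially correct and follows the same route as the paper: choose $\eps$ so that $\log(1/\eps) \approx n\log 2$, apply Eq.~\eqref{eq:etalambda1small} to lower-bound $\phi(\lat)/\lambda_1(\lat)$ by $\tfrac{1}{2}\sqrt{1/1.401} \approx 0.4224$, and plug Theorem~\ref{thm:smoothDGS} into Theorem~\ref{thm:cvpp}. The paper takes $\eps = 2^{-n}$ (so $m = 2^{n/2+o(n)}$, requiring $\poly(n)$ calls to the sampler), while you take $\eps$ a hair larger to fit $m \leq 2^{n/2}$ in one call; both are fine.

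Your final paragraph, however, introduces an unnecessary complication. You do not get to choose the parameter $s$ at which the DGS oracle is queried---the reduction of Theorem~\ref{thm:cvpp} chooses it. What you actually need is that the sampler of Theorem~\ref{thm:smoothDGS} \emph{is} a valid $\DGS{1/2}{\eta_\eps}{m}$ oracle, i.e., that it succeeds whenever $s > \eta_\eps(\lat)$. The paper handles this in one line: for $\eps = 2^{-n}$ and large $n$, Lemma~\ref{lem:etalambda1} gives $\sqrt{2}\,\eta_{1/2}(\lat) < \eta_\eps(\lat)$, so the sampler's guaranteed range $s > \sqrt{2}\,\eta_{1/2}(\lat)$ already covers $s > \eta_\eps(\lat)$. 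There is thus no need to estimate any smoothing parameters or enumerate candidate values of $s$; you can drop that discussion entirely.
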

\begin{proof}
Let $\eps :=  2^{-n}$, and let $\phi(\lat) := \sqrt{\log(1/\eps)/\pi - o(1)}/(2\eta_\eps(\lat^*))$ as above. By Eq.~\eqref{eq:etalambda1small} of Lemma~\ref{lem:etalambda1}, any algorithm that solves $\problem{CVP}^\phi$ is also a solution to $\alpha\text{-}\problem{BDD}$ with
\[
\alpha := \frac{1}{2} \cdot \sqrt{\frac{\log(1/\eps)}{\log(1/\eps) + n \log \beta + o(n)}} > .422 - o(1)
\; .
\]
Applying Theorem~\ref{thm:cvpp} gives a reduction from $\alpha\text{-}\problem{BDD}$ to $\DGS{\frac{1}{2}}{\eta_\eps}{m}$ with $m = O(n \log (1/\eps)/\sqrt{\eps}) = 2^{n/2 + o(n)}$ that runs in time $m \cdot \poly(n)$. Finally, we note that Lemma~\ref{lem:etalambda1} implies that, $\sqrt{2} \eta_{1/2}(\lat) < \eta_{\eps}(\lat)$ for sufficiently large $n$. Therefore, Theorem~\ref{thm:smoothDGS} gives a solution to $\DGS{\exp(-\Omega(n))}{\eps}{m}$ with the desired running time.
\end{proof}

\subsection{Approximating SIVP in \texorpdfstring{$2^{n/2 + o(n)}$}{2 (n/2+o(n))} time}

We use the following lemma, which is a slight variant of~\cite[Lemma 3.17]{oded05}
combined with Lemma 2.12 there.

\begin{lemma}
\label{lem:sivp}
There is a polynomial-time reduction from $\gamma\text{-}\problem{SIVP}$ to $\DGS{\frac{1}{2}}{2\eta_\eps}{}$ where $\gamma := O(\sqrt{n \log n})$ and $\eps := 1/10$.
\end{lemma}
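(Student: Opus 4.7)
The plan is to invoke the DGS oracle $N = \poly(n)$ times with parameter $s$ slightly exceeding $2\eta_\eps(\lat)$ (for $\eps = 1/10$) and extract $n$ linearly independent vectors from among the returned samples. To pick $s$, I would first compute an LLL-reduced basis of $\lat$, yielding a rough estimate $\hat{\lambda}$ satisfying $\lambda_n(\lat) \le \hat{\lambda} \le 2^{O(n)} \lambda_n(\lat)$, and then try an $O(n)$-long geometric sequence of candidate $s$-values. By the standard Micciancio--Regev bound $\eta_\eps(\lat) \le \sqrt{\ln(2n(1 + 1/\eps))/\pi}\cdot \lambda_n(\lat)$, one such value satisfies $2\eta_\eps(\lat) < s \le O(\eta_\eps(\lat)) = O(\sqrt{\log n})\cdot \lambda_n(\lat)$; among the outputs across all tries, the reduction keeps the shortest $n$ linearly independent ones.

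For the length guarantee, Banaszczyk's tail bound (Lemma~\ref{lem:banaszczyktail}) applied with $t=1$ shows that a sample from $D_{\lat,s}$ has norm at most $s\sqrt{n}$ except with probability $(\sqrt{2\pi e}\cdot e^{-\pi})^n = 2^{-\Omega(n)}$. A union bound over all $\poly(n)$ samples ensures every returned vector has norm at most $s\sqrt{n} = O(\sqrt{n \log n})\cdot\lambda_n(\lat) = \gamma\cdot\lambda_n(\lat)$ with overwhelming probability. It therefore remains to show that among $\poly(n)$ samples from $D_{\lat,s}$ we can find $n$ linearly independent ones with high probability.

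For linear independence, I would proceed iteratively. Let $V_i$ denote the span of the first $i$ independent vectors already found; I would show that a fresh sample from $D_{\lat,s}$ lies outside $V_i$ with probability $\Omega(1/n)$. Fix any unit vector $\vec{v} \in V_i^\perp$. Since $s > 2\eta_{1/10}(\lat) > \eta_{10^{-4}}(\lat)$ by Lemma~\ref{lem:doublesmooth}, Lemma~\ref{lem:smoothcovariance} (invoked with parameter $10^{-4}$) yields
\[
\Big\| \tfrac{1}{2\pi}\,I_n - \tfrac{1}{s^2}\,\expect_{\vec{X}\sim D_{\lat,s}}[\vec{X}\vec{X}^T] \Big\| \;\le\; \tfrac{10^{-4}}{\pi}\bigl(\log(2\cdot 10^4)+1\bigr) \;\le\; 10^{-3}\; ,
\]
so $\expect[\inner{\vec{X},\vec{v}}^2] \ge s^2/7$. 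Combined with the Banaszczyk bound $\inner{\vec{X},\vec{v}}^2 \le \|\vec{X}\|^2 \le s^2 n$ (on the high-probability event), this gives $\Pr[\vec{X} \notin V_i] \ge \Pr[\inner{\vec{X},\vec{v}} \ne 0] \ge \expect[\inner{\vec{X},\vec{v}}^2]/(s^2 n) \ge 1/(7n)$. With $N = \poly(n)$ samples, all $n$ successive independence events hold with overwhelming probability by a union bound.

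The main obstacle is the constant statistical distance of $1/2$ allowed by the DGS oracle per call: naively compounding the error over $N$ independent calls gives a vacuous bound. I would handle this via the coupling characterization of statistical distance, under which each call's sample can be coupled to a true $D_{\lat,s}$-sample so that the two agree with probability $\ge 1/2$. Since independent oracle calls admit independent couplings, a Chernoff bound (Lemma~\ref{lem:chernoff}) ensures that at least $N/3$ of the returned samples are genuine $D_{\lat,s}$-samples with high probability, and the preceding iterative argument applies to that sub-collection. The reduction then outputs $n$ linearly independent vectors from among the samples, whose lengths are at most $\gamma \cdot \lambda_n(\lat)$ as required.
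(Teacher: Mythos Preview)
Your overall plan—repeated oracle calls at some $s$ just above $2\eta_{1/10}(\lat)$, the Banaszczyk tail bound for length, and an iterative escape-from-subspace argument for independence—is exactly the content of the reduction in~\cite{oded05} that the paper cites in lieu of a proof. The gap is in how you absorb the per-call statistical error of~$1/2$.

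The coupling step does not do what you claim. In the optimal coupling $(X_i,Y_i)$, the conditional law of $X_i$ given $X_i=Y_i$ is not $D_{\lat,s}$ but the normalized pointwise minimum $\min(P,D_{\lat,s})$, where $P$ is the oracle's actual output law; this sub-distribution is adversarially controlled. Concretely, take $\lat=\Z\times M\Z$ (so $\eta_{1/10}(\lat)\approx M\sqrt{\ln 20/\pi}$) and $V=\{x_2=0\}$: at $s=2\eta_{1/10}(\lat)$ one computes $\Pr_{D_{\lat,s}}[V]=1/\rho_{s/M}(\Z)>1/2$, so the oracle may legally return $P=D_{\lat,s}\!\mid\! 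V$ (total variation $<1/2$), and then \emph{every} output—hence every ``genuine'' sample—lies in $V$. Your $1/(7n)$ escape bound therefore cannot be transferred to the oracle outputs via coupling plus Chernoff, and the iterative independence argument breaks.

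The remedy used in~\cite{oded05} is to secure a \emph{per-call} escape guarantee that survives the $1/2$ loss directly: one shows that for $s$ a suitable absolute-constant multiple of $\eta_{\eps}(\lat)$ (still $O(\sqrt{\log n})\,\lambda_n(\lat)$, hence still within your~$\gamma$), a genuine sample avoids any fixed hyperplane with probability strictly greater than $1/2+\Omega(1)$. Then a single oracle call already escapes with probability $\Omega(1)$ after subtracting the $1/2$ error, and an ordinary Chernoff bound over $\poly(n)$ calls (no coupling) finishes the independence argument. Your second-moment bound $\Pr[X\notin V]\ge \expect[\langle X,\vec v\rangle^2]/(s^2 n)$ is too weak for this by a factor of~$n$; you need the sharper hyperplane-mass estimate (this is where Lemma~2.12 of~\cite{oded05}, which the paper explicitly invokes alongside Lemma~3.17, enters). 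Since your geometric sweep over $s$ already includes such constant-factor-larger parameters, only the analysis needs to change, not the algorithm.
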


\begin{corollary}
\label{cor:sivp}
There is a randomized algorithm that solves $\gamma\text{-}\problem{SIVP}$ in time $2^{n/2+o(n)}$ where $\gamma := O(\sqrt{n \log n})$.
\end{corollary}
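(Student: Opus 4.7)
The plan is simply to compose the reduction of Lemma~\ref{lem:sivp} with the sampler of Theorem~\ref{thm:smoothDGS}; the only thing that needs checking is that the parameter regime required by the reduction lies safely inside the regime in which the $\hDGS$ sampler is guaranteed to produce honest samples, and moreover to produce enough of them.

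First, I would invoke Lemma~\ref{lem:sivp} to reduce $\gamma\text{-}\problem{SIVP}$ (with $\gamma = O(\sqrt{n \log n})$) in polynomial time to polynomially many calls to a $\DGS{\frac{1}{2}}{2\eta_{1/10}}{}$ oracle. Each such call is made with an input parameter $s > 2\eta_{1/10}(\lat)$. Since the smoothing parameter is monotonically non-increasing in $\eps$, we have $\eta_{1/10}(\lat) \geq \eta_{1/2}(\lat)$, so
\[
s \;>\; 2\eta_{1/10}(\lat) \;\geq\; 2\eta_{1/2}(\lat) \;>\; \sqrt{2}\,\eta_{1/2}(\lat) \;=\; \sigma(\lat),
\]
where $\sigma$ is the function appearing in Theorem~\ref{thm:smoothDGS}.

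Next, I would implement each oracle call using the $\hDGS{\exp(-\Omega(\kappa))}{\sigma}{2^{n/2}}$ algorithm of Theorem~\ref{thm:smoothDGS} with $\kappa = \Theta(n)$ (chosen so that the statistical error of a single call is at most $1/(2 \poly(n))$, smaller than the tolerance $1/2$ allowed by Lemma~\ref{lem:sivp} after a union bound over the polynomially many calls). Since $s > \sigma(\lat)$, the $\hDGS$ guarantee from Definition~\ref{def:hDGS} forces the output to consist of the full $2^{n/2}$ samples (not zero), whose joint distribution is within statistical distance $\exp(-\Omega(n))$ of $D_{\lat,s}^{2^{n/2}}$; the reduction only needs one sample, so any one of them suffices. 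The running time of each such call is $2^{n/2 + o(n)}$.

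Finally, I would conclude by putting the pieces together: the reduction makes $\poly(n)$ calls, each taking $2^{n/2+o(n)}$ time and incurring statistical error at most $\exp(-\Omega(n))$, so the total running time is $\poly(n) \cdot 2^{n/2+o(n)} = 2^{n/2+o(n)}$ and the total statistical error is $\exp(-\Omega(n)) \ll 1/2$, as required. There is no real obstacle here; the only thing one has to notice is the inequality $2\eta_{1/10} \geq \sqrt{2}\,\eta_{1/2}$, which guarantees we are always above the regime where Theorem~\ref{thm:smoothDGS} is honest \emph{and} outputs the full quota of samples.
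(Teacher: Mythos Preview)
Your proposal is correct and follows exactly the approach in the paper, which simply says to combine Lemma~\ref{lem:sivp} with Theorem~\ref{thm:smoothDGS}. You have merely spelled out the routine verification that $2\eta_{1/10}(\lat) > \sqrt{2}\,\eta_{1/2}(\lat)$ so that the sampler's guarantee applies, together with the bookkeeping on error and running time.
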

\begin{proof}
Combine the reduction from Lemma~\ref{lem:sivp} with the algorithm from Theorem~\ref{thm:smoothDGS}.
\end{proof}

\newcommand{\etalchar}[1]{$^{#1}$}

\end{document}